\newcommand{\eps}{\varepsilon}
\renewcommand{\leq}{\leqslant}
\renewcommand{\geq}{\geqslant}
\newtheorem{fact}[theorem]{\bfseries{Fact}}
\title{Weakly Approximating Knapsack in Subquadratic Time} 
\author{Lin Chen}{Zhejiang University, Hangzhou, China
\and \url{https://person.zju.edu.cn/en/linchen}}{chenlin198662@zju.edu.cn}{https://orcid.org/0000-0003-3909-4916}{}
\author{Jiayi Lian}{Zhejiang University, Hangzhou, China
\and \url{https://joeylian.github.io/}}{jiayilian@zju.edu.cn}{https://orcid.org/0009-0006-4408-9627}{}
\author{Yuchen Mao}{Zhejiang University, Hangzhou, China
\and \url{https://person.zju.edu.cn/en/maoyc}}{maoyc@zju.edu.cn}{https://orcid.org/0000-0002-1075-344X}{Supported by National Natural Science Foundation of China [Project No. 62402436].}
\author{Guochuan Zhang}{Zhejiang University, Hangzhou, China
\and \url{https://person.zju.edu.cn/en/0096209}}{zgc@zju.edu.cn}{https://orcid.org/0000-0003-1947-7872}{Supported by National Natural Science Foundation of China [Project No. 12271477].}
\authorrunning{L. Chen, J. Lian, Y. Mao and G. Zhang} 
\keywords{Knapsack, FPTAS} 
\begin{document}

\maketitle

\begin{abstract}
We consider the classic Knapsack problem.  Let $t$ and $\mathrm{OPT}$ be the capacity and the optimal value, respectively. If one seeks a solution with total profit at least $\mathrm{OPT}/(1 + \eps)$ and total weight at most $t$, then Knapsack can be solved in $\tilde{O}(n + (\frac{1}{\eps})^2)$ time [Chen, Lian, Mao, and Zhang '24][Mao '24]. This running time is the best possible (up to a logarithmic factor), assuming that $(\min,+)$-convolution cannot be solved in truly subquadratic time [Künnemann, Paturi, and Schneider '17][Cygan, Mucha, Węgrzycki, and Włodarczyk '19]. The same upper and lower bounds hold if one seeks a solution with total profit at least $\mathrm{OPT}$ and total weight at most $(1 + \eps)t$. Therefore, it is natural to ask the following question. 
    \begin{quote}
        If one seeks a solution with total profit at least $\mathrm{OPT}/(1+\eps)$ and total weight at most $(1 + \eps)t$, can Knsapck be solved in $\tilde{O}(n + (\frac{1}{\eps})^{2-\delta})$ time for some constant $\delta > 0$?
    \end{quote}
    We answer this open question affirmatively by proposing an $\tilde{O}(n + (\frac{1}{\eps})^{7/4})$-time algorithm.
\end{abstract}

\section{Introduction}
In the Knapsack problem, one is given a knapsack of capacity $t$ and a set of $n$ items. Each item $i$ has a weight $w_i$ and a profit $p_i$. The goal is to select a subset of items such that their total weight does not exceed $t$, and their total profit is maximized. 

Knapsack is one of Karp's 21 NP-complete problems~\cite{Kar72}, which motivates the study of approximation algorithms. In particular, it is well known that Knapsack admits fully polynomial-time approximation schemes (FPTASes), which can return a feasible solution with a total profit of at least $\mathrm{OPT}/(1 + \eps)$ in $\mathrm{poly}(n, \frac{1}{\eps})$ time. (We use $\mathrm{OPT}$ to denote the maximum total profit one can achieve using capacity $t$.) There has been a long line of research on designing faster FPTASes for Knapsack~\cite{IK75,Law79,KP04,Rhe15,Chan18,Jin19,DJM23,CLMZ24bSTOCKnapsack,
Mao24}. Very recently, Chen, Lian, Mao and Zhang~\cite{CLMZ24bSTOCKnapsack} and Mao~\cite{Mao24} indenpendently obtained $\tilde{O}(n + (\frac{1}{\eps})^2)$-time FPTASes\footnote{Thourghout, we use an $\tilde{O}(\cdot)$ notation to hide polylogarithmic factors in $n$ and $\frac{1}{\eps}$.}.
On the negative side, there is no $O((n + \frac{1}{\eps})^{2-\delta})$-time FPTAS for any constant $\delta > 0$ assuming $(\min, +)$-convolution cannot be solved in subquadratic time~\cite{CMWW19, KPS17}. Hence, further improvement on FPTASs for Knapsack seems hopeless unless the $(\min, +)$-convolution conjecture is overturned. Due to the symmetry of weight and profit, the same upper and lower bounds also hold if we seek a solution of total profit at least $\mathrm{OPT}$, but relax the capacity to $(1 + \eps)t$.
In this paper, we consider {\it weak approximation schemes}, which relax both the total profit and the capacity. More precisely, a weak approximation scheme seeks a subset of items with total weight at most $(1+\eps)t$ and total profit at least $\mathrm{OPT}/(1+ \eps)$.

Weak approximation algorithms are commonly studied in the context of resource augmentation (see e.g.,~\cite{FGJS05, IZ10}). One often resorts to resource augmentation when there is a barrier to further improvement on standard (i.e., non-weak) approximation algorithms. The two most relevant examples are Subset Sum and Unbounded Knapsack. Both problems admit an $\tilde{O}(n + (\frac{1}{\eps})^2)$-time FPTAS~\cite{KMPS03,BN21b,JK18}, while an $O((n+\frac{1}{\eps})^{2-\delta})$-time FPTAS has been ruled out assuming the $(\min,+)$-convolution conjecture~\cite{KPS17,CMWW19,MWW19,BN21b}. Their weak approximation, however, can be done in truly subquadratic time. For Subset Sum, Mucha, W{\k{e}}grzycki and W{\l}odarczyk~\cite{MWW19} showed a $O(n+(\frac{1}{\eps})^{5/3})$-time weak approximation scheme, and very recently, Chen et al.~\cite{CLMZ24cSTOCPartition} further improved the running time to $\tilde{O}(n+\frac{1}{\eps})$-time. 
For Unbounded Knapsack, Bringmann and Cassis~\cite{BC22} established an $\tilde{O}(n+(\frac{1}{\eps})^{3/2})$-time weak approximation scheme. Given that the standard approximation of Knapsack has been in a similar situation, it is natural to ask whether the weak approximation of Knapsack can be done in truly subquadratic time.

It is worth mentioning that weak approximation schemes for Subset Sum and Unbounded Knapsack are not straightforward extensions of the existing standard (non-weak) approximation schemes. The first truly subquadratic-time weak approximation scheme for Subset Sum~\cite{MWW19} as well as its recent improvement~\cite{CLMZ24cSTOCPartition} both rely on a novel application of additive combinatorics, and the first truly subquadratic-time weak approximation scheme for Unbounded Knapsack~\cite{BC22} relies on a breakthrough on bounded monotone $(\min, +)$-convolution~\cite{CL15,CDXZ22}. Unfortunately, techniques developed in these prior papers are not sufficient to obtain a truly subquadratic-time weak approximation scheme for Knapsack. 

\paragraph*{Our contribution.}
We propose the first truly subquadratic-time weak approximation scheme for Knapsack. 

\begin{theorem}
	Knapsack admits an $\tilde{O}({n+(\frac{1}{\eps})^{7/4}})$-time weak approximation scheme, which is randomized and succeeds with high probability.\footnote{Throughout, ``with high probability'' stands for ``with probability at least $1 - (n + \frac{1}{\eps})^{-\Omega(1)}$''.}
\end{theorem}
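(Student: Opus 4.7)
The plan is to extend the Bringmann-Cassis weak-approximation framework for Unbounded Knapsack, which gives $\tilde{O}(n + (1/\eps)^{3/2})$ via a subquadratic bounded monotone $(\min,+)$-convolution subroutine, to the bounded (multiplicity-constrained) setting. First, I would apply the standard FPTAS preprocessing -- guessing $\mathrm{OPT}$ up to a factor of $2$, discarding items of profit below $\Theta(\eps^2 \mathrm{OPT})$, and rounding weights and profits -- to reduce the input in $\tilde{O}(n + 1/\eps)$ time to a canonical instance with $N = O(1/\eps)$ items whose integer weights and profits lie in $[1, O(1/\eps)]$ and whose capacity $t'$ is $\Theta(1/\eps)$. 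Grouping the items by integer weight into at most $K = O(1/\eps)$ classes and, within each class $k$ of weight $w_k$, sorting its $n_k$ items by profit to form the concave ``profit profile'' $P_k(j) = \text{max profit attainable from $j$ items of class $k$}$, turns the task into choosing integers $j_k \in \{0, \ldots, n_k\}$ that maximise $\sum_k P_k(j_k)$ subject to $\sum_k w_k j_k \leq (1 + \eps) t'$.

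The core idea is to separate the classes by a weight threshold $L$ (to be tuned) and solve the two halves with complementary techniques before combining them. The $O(1/(\eps L))$ classes with $w_k > L$ are \emph{heavy}: for them I would aggregate the concave profit profiles along the weight axis using a dynamic program powered by SMAWK/Monge-style $(\max,+)$-convolutions, so that each class contributes only $\tilde{O}(t')$ work and the heavy half is solved in $\tilde{O}((1/\eps)^2 / L)$. The classes with $w_k \leq L$ are \emph{light}: for them I would dyadically split each $P_k$ into $O(\log(1/\eps))$ near-linear pieces, represent each piece by a family of identical unbounded items of an appropriate profit density, and enforce the multiplicity caps through the standard bounded-to-unbounded doubling trick. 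The resulting (approximate) unbounded-knapsack subinstance, grouped by the $O(L)$ distinct light weight values, is then solved by invoking the Bringmann-Cassis subroutine once per weight group for a total of $\tilde{O}(L \cdot (1/\eps)^{3/2})$. A final concave $(\max,+)$-convolution merges the heavy and light profiles in $\tilde{O}(1/\eps)$ time, and setting $L = (1/\eps)^{1/4}$ balances the two regimes at $\tilde{O}((1/\eps)^{7/4})$, proving the theorem.

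The principal obstacle I expect is making the dyadic reduction from each \emph{bounded} concave profile $P_k$ to a small family of \emph{unbounded} subinstances both correct and efficient: the multiplicity cap $n_k$ must be enforced without inflating the Bringmann-Cassis input by more than a polylogarithmic factor, and the weight/profit errors introduced by replacing the exact $P_k$ by near-linear pieces must simultaneously fit into the $\eps t'$ and $\eps \mathrm{OPT}$ slack budgets. Neither the additive-combinatorics toolkit of \cite{MWW19, CLMZ24cSTOCPartition} nor the unbounded-knapsack approach of \cite{BC22} suffices on its own; the bridge -- heavy-side Monge/SMAWK convolutions on the weight axis, combined with light-side dyadic item replication feeding into Bringmann-Cassis -- is where the novel technical work has to lie, and is also what prevents the weight-only or profit-only relaxation lower bounds from applying to the doubly-relaxed setting.
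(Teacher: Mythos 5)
Your approach is genuinely different from the paper's, and it has at least two gaps that I don't see how to close.

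\textbf{Different decomposition.} The paper groups items by \emph{efficiency} (profit-to-weight ratio), not by weight. The engine of the paper is an efficiency-based proximity bound (its Lemma~4.3): the optimal solution takes essentially all high-efficiency items and very few low-efficiency ones, and the number of ``exchanged'' items is inversely proportional to the efficiency gap to the breaking item. This lets the paper approximate most groups coarsely (a large accuracy parameter $1/(\alpha|I_j|)$) while keeping the total additive error bounded, because groups far in efficiency from the breaking item contribute almost nothing to the optimum. Each group is solved via a weak Subset-Sum approximation plus 2D-FFT (Corollary~4.5/4.6), and the groups are merged with bounded monotone $(\min,+)$-convolution. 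Your weight-threshold split discards this proximity structure entirely, so it is not a variant of the paper's proof.

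\textbf{Gap 1: the heavy-side running time.} You claim the $O(1/(\eps L))$ heavy classes cost $\tilde{O}((1/\eps)^2/L)$. But heavy classes are classes with integer weight in $(L, O(1/\eps)]$, so there can be $\Theta(1/\eps)$ of them, not $O(1/(\eps L))$. Moreover, SMAWK gives a linear-time $(\max,+)$-convolution of two \emph{concave} sequences, but the aggregated heavy profile lives over the full weight range $[0,t'] = [0, O(1/\eps)]$ and generically has $\Theta(1/\eps)$ breakpoints. A divide-and-conquer merge over $\Theta(1/\eps)$ heavy classes, each merge costing $\Theta(1/\eps)$, is $\Theta((1/\eps)^2)$ --- the quadratic barrier you set out to beat. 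The factor-$1/L$ saving is unexplained: the fact that any feasible solution uses at most $O(1/(\eps L))$ heavy items does not shrink the length of the profile you have to maintain in the DP.

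\textbf{Gap 2: bounded-to-unbounded on the light side.} You propose to represent each bounded concave profile $P_k$ by unbounded-knapsack items and enforce caps by ``the standard bounded-to-unbounded doubling trick.'' There is no such standard trick: the doubling trick turns a multiplicity-$u_i$ item into $O(\log u_i)$ copies of a \emph{zero-one} item, i.e.\ it reduces Bounded Knapsack to $0/1$-Knapsack, not to Unbounded Knapsack. Unbounded Knapsack allows unlimited copies, and Bringmann--Cassis exploits precisely that lack of a cap (the optimal value for capacity $w$ is a clean $(\min,+)$-convolution with the item set for all $w$ simultaneously), so their $\tilde{O}(n + (1/\eps)^{3/2})$ routine does not apply once caps are present. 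You correctly identify this as ``the principal obstacle,'' but the obstacle is where the whole argument would have to live, and you leave it unresolved. Without some replacement for this step, the light half is not known to be solvable in $\tilde{O}(L\,(1/\eps)^{3/2})$.

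Finally, even if both sub-claims were repaired in isolation, the proposal never explains why the weight-threshold split should avoid the $(\min,+)$-convolution lower bound at all --- the paper's way of escaping it is precisely that the doubly relaxed budget lets one \emph{round less} near the breaking item's efficiency and \emph{round more} far from it, which is invisible in a weight-only decomposition.
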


Our algorithm can be easily generalized to Bounded Knapsack, in which, every item $i$ can be selected up to $u_i$ times. When $u_i=O(1)$ for all $i$, Bounded Knapsack is equivalent to Knapsack (up to a constant factor), and therefore can be handled by our algorithm. For the general case, there is a standard approach of reducing Bounded Knapsack to instances where $u_i=O(1)$  (see e.g., ~\cite[Lemma 4.1]{MWW19}\cite[Lemma 2.2]{KX19}). Note that such a reduction will blow up item weights and profits by a factor $O(u_i)$, but this does not make much difference in (weak) approximation schemes due to scaling.

\subsection{Technical Overview}
We give a brief overview of the techniques that are used by our algorithm.

\paragraph*{Bounded Monotone (min,+)-Convolution} 
$(\min,+)$-convolution (more precisely, its counterpart, $(\max,+)$-convolution) has been used in almost all the recent progress on Knapsack~\cite{Chan18,Jin19,CLMZ24aSODA,Jin24,Bri24,CLMZ24bSTOCKnapsack,Mao24}. It allows one to cut the input instance into sub-instances, deal with each sub-instance independently, and then merge the results using $(\min,+)$-convolution. In general, it takes $O(n^2)$ time to compute the $(\min,+)$-convolution of two length-$n$ sequences\footnote{A slightly faster $O(n^2/2^{\Omega(\sqrt{\log n})})$-time algorithm is known by Bremner et.al.'s reduction to $(\min,+)$-matrix multiplication~\cite{BCD+14} and William's algorithm for the latter problem~\cite{Wil14} (which was derandomized by Chan and Williams~\cite{CW16}).}, and it is conjectured that there is no algorithm with time $O(n^{2 - \delta})$ for any constant $\delta > 0$. Nevertheless, some special cases of $(\min,+)$-convolution can be computed in subquadratic time (See~\cite{AKM+87,AT19,PRW21,CL15,CDXZ22} for example), and among them is bounded monotone $(\min,+)$-convolution where the two sequences are monotone and their entries are integers bounded by $O(n)$. In a celebrated work, Chan and Lewenstein~\cite{CL15} showed that bounded monotone $(\min,+)$-convolution can be computed in $O(n^{1.859})$ time. This was later improved to $\tilde{O}(n^{1.5})$ by Chi, Duan, Xie, and Zhang~\cite{CDXZ22}, and further generalized to rectangular monotone $(\min,+)$-convolution by Bringmann, D{\"{u}}rr and Polak~\cite{DBLP:conf/esa/BringmannD024}.

Bounded monotone $(\min,+)$-convolution is closely related to Unbounded Knapsack. Indeed, Bringmann and Cassis presented a $\tilde{O}(n+(\frac{1}{\eps})^{3/2})$-time weak approximation scheme~\cite{BC22} that builds upon bounded monotone $(\min,+)$-convolution, and they also showed an equivalence result between the two problems in the sense that a truly subquadratic-time weak approximation scheme for Unbounded Knapsack implies a truly subquadratic algorithm for bounded monotone $(\min,+)$-convolution, and vice versa.


For our problem, we follow the general framework mentioned above: we cut the input instance into sub-instances (called the reduced problem), deal with each sub-instance independently, and then merge the results using bounded monotone $(\min,+)$-convolution. Note that following this framework, for each sub-instance we need to compute the near-optimal objective value for all (approximate) knapsack capacities.

One might anticipate that a subquadratic algorithm for bounded monotone $(\min,+)$-convolution leads to a subquadratic weak approximation scheme for Knapsack straightforwardly, however, this is not the case. Bounded monotone $(\min,+)$-convolution is only used to merge results; the main challenge lies in dealing with each group, where we need the following two techniques.



\paragraph*{Knapsack with Items of Similar Efficiencies}
When all items have the same efficiency, Knapsack becomes Subset Sum, and it can be weakly approximated in $\tilde{O}(n + \frac{1}{\eps})$ time~\cite{CLMZ24cSTOCPartition}.  We extend the result to the case where all the items have their efficiencies (that is, profit-to-weight ratio) in the range $[\rho, \rho + \Delta]$ for constant $\rho$, and show that in such a case, Knapsack can be weakly approximated in $\tilde{O}(n + \frac{1}{\eps} + (\frac{1}{\eps})^2\Delta)$ time. This result may be of independent interest. The following is a rough idea. By scaling, we can assume that there are only $O(\frac{\Delta}{\eps})$ distinct item efficiencies. Then we divide the items into $O(\frac{\Delta}{\eps})$ groups so that the items in the same group have the same efficiency, and therefore, each group can be treated as a Subset Sum problem. For each group, we solve it by invoking Chen et al.'s $\tilde{O}(n + \frac{1}{\eps})$-time weak approximation scheme for Subset Sum~\cite{CLMZ24cSTOCPartition}. After solving all groups, we merge their results using 2D-FFT and divide-and-conquer, which takes $\tilde{O}((\frac{1}{\eps})^2 \Delta)$ time. 

We note that the general idea of using Subset Sum to solve Knapsack when items have similar efficiencies was used by Jin~\cite{Jin19}, and the idea of merging Subset Sum results using 2D-FFT and divide-and-conquer was used by Deng, Jin and Mao~\cite{DJM23}.

\paragraph*{Proximity Bounds}
Proximity is another tool that is crucial to the recent progress of Knapsack~\cite{PRW21,CLMZ24aSODA,Jin24,Bri24,CLMZ24bSTOCKnapsack}. Proximity bounds capture the intuition that the optimal solution should contain a lot of high-efficiency items and only a few low-efficiency items. (The efficiency of an item is defined to be its profit-to-weight ratio.) It was pioneered by Eisenbrand and Weismantel~\cite{EW19}. Their proximity bound works for a more general problem of integer programming and is based on Steinitz's lemma. The proof can be greatly simplified if one only considers Knapsack~\cite{PRW21}.  This bound was subsequently improved using tools from additive combinatorics~\cite{CLMZ24aSODA,Bri24,Jin24}. These bounds depend on the maximum item weight $w_{\max}$ or the maximum item profits $p_{\max}$, and leads to $\tilde{O}(n + w^2_{\max})$-time and $\tilde{O}(n + p^2_{\max})$-time algorithm for Knapsack~\cite{Jin24,Bri24}. (We remark that all the proximity bounds of this type assume that the item weights and profits are integers.) These results, however, seem not helpful in breaking the quadratic barrier: even if we can scale the $w_{\max}$ and $p_{\max}$ to integers bounded by $O(\frac{1}{\eps})$, it still takes  $\tilde{O}(n + (\frac{1}{\eps})^2)$ time to solve the problem.

We shall use another type of proximity bounds, which depends on the item efficiencies. Such a proximity bound was used to design FPTASes for Knapsack~\cite{Jin19,DJM23} before. This proximity states that the optimal solution selects almost all the high-efficiency items, and exchanges only a few of them for the low-efficiency items. Moreover, the number of exchanged items is inversely proportional to the efficiency gap between high- and low-efficiency items. Fewer exchanged items allow a larger additive error per exchanged item, and therefore, a faster approximation.

\paragraph*{Our Main Technical Contribution}
While each of the above-mentioned techniques appeared in previous papers before, it seems that none of them alone can break the quadratic barrier for weakly approximating Knapsack. Our main contribution is to modify and then combine these techniques to obtain a truly subquadratic-time algorithm. 

Note that if the item efficiencies differ significantly, then the efficiency-based proximity bound becomes sharp; If item efficiencies stay similar, then the Subset Sum based approach becomes efficient. To combine both ideas, the main difficulty is that we need to approximate the optimal objective value for all (approximate) capacities, and the efficiency-based proximity bound may be sharp for certain capacities, and weak for other capacities. Towards this, the main idea is to use different rounding precision for different items. We roughly explain it below. For simplicity, when we say we approximately solve a group of items, we mean to approximately compute the optimal objective value for all knapsack capacities when taking items from this group. We partition all items into multiple groups of fixed size. If the items in a group have similar efficiencies, then we approximately solve this group with high precision using the Subset Sum based approach; Otherwise, we approximately solve this group with low precision. Low precision means a larger error per item, but if proximity guarantees that only a few items from this group will contribute to the optimal solution (or only a few items will not contribute), then the total error may still be bounded. So the challenge is to prove that a sharp proximity holds for such a group regardless of the knapsack capacity.

We remark that a simpler implementation of the above idea leads to an algorithm with running time $\tilde{O}(n + (\frac{1}{\eps})^{11/6})$. By grouping the items in a more sophisticated way, the running time can be improved to  $\tilde{O}(n + (\frac{1}{\eps})^{7/4})$.


\subsection{Further Related Work}
\paragraph*{FPTASes with Running Time in Multiplicative Form} So far, we have discussed FPTASes for Knapsack whose running times are $n+f(\frac{1}{\eps})$. There is also a line of research focusing on FPTASes with a running time of the form $n^{O(1)}\cdot f(\frac{1}{\eps})$ (see, e.g.,~\cite{Law79,KP99,Chan18}). The best known running time of this form is $\tilde{O}(\frac{n^{3/2}}{\eps})$ due to Chan~\cite{Chan18}. It remains open whether there exists an FPTAS of running time $O(\frac{n}{\eps})$.  It is not even clear whether Knapsack admits a weak approximation scheme in $O(\frac{n}{\eps})$ time.

\paragraph*{Pseudo-polynomial Time Algorithm}
Recent years have witnessed important progress in the study of pseudo-polynomial time algorithms for Knapsack and Unbounded Knapsack~\cite{BC22,BC23,CLMZ24aSODA,Bri24,Jin24}. In particular, the best-known algorithm for Knapsack that is parametrized by the largest weight $w_{\max}$ (or the largest profit $p_{\max}$) runs in $\tilde{O}(n+w_{\max}^2)$-time (or $\tilde{O}(n+p_{\max}^2)$-time)~\cite{Bri24,Jin24}, and is the best possible assuming the $(\min, +)$-convolution conjecture~\cite{CMWW19, KPS17}. It is, however, a major open problem whether the quadratic barrier can be broken by combining both parameters, that is, whether an $\tilde{O}(n+{(w_{\max}+p_{\max})}^{2-\delta})$-time algorithm exists for Knapsack. Notably, for Unbounded Knapsack, one can find an algorithm that runs in $\tilde{O}(n+{(w_{\max}+p_{\max})}^{3/2})$ time~\cite{BC22}.

\subsection{Paper Outline}
In Section~\ref{sec:pre}, we introduce all the necessary notation and terminology. In Section~\ref{sec:reduce}, we reduce the problem so that it suffices to consider instances with nice properties. In Section~\ref{sec:tools}, we present all the ingredients that are needed by our algorithm. In Section~\ref{sec:weak}, we present an $\tilde{O}(n + (\frac{1}{\eps})^{11/6})$-time algorithm to illusrate our main idea. In Section~\ref{sec:strong}, we show how to improve the running time to $\tilde{O}(n + (\frac{1}{\eps})^{7/4})$. All the missing proofs can be found in the appendix.

\section{Preliminaries}\label{sec:pre}
We assume that $\eps \leq 1$, since otherwise we can approximate using a linear-time 2-approximation algorithm for Knapsack~\cite[Section 2.5]{KPD04}. We denote the set of real numbers between $a$ and $b$ as $[a, b]$. All logarithms are based $2$.

To ease the presentation of our algorithm, we interpret Knapsack from the perspective of tuples. An item can be regarded as a tuple $(w,p)$, where $w$ and $p$ represent the weight and profit, respectively. Throughout, we do not distinguish an item and a tuple, and always refer to the first entry as weight and the second entry as profit.  The efficiency of a tuple $(w, p)$ is defined to be its profit-to-weight ratio $\frac{p}{w}$.  The efficiency of $(0,0)$ is defined to be $0$.

Let $I$ be a multiset of tuples. We use $w(I)$ to denote the total weight of tuples in $I$. That is, $w(I) = \sum_{(w,p)\in I}w$. Similarly, we define $p(I) = \sum_{(w,p)\in I}p$. Define the set of all (2-dimensional) subset sums of $I$ as follows:
\[
    \mathcal{S}(I) = \{(w(I'), p(I')) : I' \subseteq I\}.
\]
Every tuple $(w, p) \in \mathcal{S}(I)$ corresponds to a Knapsack solution with total weight $w$ and total profit $p$. We say a tuple $(w, p)$ dominates another tuple $(w', p')$ if $w \leq w'$,  $p \geq p'$, and at least one of the two inequalities holds strictly. Let ${\cal S}^+(I)$ be the set obtained from ${\cal S}(I)$ by removing all dominated tuples.
Every tuple in ${\cal S}^+(I)$ corresponds to a Pareto optimal solution for Knapsack, and vice versa. Let $t$ be the given capacity. Knapsack can be formulated as follows:
\[
    \max\{\text{$p$ : $(w, p) \in \mathcal{S}^+(I)$ and $w \leq t$}\}.
\]

We consider the more general problem of (weakly) approximating $\mathcal{S}^+(I)$.
\begin{definition}\label{def:approx-set}
    Let $S$ and $S'$ be two sets of tuples. 
    \begin{romanenumerate}
        \item $S'$ approximates $S$ with factor $1 + \eps$ if for any $(w,p) \in S$, there exists $(w', p') \in S'$ such that $w' \leq (1 + \eps)w$ and $p' \geq p/(1 + \eps)$;

        \item  $S'$ approximates $S$ with additive error $(\delta_w, \delta_p)$ if for any $(w,p) \in S$, there exists $(w' , p') \in S'$ such that $w' \leq w + \delta_w$ and $p' \geq p - \delta_p$;

        \item  $S$ dominates $S'$ if for any $(w', p') \in S'$, there exists $(w, p) \in S$ such that $w \leq w'$ and $p \geq p'$.
    \end{romanenumerate}
\end{definition}
When $S'$ approximates a set $S$ that contains only one tuple $(w,p)$, we also say that $S'$ approximates the tuple $(w,p)$.

Let $(I,t)$ be a Knapsack instance. Let $\mathrm{OPT}$ be the optimal value. Our goal is to compute a set $S'$ such that 
\begin{itemize}
    \item $S'$ approximates $\mathcal{S}^+(I) \cap ([0, t]\times [0, OPT])$ with additive error $(\eps t, \eps \cdot \mathrm{OPT})$ and 

    \item $S'$ is dominated by $\mathcal{S}(I)$.
\end{itemize}
The former ensures that $S'$ provides a good (weak) approximation of the optimal solution, and the latter ensures that every tuple in $S'$ is not better than a true solution. Our definition is essentially the same as that by Bringmann and Cassis~\cite{BC22}, albeit their definition is for sequences.

Let $I_1 \cup I_2$ be a partition of $I$. It is easy to observe the following.
\begin{itemize}
\item \(
    {\cal S}(I) = {\cal S}(I_1) + {\cal S}(I_2)
\), where the sumset $X + Y$ is defined by the following formula.
\[
    X + Y = \{\text{$(w_1 + w_2, p_1 + p_2)$ : $(w_1, p_1) \in X$ and $(w_2, p_2) \in Y$}\};
\]
\item \(
    {\cal S}^+(I) = {\cal S}^+(I_1) \oplus {\cal S}^+(I_2),
\)
where $\oplus$ represents the $(\max, +)$-convolution defined by the following formula.
\[
    X \oplus Y = \{\text{$(w, p) \in X + Y$ : $(w, p)$ is not dominated in $X+Y$}\}.
\]
\end{itemize}

Using Chi et al.'s $\tilde{O}(n^{3/2})$-time algorithm for bounded monotone $(\min,+)$-convolution~\cite{CDXZ22} and Bringmann and Cassis's algorithm for approximating the $(\max,+)$-convolution~\cite[Lemma 28 in the full version]{BC22}, we can approximate the $(\max,+)$-convolution of two sets in $\tilde{O}((\frac{1}{\eps})^{3/2})$ time. Using the divide-and-conquer approach from~\cite{Chan18}, we can generalize it to multiple sets. We defer the proof of the following lemma to Appendix~\ref{app:conv}.

\begin{restatable}{lemma}{lemmerging}\label{lem:merging}
    Let $S_1, S_2, \ldots, S_m \subseteq (\{0\} \cup [A, B]) \times (\{0\} \cup [C, D])$ be sets of tuples with total size $\ell$. In $\tilde{O}(\ell + (\frac{1}{\eps})^{3/2}m)$ time\footnote{Here, the $\tilde{O}(\cdot)$ notation hides polylogarithmic factor not only in $n$ and $\frac{1}{\eps}$, but also in other parameters such as $\frac{B}{A}$ and $\frac{D}{C}$. These parameters will eventually be set to be polynomial in $n$ and $\frac{1}{\eps}$.} and with high probability, we can compute a set $\tilde{S}$ of size $\tilde{O}(\frac{1}{\eps})$ that approximates $S_1 \oplus S_2 \oplus \cdots \oplus S_m$ with factor $1 + O(\eps)$.  Moreover, $\tilde{S}$ is dominated by $S_1 + S_2 + \cdots + S_m$.
\end{restatable}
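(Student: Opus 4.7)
The plan is to reduce the lemma to the two-set approximate $(\max,+)$-convolution subroutine of Bringmann and Cassis and then amortize via the divide-and-conquer template of Chan. As a first step, I would preprocess every input $S_i$ by bucketing it on a two-dimensional geometric grid with ratio $1+\eps'$ on each axis, where $\eps' = \Theta(\eps/\log m)$. Because each coordinate is restricted to $\{0\} \cup [A,B]$ or $\{0\} \cup [C,D]$, there are only $\tilde{O}(1/\eps')$ buckets per axis, so from each $S_i$ I retain one Pareto-optimal representative per nonempty bucket, obtaining a compressed $\tilde{S}_i$ of size $\tilde{O}(1/\eps)$. Scanning through the $S_i$'s once performs this compression in $O(\ell)$ total time while losing at most a $(1+\eps')$ multiplicative factor per coordinate, and each surviving representative is still a genuine element of the original $S_i$ (hence a true subset sum).

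Next, I would build a balanced binary tree whose leaves are the $\tilde{S}_i$'s, and at each internal node compute a $(1+\eps')$-approximation of the $\oplus$-convolution of the two children using the Bringmann-Cassis two-set routine, which internally invokes Chi et al.'s $\tilde{O}(n^{3/2})$ bounded monotone $(\min,+)$-convolution. After each merge I reapply the bucketing so every node stores a set of size $\tilde{O}(1/\eps)$. Consequently each of the $O(m)$ merges costs $\tilde{O}((1/\eps)^{3/2})$, summing to $\tilde{O}((1/\eps)^{3/2} m)$, which combined with preprocessing yields the advertised $\tilde{O}(\ell + (1/\eps)^{3/2} m)$ running time.

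For correctness, the tree has depth $O(\log m)$ and each merge (and each bucketing step) contributes a multiplicative factor of $1+\eps'$ per coordinate, so the total approximation factor is $(1+\eps')^{O(\log m)} = 1 + O(\eps)$, as required. Dominance by $S_1 + S_2 + \cdots + S_m$ is maintained inductively: bucketing retains only actual tuples from the current set, and the Bringmann-Cassis merge only outputs tuples that are coordinate-wise dominated by true sums from its two inputs.

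The main obstacle I anticipate is handling the tuples lying on an axis (with a zero coordinate), for which multiplicative rounding is meaningless and which the nonzero-range assumption of the bounded monotone convolution subroutine cannot directly accommodate. I would treat $0$ as a dedicated bucket on each axis, retain all axis-aligned tuples exactly, and apply the multiplicative machinery only to the rectangle $[A,B] \times [C,D]$; the axis-aligned portion then contributes only $\tilde{O}(1/\eps)$ extra tuples, is preserved without error under merging (since $\{(w,0)\} \oplus T$ and $\{(0,p)\} \oplus T$ are trivial shifts), and does not affect either the size bound or the overall time.
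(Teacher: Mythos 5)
Your proposal is correct and takes essentially the same route as the paper: a balanced divide-and-conquer tree over the $m$ sets, with the two-set Bringmann--Cassis approximate $(\max,+)$-convolution (backed by Chi et al.'s bounded monotone algorithm) at each internal node, depth $O(\log m)$, and a $\Theta(\eps/\log m)$ sharpening of the per-merge accuracy. The only cosmetic differences are that you perform the size-reduction to $\tilde{O}(1/\eps)$ as a separate geometric-grid preprocessing pass on the leaves (the paper absorbs this into its two-set subroutine, which runs in $\tilde{O}(\ell + (1/\eps)^{3/2})$) and that you keep actual input tuples as bucket representatives whereas the paper rounds weight up and profit down, but both choices preserve the required dominance by $S_1 + \cdots + S_m$.
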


With Lemma~\ref{lem:merging}, we can approximate $\mathcal{S}^+(I) \cap ([0, t]\times [0, OPT])$ as follows. Partition $I$ into several groups, approximate each group independently, and merge their results using Lemma~\ref{lem:merging}.  As long as there are not too many groups, the merging time will be subquadratic. The following lemmas explain how the multiplicative factor and additive error accumulate during the computation. Their proofs are deferred to Appendix~\ref{app:err}.

\begin{restatable}{lemma}{lemapproxerra}
    \label{lem:approx-err-1}
    Let $S$, $S_1$, and $S_2$ be sets of tuples. 
    \begin{romanenumerate}
        \item If $S_1$ approximates $S$ with factor $1 + \eps_1$ and $S_2$ approximates $S_1$ with factor $1 + \eps_2$, then $S_2$ approximates $S$ with factor $(1 + \eps_1)(1 + \eps_2)$.

        \item If $S_1$ approximates $S$ with additive error $(\delta_{w1}, \delta_{p1})$ and $S_2$ approximates $S_1$ with factor $(\delta_{w2}, \delta_{p2})$, then $S_2$ approximates $S$ with additive error $(\delta_{w1} + \delta_{w2}, \delta_{p1}+ \delta_{p2})$.
    \end{romanenumerate}
\end{restatable}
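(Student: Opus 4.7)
The plan is to prove both parts directly from Definition~\ref{def:approx-set} by chaining approximations. Both statements are transitivity claims for a ``there exists a witness'' relation, so the natural strategy is: fix an arbitrary tuple in $S$, obtain a witness in $S_1$ via the first hypothesis, then feed that witness into the second hypothesis to obtain a witness in $S_2$, and finally verify the resulting inequalities on weight and profit.

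For part (i), I would fix $(w,p) \in S$ and apply the definition of approximation with factor $1+\eps_1$ to get some $(w_1, p_1) \in S_1$ with $w_1 \leq (1+\eps_1) w$ and $p_1 \geq p/(1+\eps_1)$. Then I would apply the factor-$(1+\eps_2)$ approximation property of $S_2$ relative to $S_1$ to this tuple, obtaining $(w_2, p_2) \in S_2$ with $w_2 \leq (1+\eps_2) w_1$ and $p_2 \geq p_1 / (1+\eps_2)$. Composing the bounds gives $w_2 \leq (1+\eps_1)(1+\eps_2) w$ and $p_2 \geq p / ((1+\eps_1)(1+\eps_2))$, which is exactly the conclusion.

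For part (ii), the proof is analogous but with additive instead of multiplicative errors. For an arbitrary $(w,p) \in S$, the first hypothesis yields $(w_1, p_1) \in S_1$ with $w_1 \leq w + \delta_{w1}$ and $p_1 \geq p - \delta_{p1}$. Applying the second hypothesis to $(w_1,p_1)$ gives $(w_2, p_2) \in S_2$ with $w_2 \leq w_1 + \delta_{w2}$ and $p_2 \geq p_1 - \delta_{p2}$. Adding the inequalities yields $w_2 \leq w + (\delta_{w1}+\delta_{w2})$ and $p_2 \geq p - (\delta_{p1}+\delta_{p2})$, as required.

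There is no real obstacle: the lemma is a routine composition statement, and the only thing to watch is that the hypothesis in the statement of part (ii) says ``factor $(\delta_{w2}, \delta_{p2})$'' which is evidently a typo for ``additive error $(\delta_{w2}, \delta_{p2})$'', since additive errors are what appears in the conclusion. I would simply read the hypothesis in the intended sense and proceed as above.
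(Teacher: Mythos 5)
Your proof is correct and follows the same chaining argument as the paper: fix a tuple in $S$, obtain a witness in $S_1$, then feed it through to obtain a witness in $S_2$, and compose the bounds. The paper proves only part (i) and declares part (ii) similar, whereas you spell out part (ii) as well and correctly flag the ``factor $(\delta_{w2}, \delta_{p2})$'' typo (which should read ``additive error'').
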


\begin{restatable}{lemma}{lemapproxerrb}\label{lem:approx-err-2}
    Let $S_1$, $S_2$, $S'_1$, $S'_2$ be sets of tuples.
    \begin{romanenumerate}
        \item If $S'_1$ and $S'_2$ approximates $S_1$ and $S_2$ respectively with factor $1 + \eps$, then 
        $S'_1 \oplus S'_2$ and $S'_1 + S'_2$ approximate $S_1 \oplus S_2$ and $S_1 + S_2$ respectively with factor $1 + \eps$.

        \item If $S'_1$ and $S'_2$ approximates $S_1$ and $S_2$ with additive error $(\delta_{w1}, \delta_{p1})$ and $(\delta_{w2}, \delta_{p2})$, respectively, then 
        $S'_1 \oplus S'_2$ and $S'_1 + S'_2$ approximate $S_1 \oplus S_2$ and $S_1 + S_2$ respectively with additive error $(\delta_{w1} + \delta_{w2}, \delta_{p1}+ \delta_{p2})$.
    \end{romanenumerate}
\end{restatable}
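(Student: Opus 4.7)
The plan is to verify each of the four claims (two in part~(i), two in part~(ii)) directly from Definition~\ref{def:approx-set}, the definition of the sumset, and the definition of $\oplus$. The arguments for multiplicative and additive errors mirror each other, so the substantive work is to handle the $+$ case and then transfer to the $\oplus$ case.

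For the sumset claims, I would start with an arbitrary $(w,p) \in S_1 + S_2$ and write $(w,p) = (w_1,p_1) + (w_2,p_2)$ with $(w_i,p_i) \in S_i$. The approximation guarantee for each $S'_i$ supplies a tuple $(w'_i,p'_i) \in S'_i$ approximating $(w_i,p_i)$, and then $(w'_1+w'_2,\, p'_1+p'_2) \in S'_1 + S'_2$ approximates $(w,p)$: in the multiplicative setting the factor $1+\eps$ survives because both coordinates scale linearly, and in the additive setting the errors simply sum coordinatewise, yielding $(\delta_{w1}+\delta_{w2},\, \delta_{p1}+\delta_{p2})$.

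For the $\oplus$ claims, I would exploit the inclusion $S_1 \oplus S_2 \subseteq S_1 + S_2$. Given $(w,p) \in S_1 \oplus S_2$, the sumset argument above produces some $(w',p') \in S'_1 + S'_2$ that approximates $(w,p)$ with the desired guarantee. If $(w',p')$ already lies in $S'_1 \oplus S'_2$, we are done; otherwise it is dominated in $S'_1 + S'_2$ by some $(w'',p'') \in S'_1 \oplus S'_2$, so $w'' \leq w'$ and $p'' \geq p'$, which only improves the approximation in both coordinates simultaneously. Hence $(w'',p'')$ still approximates $(w,p)$ with the required factor or additive error.

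I do not anticipate a real obstacle here: the lemma is a routine bookkeeping result, and the only subtlety worth flagging is precisely the reduction from $\oplus$ to $+$ via the dominance observation, which is what guarantees the same bounds transfer from the full sumset to the pruned convolution without loss. I would write up the multiplicative case in detail and then remark that the additive case is identical word-for-word after replacing each inequality of the form $x' \leq (1+\eps)x$ (resp.\ $x' \geq x/(1+\eps)$) by $x' \leq x + \delta_w$ (resp.\ $x' \geq x - \delta_p$).
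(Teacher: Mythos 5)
Your proposal matches the paper's proof essentially line for line: decompose an arbitrary element of $S_1 + S_2$ and add the per-factor approximants, then handle $\oplus$ via the inclusion $S_1 \oplus S_2 \subseteq S_1 + S_2$ and the observation that a dominating tuple in $S'_1 \oplus S'_2$ can only improve both coordinates. The paper likewise writes out the multiplicative case in detail and notes the additive case is proved identically, so there is nothing to flag here.
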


In the rest of the paper, we may switch between the multiplicative factor and the additive error, although our goal is to bound the latter. It is helpful to keep the following observation in mind.
\begin{observation}
    If $S'$ approximates $S$ with factor $1 + \eps$, then $S'$ approximates each tuple $(w, p) \in S$ with additive error $(\eps w, \eps p)$.
\end{observation}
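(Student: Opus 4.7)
The plan is to unwind the two parts of Definition~\ref{def:approx-set} and show that the witness provided by the multiplicative approximation already serves as a witness for the additive approximation. Concretely, fix an arbitrary tuple $(w, p) \in S$. By the hypothesis that $S'$ approximates $S$ with factor $1 + \eps$, there exists some $(w', p') \in S'$ satisfying $w' \leq (1 + \eps)w$ and $p' \geq p/(1 + \eps)$. I will argue that this same $(w', p')$ meets the additive bound required for the conclusion, namely $w' \leq w + \eps w$ and $p' \geq p - \eps p$.

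The weight inequality is immediate: $(1 + \eps) w = w + \eps w$, so $w' \leq w + \eps w$ with no further work. The profit inequality reduces to verifying the elementary fact $1/(1 + \eps) \geq 1 - \eps$, equivalently $(1 - \eps)(1 + \eps) = 1 - \eps^2 \leq 1$, which holds for all real $\eps$. Using $p \geq 0$ (weights and profits are non-negative throughout the paper), this gives $p' \geq p/(1 + \eps) \geq p(1 - \eps) = p - \eps p$, as required.

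Since $(w, p)$ was arbitrary, the same argument applies to every tuple in $S$, yielding the additive approximation with error $(\eps w, \eps p)$ per tuple. There is no real obstacle here: the entire content of the observation is the one-line inequality $1/(1+\eps) \geq 1 - \eps$ that translates the multiplicative profit slack into an additive one, and the weight direction is trivial because the multiplicative and additive forms coincide up to the identity $(1+\eps)w = w + \eps w$.
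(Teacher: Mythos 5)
Your proof is correct, and it is exactly the one-line argument the paper implicitly relies on (the Observation is stated without proof): the weight bound is the identity $(1+\eps)w = w + \eps w$, and the profit bound follows from $1/(1+\eps) \geq 1-\eps$ together with $p \geq 0$. Nothing further is needed.
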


\section{Reducing the Problem}\label{sec:reduce}
With Lemma~\ref{lem:merging}, we can reduce the problem of weakly approximating Knapsack to the following problem. 
\begin{quote}
The reduced problem $\mathrm{RP}(\alpha)$: given a set $I$ of $n$ items from $[1,2] \times [1,2]$ and a real number $\alpha \in [\frac{1}{2}, \frac{1}{4\eps}]$, compute a set $S$ of size $\tilde{O}(\frac{1}{\eps})$ that approximates ${\cal S}^+(I) \cap ([0,\frac{1}{\alpha\eps}] \times [0, \frac{1}{\alpha\eps}])$ with additive error $(O(\frac{1}{\alpha}), O(\frac{1}{\alpha}))$. Moreover, $S$ should be dominated by $\mathcal{S}(I)$.
\end{quote}

\begin{restatable}{lemma}{lemreduce}
\label{lem:reduce}
    An $\tilde{O}(n + T(\frac{1}{\eps}))$-time algorithm for the reduced problem $\mathrm{RP}(\alpha)$ implies an $\tilde{O}(n + (\frac{1}{\eps})^{3/2} + T(\frac{1}{\eps}))$-time weak approximation scheme for Knapsack.
\end{restatable}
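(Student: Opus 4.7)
The strategy is to partition items into $O(\log^2(1/\eps))$ geometric weight--profit classes, invoke $\mathrm{RP}(\cdot)$ once per class, and combine the outputs with Lemma~\ref{lem:merging}. Because $\mathrm{RP}$'s error $O(1/\alpha)$ accumulates across classes, I would run the whole scheme with a sharpened precision $\eps' := \Theta(\eps/\log^2(1/\eps))$; the $\polylog$ blow-up of $1/\eps'$ over $1/\eps$ is absorbed in $\tilde{O}(\cdot)$. For preprocessing, run the linear-time $2$-approximation of~\cite[Section~2.5]{KPD04} to obtain $P^*$ with $P^*\leq\mathrm{OPT}\leq 2P^*$, discard items with $w_i>t$, rescale weights by $1/(\eps' t)$ and profits by $1/(\eps' P^*)$, drop tiny-profit items (scaled profit $<1$), and set aside tiny-weight items (scaled weight $<1$) to be greedily inserted at the end within the $\eps t$ weight slack.

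After preprocessing, all weights and profits lie in $[1, O(1/\eps')]$. Partition the items into classes $G_{a,b}=\{i:w_i\in[2^a,2^{a+1}),\ p_i\in[2^b,2^{b+1})\}$ for $a,b\in\{0,1,\dots,\lceil\log(1/\eps')\rceil\}$, yielding $m=O(\log^2(1/\eps'))$ nonempty classes. Within $G_{a,b}$, dividing weights by $2^a$ and profits by $2^b$ produces an instance on $[1,2]\times[1,2]$, which I would feed into $\mathrm{RP}(\alpha_{a,b})$ with
\[
\alpha_{a,b} := \max\bigl(\tfrac{1}{2},\ \min\bigl(\tfrac{1}{4\eps'},\ \tfrac{1}{8}\max(2^a,2^b)\bigr)\bigr),
\]
so that the admissible cap $1/(\alpha_{a,b}\eps')$ covers both the largest feasible normalized total weight $\Theta(1/(2^a\eps'))$ and the largest feasible normalized total profit $\Theta(1/(2^b\eps'))$ that can be drawn from $G_{a,b}$.

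Scaling the output of $\mathrm{RP}(\alpha_{a,b})$ back by $(2^a, 2^b)$, the per-class additive weight error becomes $O(2^a/\alpha_{a,b})=O(1)$ and profit error $O(2^b/\alpha_{a,b})=O(1)$ in scaled coordinates, using $\alpha_{a,b}=\Theta(\max(2^a,2^b))$. Summing over $m=O(\log^2(1/\eps'))$ classes gives $O(\log^2(1/\eps'))$ total scaled error, which unscales to $O(\eps t)$ weight error and $O(\eps\cdot\mathrm{OPT})$ profit error by the choice of $\eps'$. Lemma~\ref{lem:merging} combines the $m$ outputs in $\tilde{O}((1/\eps)^{3/2}\cdot m)=\tilde{O}((1/\eps)^{3/2})$ time, and each call to $\mathrm{RP}(\alpha_{a,b})$ costs $\tilde{O}(n_{a,b}+T(1/\eps'))=\tilde{O}(n_{a,b}+T(1/\eps))$ since the precision ratio is polylogarithmic; summing over classes delivers the claimed $\tilde{O}(n+(1/\eps)^{3/2}+T(1/\eps))$ bound.

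The main obstacle is to control the per-class error in both dimensions simultaneously, despite the fact that the weight and profit classes $(a,b)$ can be far apart. The key observation is that within $G_{a,b}$ every item has normalized efficiency in $[1/2,2]$, so the maximum number of items a feasible partial solution can draw from $G_{a,b}$ is $\Theta(1/(\max(2^a,2^b)\eps'))$ (using the capacity bound $1/(2^a\eps')$ and the $\mathrm{OPT}$ bound $2/(2^b\eps')$), and hence both the max feasible normalized total weight and profit are within a constant factor of $1/(\max(2^a,2^b)\eps')$. Thus a single choice of $\alpha_{a,b}=\Theta(\max(2^a,2^b))$ (clamped to $[1/2,1/(4\eps')]$) balances both errors at $O(1)$ in the scaled frame, which is the crux of the argument.
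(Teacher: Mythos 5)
Your core reduction—dyadically partitioning the scaled weight--profit square into $O(\log^2(1/\eps))$ classes $G_{a,b}$, normalizing each to $[1,2]\times[1,2]$, calling $\mathrm{RP}$ with $\alpha_{a,b}=\Theta(\max(2^a,2^b))$, sharpening $\eps$ by a polylog factor to absorb the per-class error, and merging with Lemma~\ref{lem:merging}—is exactly the paper's argument, and your error and running-time accounting for that part is sound.

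The gap is in the preprocessing. Flat-out \emph{dropping} items with scaled profit $<1$ (i.e., $p<\eps' P^*$) can destroy a constant fraction of $\mathrm{OPT}$: after the tiny-weight items have been pulled out, up to $\Theta(1/\eps')$ of the remaining items can appear in the optimal solution, each contributing profit up to $\eps'P^*$, so the discarded profit can be as large as $\Theta(P^*)=\Theta(\mathrm{OPT})$. Concretely, take $t=1$, one item $(1/2,1/2)$, and $N\approx 1/(2\eps')$ items $(\eps'/2,\ \eps'/8)$; the optimum uses all of them and has value $\approx 9/16$, but after you drop the small-profit items the best achievable value is $1/2$, an approximation ratio $8/9$ independent of $\eps$. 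Similarly, setting aside items with scaled weight $<1$ and ``greedily inserting them within the $\eps t$ weight slack'' fails when the optimal solution spends $\Theta(t)$ of its capacity on small-weight items: with $t=1$, one item $(1/2,1/4)$ of efficiency $1/2$ and $1/\eps'$ items $(\eps'/2,\eps'/2)$ of efficiency $1$, the optimum is $3/4$ but your scheme recovers only $\approx 1/4+\eps$ because the greedy phase can spend only $\eps t$ extra capacity, not the $\Theta(t)$ the small items actually need. The paper avoids both failures by \emph{bundling}: cheap items are grouped, in decreasing order of efficiency, into meta-items whose total profit lies in $(\Theta(\eps y),\,\Theta(\eps y)]$, and small-weight items into meta-items whose total weight lies in $(\Theta(\eps t),\,\Theta(\eps t)]$; this preserves the near-optimal solution up to $O(\eps)$ multiplicative/additive loss and lands every remaining item in $[1,1/\eps]\times[1,1/\eps]$ after scaling, which is what the rest of the reduction needs. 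You would have to replace the drop/set-aside steps with this bundling (or an equivalent) for the proof to go through.
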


The approach to reducing the problem is similar to that in~\cite{CLMZ24bSTOCKnapsack}. The details of the proof are deferred to Appendix~\ref{app:reducing}.

Throughout, we write ${\cal S}^+(I) \cap ([0,\frac{1}{\alpha\eps}] \times [0, \frac{1}{\alpha\eps}])$ as $\mathcal{S}^+(I; \frac{1}{\alpha\eps})$ for short, and when the weight and profit both have an additive error of $\delta$, we simply say that additive error is $\delta$.

We also remark that when presenting algorithms for the reduced problem $\mathrm{RP}(\alpha)$, we will omit the requirement that the resulting set $S$ should be dominated by $\mathcal{S}(I)$ since it is guaranteed by the algorithms in a quite straightforward way. More specifically, in those algorithms (including Lemma~\ref{lem:merging}), rounding is the only place that may incur approximation errors, and whenever we round a tuple $(w,p)$, we always round $w$ up and round $p$ down. As a consequence, a set after rounding is always dominated by the original set. 

\section{Algorithmic Ingredients}\label{sec:tools}
We present a few results that will be used by our algorithm as subroutines.
\subsection{An Algorithm for Small Solution Size}
Consider the reduced problem $\mathrm{RP}(\alpha)$. Since items are from $[1,2] \times [1,2]$, it is easy to see that $|I'| \leq \frac{1}{\alpha\eps}$ for any $I' \subseteq I$ with $(w(I'), p(I')) \in {\cal S}^+(I; \frac{1}{\alpha\eps})$. When $\alpha$ is large, $|I'|$ is small, and such a case can be tackled using Bringmann's two-layer color-coding technique~\cite{Bri17}. Roughly speaking, we can divide $I$ into $k = \tilde{O}(\frac{1}{\alpha\eps})$ subsets $I_1, \ldots, I_k$ so that with high probability, $|I_j \cap I'| \leq 1$ for every $j$. Then it suffices to (approximately) compute $I_1 \oplus \cdots \oplus I_k$, and the resulting set will provide a good approximation for the tuple $(w(I'), p(I'))$.  Since this technique becomes quite standard nowadays, we defer the proof to Appendix~\ref{app:color-coding}.

\begin{restatable}{lemma}{lemalgforlargealpha}\label{lem:alg-for-large-alpha}
    Let $I$ be a set of items from $[1, 2]\times [1,2]$.  In $\tilde{O}(n + (\frac{1}{\eps})^{5/2}\frac{1}{\alpha})$ time and with high probability, we can compute a set $S$ of size $\tilde{O}(\frac{1}{\eps})$ that approximates ${\cal S}^+(I; \frac{1}{\alpha\eps})$ with factor $1 + O(\eps)$.
\end{restatable}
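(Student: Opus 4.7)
The plan is to invoke Bringmann's two-layer color-coding technique to partition $I$ into $m = \tilde{O}(\tfrac{1}{\alpha\eps})$ sub-buckets so that, with high probability, every Pareto-optimal solution we care about places at most one item into each sub-bucket, and then combine the per-sub-bucket sets using Lemma~\ref{lem:merging}. Color-coding is applicable because the items lie in $[1,2]\times[1,2]$, so any subset $I' \subseteq I$ whose total weight lies in $[0,\tfrac{1}{\alpha\eps}]$ has size at most $K := \tfrac{1}{\alpha\eps}$; in particular every solution underlying a tuple of $\mathcal{S}^+(I;\tfrac{1}{\alpha\eps})$ has at most $K$ items.

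For the color-coding step, I would first randomly partition $I$ into $\Theta(K / \log(n/\eps))$ outer groups so that, by a Chernoff bound, any fixed size-$K$ target solution $I^*$ meets each outer group in $O(\log(n/\eps))$ items with high probability. Within each outer group, I would then uniformly at random throw its items into $\Theta(\log^2(n/\eps))$ inner sub-buckets; conditional on the outer bound, a birthday-style analysis as in~\cite{Bri17} ensures that every inner sub-bucket contains at most one item of $I^*$ with high probability. This produces a family $I_1,\dots,I_m$ with $m = \tilde{O}(\tfrac{1}{\alpha\eps})$ having the desired ``at most one solution item per sub-bucket'' property for any fixed $I^*$.

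Next, I would set $S_j := \{(0,0)\} \cup I_j$ for each $j$, which is exactly $\mathcal{S}^+(I_j)$ when restricted to solutions that pick at most one item, and invoke Lemma~\ref{lem:merging} on $S_1,\dots,S_m$. The $S_j$ sit inside $\{(0,0)\} \cup ([1,2]\times[1,2])$, their total size is $\ell = \tilde{O}(n + \tfrac{1}{\alpha\eps})$, and the lemma returns a set $\tilde{S}$ of size $\tilde{O}(\tfrac{1}{\eps})$ approximating $S_1 \oplus \cdots \oplus S_m$ with factor $1+O(\eps)$ in time $\tilde{O}(\ell + (\tfrac{1}{\eps})^{3/2} m) = \tilde{O}(n + (\tfrac{1}{\eps})^{5/2}\tfrac{1}{\alpha})$, which matches the claimed bound.

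The delicate point, and what I expect to be the main obstacle, is the union bound: color-coding succeeds for any single target $I^*$ with high probability, but we need $\tilde{S}$ to approximate every tuple in $\mathcal{S}^+(I;\tfrac{1}{\alpha\eps})$, of which there could be exponentially many. To handle this, I would lay down a geometric $(1+\eps)$-grid on $[0,\tfrac{1}{\alpha\eps}]^2$; the monotone Pareto frontier hits only $\tilde{O}(\tfrac{1}{\eps})$ cells, so it suffices to pick one representative Pareto-optimal solution per occupied cell and apply color-coding against this finite list. Since color-coding and Lemma~\ref{lem:merging} each succeed with probability $1 - (n/\eps)^{-\Omega(1)}$, a union bound over the $\tilde{O}(\tfrac{1}{\eps})$ representatives preserves high overall success probability and, by monotonicity of the Pareto frontier, yields the required $1+O(\eps)$ approximation of the full set $\mathcal{S}^+(I;\tfrac{1}{\alpha\eps})$.
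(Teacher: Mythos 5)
Your high-level plan matches the paper: use Bringmann's two-layer color-coding so that each sub-bucket receives at most one item of the (small) target solution, then combine via Lemma~\ref{lem:merging}. Your grid-based union bound is a fine alternative to the paper's (which rounds weights to multiples of $\eps$ so that $|\mathcal{S}^+(I;\tfrac{1}{\alpha\eps})| \leq 2n/\eps$ and union-bounds over that many Pareto tuples); both handle the exponential-many-solutions issue.

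However, there is a genuine gap in the probability analysis of the inner layer. You claim that throwing the $O(\log(n/\eps))$ items of $I^*$ that land in an outer group into $\Theta(\log^2(n/\eps))$ inner sub-buckets yields perfect hashing ``with high probability.'' That is false: with $b$ balls and $\Theta(b^2)$ bins the no-collision probability is only a constant bounded away from $1$, not $1-(n/\eps)^{-\Omega(1)}$. Since there are $\Omega(1/(\alpha\eps\log))$ outer groups and you need all of them to succeed simultaneously, a single inner round gives success probability that is exponentially small in $1/\eps$. The standard fix (which the paper implements in Algorithm~\ref{alg:color-coding} / Lemma~\ref{lem:secend-layer-color-coding-algorithm}) is to repeat the inner color-coding $O(\log(1/q))$ times independently within each outer group and take the union of the resulting candidate sets; any one successful repetition suffices, which boosts the per-group success probability to $1-q/k$, and then a union bound over the $k$ outer groups works. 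Crucially, this repetition forces a two-stage merge --- first merge the $\Theta(\log^2)$ sub-buckets inside each (group, repetition) pair, take the union over repetitions within a group, and only then merge across groups --- so your proposed flat pass of Lemma~\ref{lem:merging} over all $\tilde O(\tfrac{1}{\alpha\eps})$ sub-buckets at once does not accommodate the repetitions (merging two repetitions of the same group would double-count its items). The asymptotic time bound is unaffected (only polylog factors), but the structure of the algorithm and the probability argument need to be corrected as above.
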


\subsection{An Efficiency-Based Proximity Bound}
A proximity bound basically states that the optimal solution of Knapsack must contain lots of high-efficiency items and very few low-efficiency items. Let $I = \{(w_1, p_1), \ldots, (w_n, p_n)\}$ be a set of items. Assume that the items are labeled in a non-increasing order of efficiencies. That is, $\frac{p_1}{w_1} \geq \cdots \geq \frac{p_n}{w_n}$. (The labeling may not be unique. We fix an arbitrary one in that case.) Given a capacity $w$, let $b$ be the minimum integer such that $w_1 + \cdots + w_b > w$. We say that the item $(w_b, p_b)$ is the breaking item with respect to $w$. (When no such $b$ exists, we simply let $b = n+1$ and define the breaking item to be $(+\infty, 0)$.) 

The following proximity bound is essentially the same as the one used in~\cite{Jin19,DJM23}. For completeness, we provide a proof in Appendix~\ref{app:proximity}.

\begin{restatable}{lemma}{lemproximity}\label{lem:proximity}
    Let $I$ be a set of items from $[1,2]\times [1,2]$. Let $(w^*, p^*) \in {\cal S}^+(I)$.   Let $I^* \subseteq I$ be a subset of items with $w(I^*) = w^*$ and $p(I^*) = p^*$.  Let $\rho_b$ be the efficiency of the breaking item with respect to $w^*$.  For any $I' \subseteq I$, 
    \begin{romanenumerate}
        \item if every item in $I'$ has efficiency at most $\rho_b - \Delta$ for some $\Delta > 0$, then 
        \[
            w(I' \cap I^*) \leq \frac{2}{\Delta};
        \]

        \item if every item in $I'$ has efficiency at least $\rho_b + \Delta$ for some $\Delta > 0$, then 
        \[
            w(I' \setminus  I^*) \leq \frac{2}{\Delta}.
        \]
    \end{romanenumerate}
\end{restatable}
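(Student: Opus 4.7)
My plan is to prove both parts by an exchange argument, deducing a contradiction to $(w^*,p^*) \in \mathcal{S}^+(I)$ whenever the claimed bound is violated. I describe part~(i) in detail; part~(ii) is dual. Set the greedy prefix $J := \{(w_1,p_1),\ldots,(w_{b-1},p_{b-1})\}$. By definition of the breaking item, $w(J) \leq w^* < w(J)+w_b$, every item in $J$ has efficiency at least $\rho_b$, and every item outside $J \cup \{b\}$ has efficiency at most $\rho_b$. With $X := J \setminus I^*$ and $Y := I^* \setminus J$ one has the basic identity
\[
    w(Y) - w(X) \;=\; w^* - w(J) \;\in\; [0, w_b] \;\subseteq\; [0,2].
\]

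Now let $A := I' \cap I^*$. Since every item of $A$ has efficiency strictly less than $\rho_b$, $A$ is disjoint from $J$ and hence $A \subseteq Y$. Assume for contradiction that $w(A) > 2/\Delta$. I will construct subsets $A_1 \subseteq A$ and $X_1 \subseteq X$ (possibly augmenting $X_1$ with the breaking item $b$) such that the swap $I^{**} := (I^* \setminus A_1) \cup X_1$ is feasible---i.e.\ $w(X_1) \leq w(A_1)$---and the weight mismatch $w(A_1) - w(X_1)$ is bounded by an absolute constant. Producing such a pair is the key combinatorial step and uses crucially that every item weight lies in $[1,2]$: a one-sided greedy packing of $X$ (or of $A$, in the sub-case $w(X) < w(A)$, in which case the breaking item is used to top up $X_1$) yields subsets with $O(1)$ mismatch. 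The resulting $I^{**}$ has weight at most $w^*$, while its profit change satisfies
\[
    p(I^{**}) - p^* \;\geq\; \rho_b\, w(X_1) \;-\; (\rho_b - \Delta)\, w(A_1) \;=\; \Delta\, w(A_1) \;-\; \rho_b\bigl(w(A_1) - w(X_1)\bigr).
\]
Since the mismatch and $\rho_b$ are bounded by absolute constants, this quantity is strictly positive as soon as $w(A_1) = \Omega(1/\Delta)$, which is ensured by the hypothesis $w(A) > 2/\Delta$ after choosing the swap tightly. Hence $I^{**}$ dominates $(w^*,p^*)$, contradicting Pareto-optimality.

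Part~(ii) follows by the symmetric construction: let $B := I' \setminus I^*$, observe $B \subseteq X$, and swap a greedy $B_1 \subseteq B$ \emph{into} the solution in exchange for some $Y_1 \subseteq Y$ of comparable weight, again with an $O(1)$ mismatch afforded by items lying in $[1,2]$. The main obstacle in writing out the proof is the alignment step, i.e., exhibiting an $(A_1,X_1)$ pair (respectively $(B_1,Y_1)$) with the required feasibility and small-mismatch properties. This needs a short case analysis on whether $w(X)$ dominates $w(A)$ (respectively $w(Y)$ dominates $w(B)$), with the degenerate sub-case where one side is too small handled by letting the breaking item $b$ join the exchange to absorb the residual imbalance of at most $w_b \leq 2$. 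Once the alignment is secured, the efficiency gap $\Delta$ carries the contradiction in a single line of arithmetic.
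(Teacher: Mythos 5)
Your exchange argument is a valid strategy and is, in a sense, the ``unpacked'' version of the paper's proof, but the two differ in a non-cosmetic way that affects the constant. The paper never constructs a competing solution: with $B$ in the role of your $J$, it combines the two inequalities $w(I^*\setminus B) < w(B\setminus I^*) + w_b$ and $p(B\setminus I^*) \leq p(I^*\setminus B)$ with the efficiency bounds in one algebraic chain and reads off $\Delta\cdot w(I'\cap I^*) < p_b \leq 2$ directly, without ever needing the ``exchanged'' pieces to have matched weights. Your route, by contrast, must realize an actual feasible $I^{**}$, and the greedy alignment of $A_1$ against $X_1$ forces a weight mismatch of up to $2$; after multiplying by $\rho_b \leq 2$, the contradiction only triggers once $w(A) \gtrsim 4/\Delta$, so what you prove is $w(I'\cap I^*) = O(1/\Delta)$, not the stated $\leq 2/\Delta$. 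This does not matter for how the lemma is used downstream (only the order of magnitude is needed), but it means your proof does not quite recover the claimed constant. Two further remarks on your alignment step: (a) topping up $X_1$ with the breaking item $b$ is not always available since $b$ may already lie in $I^*$; and (b) it is also unnecessary, because your own identity $w(Y) - w(X) = w^* - w(J) \in [0, w_b]$ already yields $w(X) \geq w(A) - 2$, so in the subcase $w(X) < w(A)$ a greedy $A_1 \subseteq A$ with $w(X) \leq w(A_1) \leq w(X)+2$ against $X_1 = X$ suffices. In short, the paper's algebraic route is shorter and gives the sharper constant $p_b/\Delta$; your exchange route is more intuitive but pays a constant-factor tax for feasibility and needs the case analysis you correctly identified as the crux.
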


\subsection{An Algorithm for Items with Similar Efficiencies}
When all the items have the same efficiency, Knapsack becomes Subset Sum, and there is an $\tilde{O}(n + \frac{1}{\eps})$-time weak approximation scheme for Subset Sum~\cite{CLMZ24cSTOCPartition}. The following lemma extends this result to the case where the item efficiencies differ by only a small amount.

\begin{lemma}\label{lem:alg-for-similar-eff}
    Let $I$ be a set of $n$ items from $[1,2]\times [0,\infty]$ whose efficiencies are in the range $[\rho, \rho + \Delta]$. In $\tilde{O}(n + \frac{1}{\eps} + (\frac{1}{\eps})^2 \frac{\Delta}{\rho})$ time and with high probability, we can compute a set of size $\tilde{O}(\frac{1}{\eps})$ that approximates ${\cal S}^+(I)$ with factor $1 + O(\eps)$.
\end{lemma}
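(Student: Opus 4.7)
The plan is to implement the three-step recipe sketched in the technical overview: reduce to only a few distinct efficiencies, handle each efficiency class as a Subset Sum, and combine the per-class answers with a 2D-FFT divide-and-conquer merge.

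\textbf{Rounding and grouping.} Each item $(w_i,p_i)$ has efficiency $\rho_i=p_i/w_i\in[\rho,\rho+\Delta]$. Round $\rho_i$ down to the nearest multiple of $\eps\rho$, call the result $\rho_i'$, and replace $p_i$ by $\rho_i' w_i$. Since $p_i-\rho_i' w_i \le \eps\rho\,w_i \le \eps p_i$ (because $p_i \ge \rho w_i$), this is a $(1+O(\eps))$-approximation, and by Lemma~\ref{lem:approx-err-2} it suffices to solve the rounded instance. Only $k = O(\Delta/(\eps\rho))+1$ distinct values $\rho_0'<\cdots<\rho_{k-1}'$ remain, and I would partition $I = I_0 \cup \cdots \cup I_{k-1}$ accordingly. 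Every item in $I_j$ has efficiency $\rho_j'$, so $\mathcal{S}^+(I_j)$ lies on the ray $p = \rho_j' w$ and approximating it with factor $1+O(\eps)$ is equivalent to approximating the one-dimensional subset-sum set $W_j := \{w(I'):I'\subseteq I_j\}$ with factor $1+O(\eps)$. Apply the $\tilde O(n_j+1/\eps)$-time weak Subset Sum approximation of Chen et al.~\cite{CLMZ24cSTOCPartition} to each $I_j$ to obtain $\widetilde W_j$ of size $\tilde O(1/\eps)$, and set $\widetilde S_j := \{(w,\rho_j' w):w\in\widetilde W_j\}$. The total cost so far is $\tilde O(n + k/\eps) = \tilde O(n + 1/\eps + (1/\eps)^2\Delta/\rho)$.

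\textbf{Merging.} It remains to approximate $\widetilde S_0 \oplus \cdots \oplus \widetilde S_{k-1}$ by a set of size $\tilde O(1/\eps)$; Lemma~\ref{lem:approx-err-2}(i) will then give a $(1+O(\eps))$-approximation of $\mathcal{S}^+(I)$ as a whole. I would arrange the $k$ sets as leaves of a balanced binary tree and merge them upward. At each internal node with children $A,B$, compute an approximation of $A \oplus B$ via a 2D-FFT on a logarithmically bucketed grid, then prune to $\tilde O(1/\eps)$ approximate Pareto optima. The crucial observation, exactly as in Deng, Jin and Mao~\cite{DJM23}, is that every $\widetilde S_j$ is supported on a line of slope $\rho_j'\in[\rho,\rho+\Delta]$, so under the change of coordinates $(w,p)\mapsto(w,\sigma):=(w,p-\rho w)$ the excess-profit coordinate $\sigma$ lies in a window of width only $O(w\Delta/\rho)$ instead of $O(w)$. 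This slimmer second coordinate keeps every FFT grid inside an $O(1/\eps)\times O(\Delta/(\rho\eps))$ rectangle, so summing over all levels of the divide-and-conquer tree gives the target $\tilde O((1/\eps)^2\Delta/\rho)$ merging time.

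\textbf{Main obstacle.} The main difficulty is the merging: a direct call to Lemma~\ref{lem:merging} would cost $\tilde O(k/\eps^{3/2}) = \tilde O((1/\eps)^{5/2}\Delta/\rho)$, too slow by a $1/\sqrt{\eps}$ factor, and a generic 2D $(\max,+)$-convolution per merge would be even worse. The analysis must genuinely exploit the collinear structure of each $\widetilde S_j$—passing to the $(w,\sigma)$ coordinates and running 2D-FFT on a rectangle whose profit dimension is scaled by $\Delta/\rho$—and must balance the arithmetic so that the per-merge FFT cost amortizes across the $O(\log k)$ levels of the tree. Once this is in place, accumulation of the $O(\log k)$ approximation errors to $1+O(\eps)$ is routine via Lemma~\ref{lem:approx-err-1}(i) after rescaling $\eps$ by a polylogarithmic factor.
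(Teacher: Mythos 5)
Your proposal is correct and follows essentially the same route as the paper: round to $O(\Delta/(\eps\rho))$ distinct efficiency classes, weakly solve each class as a Subset Sum instance, and merge via divide-and-conquer where each pairwise merge shears to $(w,\sigma)=(w,p-\rho w)$ coordinates so the 2D-FFT grid has a thin $O(\Delta/(\rho\eps))$ profit dimension. The paper packages this merge as Lemma~\ref{lem:2d-fft} and sums the per-level cost over the $O(\log k)$ levels exactly as you sketch.
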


We first consider the $(\max, +)$-convolution of two sets of tuples.
\begin{lemma}\label{lem:2d-fft}
    Let $S_1, S_2$ be two sets of tuples from $(\{0\}\cup[A, B]) \times [0, +\infty])$ with total size $\ell$. Suppose that the efficiencies of the tuples in $S_1 \cup S_2$ are in the range $\{0\} \cup [\rho, \rho + \Delta]$. In $\tilde{O}(\ell + \frac{1}{\eps} + (\frac{1}{\eps})^2\frac{\Delta}{\rho})$ time\footnote{Here, the $\tilde{O}(\cdot)$ notation hides polylogarithmic factor not only in $n$ and $\frac{1}{\eps}$, but also in $\frac{B}{A}$. These parameters will eventually be set to be polynomial in $n$ and $\frac{1}{\eps}$.}, we can compute a set $S$ of size $\tilde{O}(\frac{1}{\eps})$ that approximates $S_1 \oplus S_2$ with factor $1 + O(\eps)$. Moreover, $S$ is dominated by $S_1 + S_2$.
\end{lemma}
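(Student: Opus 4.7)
The plan is to reduce the approximate $(\max,+)$-convolution to a bounded two-dimensional sumset computation that can be handled by 2D-FFT. The first step is the transformation $q := p - \rho w$: since all efficiencies lie in $[\rho,\rho+\Delta]$, the excess profit satisfies $q \in [0,\Delta w] \subseteq [0,\Delta B]$, and because $\rho(w_1+w_2)+(q_1+q_2) = p_1+p_2$, the $(\max,+)$-convolution of the $(w,p)$-tuples is equivalent to that of the $(w,q)$-tuples. This transformation rescales the ``interesting'' profit range from $\sim \rho w$ down to $\sim \Delta w$, which is precisely what makes the $\Delta/\rho$ factor in the target runtime attainable. The zero-weight tuple $(0,0)$ is handled trivially as an identity element.

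Next I would partition $S_1$ and $S_2$ into $O(\log(B/A))$ geometric weight layers $S_i^{(k)} = \{(w,q) \in S_i : w \in [2^k A, 2^{k+1} A]\}$ and handle each pair of layers $(k_1,k_2)$ independently. Set $W^\star = 2^{k_1}A + 2^{k_2}A$; every tuple in $S_1^{(k_1)} + S_2^{(k_2)}$ then has weight in $[W^\star, 2W^\star]$ and excess profit at most $2\Delta W^\star$. Round weights up to multiples of $\eps W^\star$ and excess profits down to multiples of $\eps \rho W^\star$. Each side now lives on a 2D grid of size $O(1/\eps) \times O(\Delta/(\rho\eps))$, so encoding each set as a 0/1 indicator on this grid and applying 2D-FFT produces the rounded Minkowski sumset in $\tilde O\bigl((1/\eps)^2 \cdot \Delta/\rho\bigr)$ time per layer pair. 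A linear scan over the resulting indicator array (for each weight bin, keep the largest excess-profit bin attained, then filter dominated weight bins) extracts a per-layer-pair Pareto frontier of size $O(1/\eps)$.

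Finally I would take the union of the $O(\log^2(B/A))$ per-layer-pair frontiers and recompute the Pareto frontier to produce an output of size $\tilde O(1/\eps)$. For correctness, any $(W,P) \in S_1 \oplus S_2$ lies in some layer pair $(k_1,k_2)$ with $W \geq W^\star$ and $P \geq \rho W \geq \rho W^\star$, so rounding errors $\leq 2\eps W^\star \leq 2\eps W$ on weight and $\leq 2\eps \rho W^\star \leq 2\eps P$ on profit yield a $(1+O(\eps))$-approximation; dominance by $S_1 + S_2$ is automatic because we round weights up and profits down. Summing the per-layer-pair costs together with $O(\ell)$ preprocessing and $\tilde O(1/\eps)$ for Pareto-frontier maintenance gives the claimed bound $\tilde O\bigl(\ell + 1/\eps + (1/\eps)^2 \Delta/\rho\bigr)$. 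The main obstacle is the cross-layer bookkeeping: different layer pairs use different rounding scales $\eps W^\star$, and one has to verify both that each original tuple's rounding error stays bounded by the scale of its own layer pair and that the concatenation of $O(\log^2(B/A))$ per-layer-pair frontiers indeed has final size $\tilde O(1/\eps)$ while still approximating every point on the true frontier; both become essentially mechanical once the per-layer-pair approximation analysis is in place, because each original tuple belongs to a single layer on its side.
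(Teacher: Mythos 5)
Your overall plan matches the paper's: subtract $\rho w$ from the profit to compress the profit axis, round to a grid, compute the Minkowski sum by 2D-FFT, and aggregate over geometrically scaled weight ranges. The paper iterates over a single parameter $a$ (powers of two in $[A,B]$) keeping all tuples with $w\leq a$, whereas you split into $O(\log^2(B/A))$ layer pairs; the paper's version is a bit cleaner and avoids the quadratic log factor, but both fit inside $\tilde O(\cdot)$. The real problem is your dominance claim.

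You assert that ``dominance by $S_1 + S_2$ is automatic because we round weights up and profits down.'' This is false under your rounding scheme. You round the transformed quantity $q = p - \rho w$ down and $w$ up, but the reconstructed profit is $\hat p = \hat q + \rho \hat w$, and rounding $w$ up \emph{increases} $\rho\hat w$. Concretely, take $\rho=1$, grid spacing $1$, and an original tuple $(w,p)=(1.5,1.5)$ so $q=0$; after rounding you get $(\hat w,\hat q)=(2,0)$ and hence $\hat p = 2 > p$, so the rounded tuple strictly dominates the original rather than being dominated by it. This breaks the requirement that the output be dominated by $S_1+S_2$. The paper handles exactly this by subtracting the total weight-rounding budget from the reconstructed profit before undoing the transformation: it maps $(\hat w,\hat q)\mapsto(\hat w,\hat q+\hat w-2\eps a)$, which forces $\hat p\leq p$ while only enlarging the additive loss by a further $O(\eps a)$. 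In your setup the analogous fix is to subtract $2\eps\rho W^\star$ (an upper bound on $\rho$ times the total weight-rounding error for that layer pair) when converting $(\hat w,\hat q)$ back to $(\hat w,\hat p)$; once you add this correction, the error and dominance accounting you sketch goes through and the rest is routine. Without it, the proof as written does not establish the ``dominated by $S_1+S_2$'' half of the lemma.
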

\begin{proof}
    It suffices to consider the case where $\rho = 1$. If $\rho \neq 1$, we can replace every tuple $(w, p)$ with $(w, \frac{p}{\rho})$, and replace $\Delta$ with $\frac{\Delta}{\rho}$. 
    
    Let $a \in [A, B]$.  We first describe how to approximate $(S_1 \oplus S_2)\cap ([0 ,a] \times [0, + \infty])$ with additive error $O(\eps a)$.

    Let $S'_1$ and $S'_2$ be the sets obtained from $S_1$ and $S_2$, respectively, by replacing every tuple $(w, p)$ with $(w, p - w)$. The range of item efficiencies of $S'_1$ and $S'_2$ becomes $[0, \Delta]$. We assume that every tuple $(w, p) \in S'_1 \cup S'_2$ has $w \leq a$ since the tuples with $w > a$ can be safely ignored. For every tuple $(w, p)$, we round $w$ up to and round $p$ down to the nearest multiple of $\eps a$. Denote the resulting sets by $S''_1$ and $S''_2$.  By scaling, we can assume that every tuple in $S''_1 \cup S''_2$ is a tuple of integers from $[0, \frac{1}{\eps}] \times [0, \frac{\Delta}{\eps}]$. Then we can compute $S = S''_1 + S''_2$ using 2D-FFT in $\tilde{O}(\frac{1}{\eps} + (\frac{1}{\eps})^2\Delta)$ time. It is easy to see that $S$ is of size $\tilde{O}((\frac{1}{\eps})^2\Delta)$, and is dominated by $S'_1 + S'_2$ (since we round $w$ up and round $p$ down). Moreover, for each $(w, p) \in S'_1 + S'_2$, there is a tuple $(w', p') \in S$ such that $w\leq w' \leq w + 2\eps a$ and $p - 2\eps a \leq p' \leq p$, which implies that $p + w - 4\eps a \leq p' + w' - 2\eps a \leq p + w$.
    Now consider the set $S'$ obtained from $S$ by replacing each tuple $(w, p)$ with $(w, p + w - 2\eps a)$. One can verify that $S'$ is dominated by $S_1 + S_2$ and approximates $S_1 + S_2$ with additive error $O(\eps a)$.  Since $S_1 \oplus S_2 \subseteq S_1 + S_2$, we have that $S'$ also approximates $S_1 \oplus S_2$ with additive error $O(\eps a)$. Then we remove all the dominated tuples from $S'$. This can be done in $O(|S'|\log |S'|) = \tilde{O}((\frac{1}{\eps})^2\Delta)$ time. After this, the size of $S'$ becomes $O(\frac{1}{\eps})$ because for each integer $w \in [0,\frac{1}{\eps}]$, there can be at most one tuple with weight $w$ in $S'$.

    To finish the proof, we repeat the above procedure for every $a \in [A, B]$ that is a power of $2$, and take the union of the $\log \frac{B}{A}$ resulting sets. Note that for a tuple $(w, p) \in S_1 + S_2$ with $w \in [a/2, a]$, we have that $p \geq \rho w = w \geq a/2$.  Therefore, the additive error $O(\eps a)$ implies an approximation factor of $1 + O(\eps)$. The running time increases by a factor of $\log \frac{B}{A}$.
\end{proof}

Now we are ready to prove Lemma~\ref{lem:alg-for-similar-eff}. 
\begin{proof}[Proof of Lemma~\ref{lem:alg-for-similar-eff}]
If $\Delta/\rho< \eps$, we can round down item profits so that the efficiencies of all items are $\rho$. This increases the approximation factor by at most $1 + \eps$. Now the problem degenerates to Subset Sum, and therefore, we can compute a set $S$ that approximates ${\cal S}^+(I)$ in $\tilde{O}(n + \frac{1}{\eps})$ time by the weak approximation scheme\footnote{The weak approximation scheme in~\cite{CLMZ24cSTOCPartition} actually returns a set $\tilde{S}$ that approximates ${\cal S}^+(I_j)$ with additive error $\eps t$. This can be easily modified to return a set that approximates ${\cal S}^+(I_j)$ with factor $1+\eps$ by repeating it for $t = 1,2,4,8,\ldots$, and taking the union of the resulting sets.}.

Now we assume $\Delta/\rho\geq\eps$. By rounding down item profits, we assume that the efficiencies of items are powers of $1 + \eps$. This increases the approximation factor by at most $1 + \eps$. So there are at most $m = \log_{1+ \eps} \frac{\Delta}{\rho} = O(\frac{\Delta}{\eps\rho})$ distinct efficiencies.  We divide the items into $m$ groups $I_1, \ldots, I_m$ by their efficiencies.  For each group $I_j$, since the items have the same efficiency, and again, we can compute a set $S_j$ that approximates ${\cal S}^+(I_j)$ in $\tilde{O}(|I_j| + \frac{1}{\eps})$ time by the weak approximation scheme
    for Subset Sum~\cite{CLMZ24cSTOCPartition}. The total time cost for computing all sets $S_j$ is $\tilde{O}(n + \frac{m}{\eps} + (\frac{1}{\eps})^2 \frac{\Delta}{\rho})=\tilde{O}(n + (\frac{1}{\eps})^2 \frac{\Delta}{\rho})$.

    Then we shall merge $S_1, \ldots, S_m$ using a divide-and-conquer approach. The tuples in $S_j$ have the same efficiency (except that the tuple $(0,0)$ has efficiency $0$), which is the efficiency of the items in $I_j$ . Let $\rho_j$ be this efficiency. Without loss of generality, assume that $\rho_1 < \cdots < \rho_m$. We recursively (and approximately) compute $S_1 \oplus \cdots \oplus S_{m/2}$ and $S_{m/2 + 1} \oplus \cdots \oplus S_m$, and then merge the resulting two sets via Lemma~\ref{lem:2d-fft}. The recursion tree has $\log m$ levels. One can verify that the total merging time in each level is $\tilde{O}(\frac{m}{\eps} + (\frac{1}{\eps})^2 \frac{\rho_m - \rho_1}{\rho_1})$, which is bounded by $\tilde{O}((\frac{1}{\eps})^2 \frac{\Delta}{\rho})$ since $\rho \leq \rho_j \leq \rho + \Delta$. The approximation factor is $(1 + \eps)^{\log m} = 1 + O(\eps \log m)$. We can adjust $\eps$ by a factor of $\log m$, which increases the running time by a polylogarithmic factor.
\end{proof}

The Lemma~\ref{lem:alg-for-similar-eff} immediately implies the following.
\begin{corollary}\label{coro:approx-add}
    Let $I$ be a set of $n$ items from $[1, 2] \times [0, +\infty]$ whose efficiencies are in the range $[\rho, \rho + \Delta]$. In $\tilde{O}(n + \frac{1}{\eps} + (\frac{1}{\eps})^2 \frac{\Delta}{\rho})$ time and with high probability, we can compute a set of size $\tilde{O}(\frac{1}{\eps})$ that approximates each tuple $(w,p) \in {\cal S}^+(I)$ with additive error $(\eps w, \eps p)$. 
\end{corollary}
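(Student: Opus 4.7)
The plan is to derive Corollary~\ref{coro:approx-add} as an almost immediate consequence of Lemma~\ref{lem:alg-for-similar-eff}, which already produces a set $S$ of size $\tilde O(1/\eps)$ that approximates $\mathcal{S}^+(I)$ with multiplicative factor $1 + O(\eps)$ within the claimed time budget. The bridge between the two statements is the Observation in Section~\ref{sec:pre}: a multiplicative $(1+\eps)$-approximation of a set $S$ is automatically an additive $(\eps w, \eps p)$-approximation of each individual tuple $(w,p)\in S$.

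More concretely, I would proceed as follows. First, choose a constant $c>0$ so that the $O(\eps)$ factor hidden in Lemma~\ref{lem:alg-for-similar-eff} is at most $c\eps$ for all sufficiently small $\eps$. Apply Lemma~\ref{lem:alg-for-similar-eff} with the parameter $\eps' := \eps/c$ in place of $\eps$. This produces a set $S$ of size $\tilde O(1/\eps') = \tilde O(1/\eps)$ that approximates $\mathcal{S}^+(I)$ with factor $1 + O(\eps') = 1 + \eps$, and it does so in time
\[
    \tilde O\!\left(n + \tfrac{1}{\eps'} + \bigl(\tfrac{1}{\eps'}\bigr)^2 \tfrac{\Delta}{\rho}\right)
    = \tilde O\!\left(n + \tfrac{1}{\eps} + \bigl(\tfrac{1}{\eps}\bigr)^2 \tfrac{\Delta}{\rho}\right),
\]
since absorbing the constant $c$ into the $\tilde O(\cdot)$ notation does not affect the bound.

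Second, appeal to the Observation: for every $(w,p)\in \mathcal{S}^+(I)$ there exists $(w',p')\in S$ with $w' \leq (1+\eps)w$ and $p' \geq p/(1+\eps)$, hence $w' - w \leq \eps w$ and $p - p' \leq \eps p$. This is precisely the additive $(\eps w, \eps p)$ guarantee required by the corollary. The high-probability success and the domination of $S$ by $\mathcal{S}(I)$ (implicit in the rounding conventions noted at the end of Section~\ref{sec:reduce}) are inherited verbatim from Lemma~\ref{lem:alg-for-similar-eff}.

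There is no real obstacle here — the only subtlety is that the additive statement is tuple-wise rather than set-wise, but that is exactly what the Observation already delivers from a multiplicative guarantee, so no new argument is needed beyond a constant-factor rescaling of $\eps$.
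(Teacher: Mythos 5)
Your proposal is correct and matches the paper's intent exactly: the paper states that Lemma~\ref{lem:alg-for-similar-eff} ``immediately implies'' the corollary, and your argument --- rescaling $\eps$ by a constant, then invoking the Observation to convert the multiplicative $(1+\eps)$-guarantee into a tuple-wise additive $(\eps w, \eps p)$-guarantee --- is precisely the implicit derivation the paper has in mind.
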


In the above corollary, the additive error is proportional to the value of the entry of the tuple: a large entry means a large additive error. We can also achieve the contrary and approximate in a way so that a large entry means a small additive error. Basically, this can be done by (approximately) solving a problem symmetric to Knapsack: instead of determining which items should be selected, we determine which items should not be selected. 

\begin{corollary}\label{coro:approx-del}
    Let $I$ be a set of $n$ items from $[1, 2] \times [0, +\infty]$ whose efficiencies are in the range $[\rho, \rho + \Delta]$. In $\tilde{O}(n + \frac{1}{\eps} + (\frac{1}{\eps})^2 \frac{\Delta}{\rho})$ time and with high probability, we can compute a set of size $\tilde{O}(\frac{1}{\eps})$ that approximates ${\cal S}^+(I)$ with additive error $(\eps(w(I) - w), \eps(p(I) - p))$.
\end{corollary}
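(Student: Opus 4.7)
The plan is to mirror the proof of Lemma~\ref{lem:alg-for-similar-eff} step by step, replacing every forward-approximation ingredient by its complement-symmetric counterpart. Write $W = w(I)$, $P = p(I)$, $W_j = w(I_j)$, $P_j = p(I_j)$, and define the involutions $\phi(w, p) := (W - w, P - p)$ and $\phi_j(w, p) := (W_j - w, P_j - p)$; taking complementary subsets immediately gives $\phi(\mathcal{S}(I)) = \mathcal{S}(I)$ and $\phi_j(\mathcal{S}(I_j)) = \mathcal{S}(I_j)$. I first round efficiencies to powers of $1+\eps$ and partition $I$ into $m = \tilde{O}(\Delta/(\eps\rho))$ groups $I_j$ of common efficiency $\rho_j$, exactly as in Lemma~\ref{lem:alg-for-similar-eff}; within a group, $P_j = \rho_j W_j$ and $\mathcal{S}^+(I_j) = \{(s, \rho_j s) : s \in S(I_j)\}$ reduces to a 1D Subset Sum.

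For each group I exploit the complement symmetry $S(I_j) = W_j - S(I_j)$. Running Chen et al.'s subset-sum weak-approximation scheme~\cite{CLMZ24cSTOCPartition} on $I_j$ together with the doubling trick used in the proof of Lemma~\ref{lem:alg-for-similar-eff} yields a set $T_j \subseteq S(I_j)$ such that, for every $s_c \in S(I_j)$, there is $s_c^\star \in T_j$ with $s_c \leq s_c^\star \leq (1 + \eps) s_c$. Setting $\tilde{S}_j := \{(W_j - s_c^\star, \rho_j(W_j - s_c^\star)) : s_c^\star \in T_j\}$ and substituting $s = W_j - s_c$ then shows that for every $(s, \rho_j s) \in \mathcal{S}^+(I_j)$ there is $(s'', \rho_j s'') \in \tilde{S}_j$ with $s - \eps(W_j - s) \leq s'' \leq s$, which is exactly the required additive approximation with error $(\eps(W_j - w), \eps(P_j - p))$. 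This costs $\tilde{O}(|I_j| + 1/\eps)$ per group and $\tilde{O}(n + 1/\eps + (1/\eps)^2\Delta/\rho)$ in total.

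To merge $\tilde{S}_1, \ldots, \tilde{S}_m$ without inflating the per-tuple error into the forward-style $(\eps w, \eps p)$ naturally produced by Lemma~\ref{lem:2d-fft}, I would work in complement space. Observe that $\tilde{S}_j^c := \phi_j(\tilde{S}_j) = \{(s_c^\star, \rho_j s_c^\star) : s_c^\star \in T_j\}$ sits inside $\mathcal{S}^+(I_j)$ --- legitimate because $\phi_j$ preserves $\mathcal{S}^+(I_j)$ within a single-efficiency group --- and serves as a \emph{reverse} factor-$(1+\eps)$ approximation of $\mathcal{S}^+(I_j)$: every $(w, p) \in \mathcal{S}^+(I_j)$ admits some $(w', p') \in \tilde{S}_j^c$ with $w \leq w' \leq (1+\eps) w$ and $p \leq p' \leq (1+\eps) p$. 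Running a reverse variant of the divide-and-conquer merger from the proof of Lemma~\ref{lem:alg-for-similar-eff} (built on Lemma~\ref{lem:2d-fft}) on the $\tilde{S}_j^c$'s then produces a set $\tilde{S}^c$ of size $\tilde{O}(1/\eps)$ that reverse-approximates $\phi(\mathcal{S}^+(I))$ with factor $1 + O(\eps)$. The output is $S := \phi(\tilde{S}^c)$; unwinding via the identity $\phi \circ \phi = \mathrm{id}$ verifies the required additive error $(\eps(W-w), \eps(P-p))$ with respect to $\mathcal{S}^+(I)$.

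The main obstacle is justifying the reverse variant of Lemma~\ref{lem:2d-fft}: one must check that flipping the direction of rounding (weights rounded \emph{down}, profits rounded \emph{up}) and of the Pareto pruning preserves both the running time and the output-size guarantees. A clean route is to reduce reverse approximation to forward approximation by the linear transformation $(w, p) \mapsto (B - w, D - p)$ for suitable global upper bounds $B$ and $D$, which leaves the ratios $B/A$ and $D/C$ governing the running time of Lemma~\ref{lem:2d-fft} unchanged; once transformed, the original forward lemma applies verbatim and its divide-and-conquer use in the proof of Lemma~\ref{lem:alg-for-similar-eff} carries over without loss.
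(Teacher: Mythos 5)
Your high-level plan is the right one: exploit the complement bijection $I' \subseteq I \leftrightarrow I \setminus I'$, use the per-group complement symmetry $S(I_j) = W_j - S(I_j)$ together with the subset-sum oracle, and finish by transforming back. That matches the paper's intent, and your per-group sets $\tilde{S}_j$ do achieve the claimed additive error $(\eps(W_j - w), \eps(P_j - p))$. The gap is in the merging step, which you yourself flag as ``the main obstacle'': your proposed ``clean route'' via the affine map $(w,p) \mapsto (B - w, D - p)$ does not reduce the reverse merge to the forward one. Lemma~\ref{lem:2d-fft} and its divide-and-conquer use inside Lemma~\ref{lem:alg-for-similar-eff} rely on the efficiencies of all tuples lying in a tight window $[\rho, \rho + \Delta]$: this is exactly what lets the proof substitute $p \mapsto p - \rho w$ to compress the profit axis to length $\Delta w$ before the 2D-FFT. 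Under your affine map, a tuple of efficiency $\rho'$ with weight $w$ becomes $(B - w, D - \rho' w)$, whose new efficiency $(D - \rho' w)/(B - w)$ depends on $w$, not just on $\rho'$, and ranges over an unbounded interval as $w$ approaches $B$. So the ratio $\Delta/\rho$ governing the FFT cost is not preserved, and the running-time claim breaks.

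The paper sidesteps the need for any ``reverse variant'' of the machinery by using a different bijection: the coordinate swap $(w,p) \mapsto (p,w)$ rather than the complement map $(w,p) \mapsto (B-w, D-p)$. Geometrically, $\mathcal{S}^-(I)$ (the lower Pareto envelope) becomes the upper Pareto envelope of the swapped instance $I' = \{(p,w) : (w,p) \in I\}$, i.e., $\mathcal{S}^+(I')$. Under the swap, an item of efficiency $\rho' \in [\rho, \rho + \Delta]$ becomes one of efficiency $1/\rho' \in [1/(\rho + \Delta), 1/\rho]$, and crucially the \emph{relative} width satisfies $\Delta'/\rho' = \Delta/\rho$, so Corollary~\ref{coro:approx-add} applies to $I'$ as a black box with the same running time. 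One then maps the output $S$ back via $(p'', w'') \mapsto (w(I) - w'', p(I) - p'')$ and checks the additive error $(\eps(w(I) - w), \eps(p(I) - p))$ directly; this bypasses the separate per-group construction, the need for a reversed rounding direction, and the reversed Pareto pruning entirely. Replacing your affine complement with this coordinate swap would repair the argument.
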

\begin{proof}
    Define
    \[
        {\cal S}^-(I) = \{\text{$(w, p) \in {\cal S}(I)$ : $(w,p)$ does not dominate any other tuple in ${\cal S}(I)$}\}.
    \]
    
    From the perspective of Knapsack, every tuple in ${\cal S}^+(I)$ corresponds to an optimal solution of Knapsack, while every tuple $(w,p)$ in ${\cal S}^-(I)$ corresponds to an optimal solution of the symmetric problem that seeks for the minimum total profit that can be achieved using capacity at least $w$.  It is easy to observe that
    \[
        {\cal S}^+(I) = \{(w(I) - w, p(I) - p) : (w, p) \in {\cal S}^-(I)\}.
    \]
    Therefore, to approximate ${\cal S}^+(I)$, it suffices to approximate ${\cal S}^-(I)\}$.

    Geometrically, ${\cal S}^+(I)$ represents the upper envelope of ${\cal S}(I)$, while ${\cal S}^-(I)$ represents the lower envelope.  But we can make ${\cal S}^-(I)$ be the upper envelope by exchanging the $x$-coordinate and $y$-coordinate. Then we can approximate ${\cal S}^-(I)$ using Corollary~\ref{coro:approx-add}.

    More precisely, let $I' = \{(p, w) : (w, p) \in I\}$. The efficiencies of the items in $I'$ are in the range $[\rho', \rho' + \Delta']$, where $\rho' = \frac{1}{\rho + \Delta}$ and $\rho' + \Delta' = \frac{1}{\rho}$. One can verify that $\frac{\Delta'}{\rho'} = \frac{\Delta}{\rho}$. Then we can compute a set $S$ of size $\tilde{O}(\frac{1}{\eps})$ that approximates each tuple $(p,w) \in {\cal S}^+(I')$  with additive error $(\eps p, \eps w)$ via Corollary~\ref{coro:approx-add} in $\tilde{O}(n + \frac{1}{\eps}+ (\frac{1}{\eps})^2\frac{\Delta}{\rho})$ time. Let 
    \[
        S'  = \{ (w(I) - w, p(I) - p) : (p, w)\in S\}.
    \]
    One can verify that $S'$ approximates each tuple $(w, p) \in {\cal S}^+(I)$ with additive error $(\eps(w(I) - w), \eps(p(I) - p))$.
\end{proof}

\section{An Algorithm with Exponent 11/6}\label{sec:weak}
To illustrate our main idea, we first present a simpler algorithm that runs in $\tilde{O}(n + (\frac{1}{\eps})^{11/6})$ time.  When $\alpha \geq (\frac{1}{\eps})^{2/3}$, the algorithm in Lemma~\ref{lem:alg-for-large-alpha} already runs in $\tilde{O}(n + (\frac{1}{\eps})^{11/6})$ time. It remains to consider the case where $\frac{1}{2}\leq \alpha \leq (\frac{1}{\eps})^{2/3}$. In the rest of this section, we assume that $I = \{(w_1, p_1), \ldots, (w_n, p_n)\}$ and that $\frac{p_1}{w_1} \geq \cdots \geq \frac{p_n}{w_n}$.

Let $\tau$ be a parameter that will be specified later.  We first partition $I$ into groups as follows. For the items $(w_1, p_1), \ldots, (w_n, p_n)$ in $I$, we bundle every $\tau$ items as a group until we obtain $\lceil\frac{1}{\alpha\eps\tau}\rceil + 1$ groups. For the remaining items, say $\{(w_i, p_i), \ldots, (w_n, p_n)\}$, we further divide them into two groups $\{(w_i, p_i), \ldots, (w_{i'}, p_{i'})\}$ and $\{(w_{i'+ 1}, p_{i'+1}), \ldots, (w_n, p_n)\}$, where $i'$ is the maximum index such that $\frac{p_i}{w_i} - \frac{p_{i'}}{w_{i'}} \leq \frac{1}{\tau}$. At the end, we obtain $m = \lceil\frac{1}{\alpha\eps\tau}\rceil + 3$ groups $I_1, I_2, \ldots, I_m$. For simplicity, we assume that none of these groups is empty. This assumption is without loss of generality, since empty groups only make things easier.

For each group $I_j$, we define $\Delta_j$ to be the maximum difference between the efficiencies of the items in $I_j$. That is,
\[
    \Delta_j = \max\left\{\frac{p}{w} - \frac{p'}{w'} : (w, p), (w', p') \in I_j\right\}. 
\]
By the way we construct the groups, we have that $\sum_j \Delta_j \leq \frac{3}{2}$.  

We say a group $I_j$ is good if $\Delta_j \leq \frac{1}{\tau}$, and bad otherwise. 

Note that the efficiency of any item in $I$ is within the range $[\frac{1}{2}, 2]$. For each good group $I_j$, its item efficiencies are in the range $[\rho, \rho + \frac{1}{\tau}]$ for some constant $\rho \in [\frac{1}{2},2]$, so we can compute a set $S_j$ of $\tilde{O}(\frac{1}{\eps})$ that approximates each tuple $(w, p) \in \mathcal{S}^+(I_j)$ with additive error $(\eps w, \eps p)$ via Corollary~\ref{coro:approx-add} in $O(|I_j| + \frac{1}{\eps} + (\frac{1}{\eps})^2\frac{1}{\tau})$ time. Since the items are from $[1,2]\times [1,2]$, for any tuple in $\mathcal{S}^+(I_j)$, its weight and profit are of the same magnitude, so are the corresponding additive errors. We can say that $S_j$ approximates each tuple $(w, p) \in \mathcal{S}^+(I_j)$ with additive error $O(\eps w)$.

For bad groups, we shall approximate them less accurately: we shall use $\frac{1}{\alpha\tau}$ instead of $\eps$ as the accuracy parameter. (Later we will show that we can still obtain a good bound on the total additive error using the proximity bound in Lemma~\ref{lem:proximity}.)  We compute a set $S'_j$ of size $\tilde{O}(\alpha\tau)$ that approximates each tuple $(w, p) \in \mathcal{S}^+(I_j)$ with additive error $(\frac{w}{\alpha \tau}, \frac{p}{\alpha \tau})$ via Corollary~\ref{coro:approx-add} in $O(|I_j| + \alpha\tau + \alpha^2\tau^2\Delta_j)$ time. We also compute a set $S''_j$ of size $\tilde{O}(\alpha\tau)$ that approximates each $(w, p) \in \mathcal{S}^+(I_j)$ with additive error $(\frac{w(I)-w}{\alpha \tau}, \frac{p(I) - p}{\alpha \tau})$ via Corollary~\ref{coro:approx-del} in $O(|I_j| + \alpha\tau + \alpha^2\tau^2\Delta_j)$ time. Again, since the weight and the profit are of the same magnitude, we can say that $S'_j$ approximates each tuple $(w, p) \in \mathcal{S}^+(I_j)$ with additive error $O(\frac{1}{\alpha \tau}\cdot w)$, and that $S''_j$ approximates each tuple $(w, p) \in \mathcal{S}^+(I_j)$ with additive error $O(\frac{1}{\alpha \tau}\cdot(w(I) - w))$. Now consider the set $S_j = S'_j \cup S''_j$. One can verify that $S_j$ approximates each tuple $(w, p) \in \mathcal{S}^+(I_j)$ with additive error $O(\frac{1}{\alpha \tau}\cdot \min\{w, (w(I) - w)\})$.

The last step is to compute a set $S$ of size $\tilde{O}(\frac{1}{\eps})$ that approximates $S_1 \oplus \cdots \oplus S_m$ with factor $1+O(\eps)$ via Lemma~\ref{lem:merging}. Note that each set $S_j$ is of size $\tilde{O}(\frac{1}{\eps})$ or $\tilde{O}(\alpha\tau)$. This step takes time $\tilde{O}(\frac{m}{\eps} + m\alpha\tau + (\frac{1}{\eps})^{3/2}m)$.

\paragraph*{Bounding Total Time Cost}
Partitioning the item into $I_1, \ldots, I_m$ can be done by sorting and scanning, and therefore, it takes only $O(n\log n + m)$ time. Each good group costs $O(|I_j| + \frac{1}{\eps} + (\frac{1}{\eps})^2\frac{1}{\tau})$ time. Since there are at most $m$ good groups, the good groups cost  total time
\[
    \tilde{O}\left(n + \frac{m}{\eps} + (\frac{1}{\eps})^2\frac{m}{\tau}\right).
\]
Each bad group $I_j$ costs $O(|I_j| + \alpha\tau + \alpha^2\tau^2\Delta_j)$ time, so the total time cost for bad groups is
\[
    \tilde{O}(n + m\alpha\tau + \alpha^2\tau^2\sum_j\Delta_j) \leq \tilde{O}(n + m\alpha\tau + \alpha^2\tau^2).
\]
The inequality is due to that $\sum_j\Delta_j \leq \frac{3}{2}$. Merging all the sets $S_j$ via Lemma~\ref{lem:merging} costs $\tilde{O}(\frac{m}{\eps} + m\alpha\tau + (\frac{1}{\eps})^{3/2}m)$ time. Taking the sum of all these time costs, we have that the running time of the algorithm is
\[
    \tilde{O}\left(n  + \frac{m}{\eps} + m\alpha\tau + (\frac{1}{\eps})^2\frac{m}{\tau} + \alpha^2\tau^2 + (\frac{1}{\eps})^{3/2}m\right).
\]
Recall that $m = \lceil \frac{1}{\alpha\eps\tau} \rceil$ and that $\frac{1}{2} \leq \alpha \leq (\frac{1}{\eps})^{2/3}$. Set $\tau = \lceil(\frac{1}{\eps})^\frac{11}{12}\frac{1}{\alpha}\rceil$. One can verify that the running time is $\tilde{O}(n + (\frac{1}{\eps})^{11/6})$.

\paragraph*{Bounding Additive Error}
To prove the correctness of the algorithm, it suffices to show that the set $S$ returned by the algorithm approximates every tuple in ${\cal S}^+(I; \frac{1}{\alpha\eps})$ with additive error $O(\frac{1}{\alpha})$. 

Let $(w^*, p^*)$ be an arbitrary tuple in ${\cal S}^+(I; \frac{1}{\alpha\eps})$.  Let $(w_b, p_b)$ be the breaking item with respect to $w^*$. Let $\rho_b = \frac{p_b}{w_b}$ be the efficiency of the breaking item. We first show that the proximity bound can be applied to all but four bad groups.

\begin{lemma}\label{lem:bound-median}
    At most four bad groups may contain an item whose efficiency is within the range $[\rho_b - \frac{1}{\tau}, \rho_b + \frac{1}{\tau}]$. Moreover, each of these groups is of size $\tau$.
\end{lemma}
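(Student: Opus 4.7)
The plan is to prove the two parts of the lemma using the following key observation: any bad group containing an item in the window $W := [\rho_b - \tfrac{1}{\tau}, \rho_b + \tfrac{1}{\tau}]$ must come from $I_1, \ldots, I_{m-2}$, the groups that by construction have size exactly $\tau$. Once this is established, the ``size $\tau$'' part is immediate, and the count reduces to a geometric packing argument on the sorted efficiency axis.

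The first step is to rule out $I_{m-1}$ and $I_m$. The group $I_{m-1}$ is always good because, by its construction, the items in it differ in efficiency by at most $\tfrac{1}{\tau}$, hence $\Delta_{m-1} \leq \tfrac{1}{\tau}$. To rule out $I_m$, I would bound the breaking index $b$. Since $(w^*,p^*) \in {\cal S}^+(I; \tfrac{1}{\alpha\eps})$ and every item has weight at least $1$, we get $b \leq w^* + 1 \leq \tfrac{1}{\alpha\eps} + 1$. On the other hand, the first $m-2$ groups together contain $\tau(\lceil \tfrac{1}{\alpha\eps\tau}\rceil + 1) \geq \tfrac{1}{\alpha\eps} + \tau$ items, namely the top items in the sorted order. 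Letting $K$ denote this total, we have $b \leq K$, so $e_{K+1} \leq e_b = \rho_b$. By the way the split between $I_{m-1}$ and $I_m$ is defined, every item in $I_m$ has efficiency strictly less than $e_{K+1} - \tfrac{1}{\tau} \leq \rho_b - \tfrac{1}{\tau}$, and therefore does not lie in $W$. This already implies that every bad group containing an item of $W$ lies in $I_1, \ldots, I_{m-2}$ and hence has size exactly $\tau$, settling the second claim.

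For the count, I would use that items with efficiency in $W$ form a contiguous block in the sorted order, so the groups among $I_1, \ldots, I_{m-2}$ containing such items form a consecutive sequence $G_1, \ldots, G_k$. Every \emph{middle} group $G_2, \ldots, G_{k-1}$ consists entirely of items from $W$, so its whole efficiency range lies in $W$. Since, by the sorted ordering, the efficiency ranges of consecutive groups are disjoint except possibly at a boundary point, the sum of $\Delta_j$ over these middle groups is at most the width $\tfrac{2}{\tau}$ of $W$. A middle group is bad iff $\Delta_j > \tfrac{1}{\tau}$, so at most one middle group can be bad. Counting in also the two boundary groups $G_1$ and $G_k$, at most $3 \leq 4$ bad groups contain an item of $W$, as required.

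The only mildly delicate point is the breaking-index bound: one needs to combine $w^* \leq \tfrac{1}{\alpha\eps}$ with the lower bound on $K$ obtained from the ceiling $\lceil \tfrac{1}{\alpha\eps\tau}\rceil + 1$ to conclude $b \leq K$. Everything else follows from the sorted structure and an elementary packing argument, so I do not expect any serious obstacle.
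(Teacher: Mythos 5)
Your proof is correct and follows essentially the same two-part structure as the paper's: you rule out $I_{m-1}$ (it is good by construction) and $I_m$ (via the breaking-index bound $b \leq \frac{1}{\alpha\eps}+1$ combined with the size of the first $m-2$ groups and the efficiency gap forced between $I_{m-1}$ and $I_m$), which already gives the ``size $\tau$'' claim, and you then bound the number of bad groups touching $W$ using the fact that bad groups have $\Delta_j > \frac{1}{\tau}$ while $W$ has width $\frac{2}{\tau}$. The only difference is in how the count is phrased: the paper argues by contradiction (five items $i_1,\dots,i_5$ from five bad groups would force an efficiency spread $\geq \frac{3}{\tau}$, exceeding the width of $W$), whereas you make the interval-packing explicit — the consecutive groups meeting $W$ form a block $G_1,\dots,G_k$, the middle groups lie entirely in $W$ and have essentially disjoint efficiency ranges, so at most one of them can be bad, giving the slightly tighter bound of $3 \leq 4$. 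Both versions hinge on the same observation, and your packing framing is arguably cleaner; the improved constant is harmless downstream since the lemma and equation~\eqref{eq:bound-median-groups} only need some fixed constant.
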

\begin{proof}
    We first show that there are at most four bad groups each containing an item whose efficiency is within the range $[\rho_b - \frac{1}{\tau}, \rho_b + \frac{1}{\tau}]$.  Suppose, for the sake of contradiction, that there are five bad groups each containing an item whose efficiency is in the range $[\rho_b - \frac{1}{\tau}, \rho_b + \frac{1}{\tau}]$.  Let $i_1, i_2, i_3, i_4, i_5$ be these five items. Without loss of generality, assume that $\frac{p_{i_1}}{w_{i_1}} \geq \cdots \geq \frac{p_{i_5}}{w_{i_5}}$. Since they belong to five bad groups, and within each bad group, the maximum and the minimum item efficiency differ by at least $\frac{1}{\tau}$, we have that $\frac{p_{i_1}}{w_{i_1}} - \frac{p_{i_5}}{w_{i_5}} \geq \frac{3}{\tau}$. But this is impossible since the efficiencies of $i_1$ and $i_5$ in the range $[\rho_b - \frac{1}{\tau}, \rho_b + \frac{1}{\tau}]$ whose length is $\frac{2}{\tau}$.

    Note that by our construction, $I_{m-1}$ is a good group, so it cannot be one of these four bad groups. 
    Since the first $m-2$ groups are of size $\tau$, to finish the proof, it suffices to show that $I_{m}$ cannot have an item whose efficiency is within the range $[\rho_b - \frac{1}{\tau}, \rho_b + \frac{1}{\tau}]$.  Since $(w^*, p^*) \in {\cal S}^+(I; \frac{1}{\alpha\eps})$, we have that $w^* \leq \frac{1}{\alpha\eps}$. Since every item is from $[1,2]\times[1,2]$, we have $b \leq \frac{1}{\alpha\eps} + 1$. Recall that $m = \lceil\frac{1}{\alpha\eps\tau}\rceil + 3$ and that the first $m-2$ groups each has $\tau$ items. So it must be that $(w_b, p_b) \in I_j$ for some $j \leq m-2$. Let $i'$ be the first item (the item with the highest efficiency) in $I_{m-1}$. We have $ \frac{p_{i'}}{w_{i'}} \leq \frac{p_b}{w_b}$. Let $i''$ be the first item in $I_m$.  By the construction of $I_{m-1}$, we have that 
    \(
        \frac{p_{i''}}{w_{i''}} < \frac{p_{i'}}{w_{i'}} -  \frac{1}{\tau}.
    \)
    Therefore, 
    \(
        \frac{p_{i''}}{w_{i''}} < \frac{p_{b}}{w_{b}} -  \frac{1}{\tau}.
    \)
    This implies that every item in $I_m$ has its efficiency strictly less than $\frac{p_{b}}{w_{b}} -  \frac{1}{\tau}$.
\end{proof}

Now we are ready to bound the total additive error.
\begin{lemma}\label{lem:total-err}
    $S_1 \oplus S_2 \oplus \cdots \oplus S_m$ approximates $(w^*, p^*)$ with additive error $O(\frac{1}{\alpha})$.
\end{lemma}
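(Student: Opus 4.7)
Fix a witness set $I^* \subseteq I$ with $(w(I^*), p(I^*)) = (w^*, p^*)$, and let $I^*_j := I^* \cap I_j$. An exchange argument shows that $(w(I^*_j), p(I^*_j)) \in \mathcal{S}^+(I_j)$ for every $j$: if some $I'_j \subseteq I_j$ strictly dominated it, swapping $I^*_j$ for $I'_j$ inside $I^*$ would strictly dominate $(w^*, p^*)$, contradicting $(w^*, p^*) \in \mathcal{S}^+(I; \tfrac{1}{\alpha\eps})$. Hence the approximation guarantee of each $S_j$ applies directly to $(w(I^*_j), p(I^*_j))$. Picking the corresponding approximator $(\tilde{w}_j, \tilde{p}_j) \in S_j$ for each $j$, the tuple $\bigl(\sum_j \tilde{w}_j, \sum_j \tilde{p}_j\bigr)$ lies in $S_1 + \cdots + S_m$ and is therefore dominated by some tuple in $S_1 \oplus \cdots \oplus S_m$ that approximates $(w^*, p^*)$ with additive error at most the sum of the per-group errors.

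For good groups $I_j$, Corollary~\ref{coro:approx-add} gives per-group error $O(\eps \cdot w(I^*_j))$, so their total contribution is $O\bigl(\eps \sum_{j \text{ good}} w(I^*_j)\bigr) \leq O(\eps \cdot w^*) \leq O(1/\alpha)$, using $w^* \leq 1/(\alpha\eps)$. For bad groups, the construction $S_j = S'_j \cup S''_j$ yields an approximator whose error for $(w(I^*_j), p(I^*_j))$ is $O\bigl(\tfrac{1}{\alpha\tau} \min\{w(I^*_j),\, w(I_j) - w(I^*_j)\}\bigr)$, by taking whichever of $S'_j$ or $S''_j$ is tighter. I would then classify every bad group relative to the breaking efficiency $\rho_b$: an \emph{above-group} has all items of efficiency $> \rho_b + \tfrac{1}{\tau}$, a \emph{below-group} has all items of efficiency $< \rho_b - \tfrac{1}{\tau}$, and a \emph{middle-group} contains an item of efficiency in $[\rho_b - \tfrac{1}{\tau}, \rho_b + \tfrac{1}{\tau}]$. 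Since items within each group form a contiguous interval of efficiencies in the sorted order, every bad group falls into exactly one of these classes, and by Lemma~\ref{lem:bound-median} there are at most four middle-groups, each of size $\tau$.

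For the union $I_+ := \bigcup_{j \text{ above}} I_j$, Lemma~\ref{lem:proximity}(ii) with $\Delta = 1/\tau$ gives
\[
    \sum_{j \text{ above}} \bigl(w(I_j) - w(I^*_j)\bigr) \;=\; w(I_+ \setminus I^*) \;\leq\; 2\tau,
\]
so above-groups jointly contribute error $O\bigl(\tfrac{1}{\alpha\tau}\bigr) \cdot 2\tau = O(\tfrac{1}{\alpha})$. A symmetric application of Lemma~\ref{lem:proximity}(i) handles the below-groups with the same bound. Each middle-group has $w(I_j) \leq 2\tau$ and thus contributes $O(\tfrac{2\tau}{\alpha\tau}) = O(\tfrac{1}{\alpha})$; the four such groups contribute $O(\tfrac{1}{\alpha})$ in total. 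Summing over good and bad groups yields the claimed $O(1/\alpha)$ additive error.

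The main obstacle is that a single bad group can have $w(I_j) \gg 1/\alpha$, so the naive per-group error $O\bigl(w(I_j)/(\alpha\tau)\bigr)$ is too large on its own. The resolution is to carry both complementary approximators $S'_j$ and $S''_j$ for each bad group and then invoke proximity \emph{in aggregate} across all like-sided groups, so that the saving of a factor $\tau$ materialises only after summing over $j$, exactly compensating the $1/\tau$ loss taken in the precision of each individual bad group.
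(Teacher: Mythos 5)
Your proposal is correct and takes essentially the same approach as the paper's proof: decompose $(w^*,p^*)$ across the groups, bound good-group error by $\eps w^* \leq 1/\alpha$, and bound bad-group error by classifying groups relative to $\rho_b$ and invoking the proximity bound in aggregate together with Lemma~\ref{lem:bound-median}. The only cosmetic difference is that you make the witness set $I^*$ and the exchange argument showing $(w(I^*_j),p(I^*_j))\in\mathcal{S}^+(I_j)$ explicit (the paper instead appeals implicitly to $\mathcal{S}^+(I)=\bigoplus_j\mathcal{S}^+(I_j)$), and you bound each of the three bad-group subcategories' error directly rather than first bounding $\sum_{j\in J_{\mathit{bad}}}\min\{w^*_j,w(I_j)-w^*_j\}\leq 12\tau$ as a single quantity.
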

\begin{proof}
    Since $I = I_1 \cup \cdots \cup I_m$ and $(w^*, p^*) \in \mathcal{S}^+(I)$, we have that there is a tuple $(w^*_j, p^*_j) \in \mathcal{S}^+(I_j)$ for each $j$ such that 
    \[
        (w^*, p^*) = (w^*_1, p^*_1) + \cdots + (w^*_m, p^*_m).
    \]
    Let $J_{\mathit{good}}$ be the indices of good groups. Let $J_{\mathit bad}$ be the set of indices of bad groups. Recall that for each $j \in J_{\mathit good}$, $S_j$ approximates $(w^*_j, p^*_j)$ with additive error $O(\eps w^*_j)$, and that for each $j \in J_{\mathit bad}$, the set $S_j$ approximates $(w^*_j, p^*_j)$ with additive error $O(\frac{1}{\alpha\tau} \min\{w^*_j, w(I_j) - w^*_j\})$.  Therefore, $S_1 \oplus S_2 \oplus \cdots \oplus S_m$ approximates $(w^*, p^*)$ with additive error
    \[
        O(\eps\cdot \sum_{j \in J_{\mathit{good}}} w^*_j + \frac{1}{\alpha\tau}\cdot\sum_{j \in J_{\mathit{bad}}} \min\{w^*_j, w(I_j) - w^*_j\}).
    \]
    It is easy to see that
    \begin{equation}\label{eq:bound-good-groups}
         \sum_{j \in J_{\mathit{good}}} w^*_j \leq w^* \leq \frac{1}{\alpha\eps}.
    \end{equation}
    It remains to bound $\sum_{j \in J_{\mathit{bad}}} \min\{w^*_j, w(I_j) - w^*_j\}$. We further divide $J_{\mathit{bad}}$ into $J^H_{\mathit{bad}} \cup J^M_{\mathit{bad}} \cup J^L_{\mathit{bad}}$ so that 
    \begin{itemize}
        \item For $j \in J^H_{\mathit{bad}}$, all items in $I_j$ have efficiency at least $\rho_b + \frac{1}{\tau}$

        \item For $j \in J^M_{\mathit{bad}}$, some items in $I_j$ has efficiency in the range $[\rho_b - \frac{1}{\tau}, \rho_b + \frac{1}{\tau}]$.

        \item For $j \in J^L_{\mathit{bad}}$, all items in $I_j$ have efficiency at most $\rho_b - \frac{1}{\tau}$.
    \end{itemize}
    The proximity Lemma~\ref{lem:proximity} implies that 
    \begin{equation}\label{eq:bound-high-and-low-groups}
        \sum_{j \in J^H_{\mathit{bad}}} (w(I_j) - w^*_j) \leq 2\tau \qquad\text{and}\qquad \sum_{j \in J^L_{\mathit{bad}}} w^*_j \leq 2\tau.
    \end{equation}
    Lemma~\ref{lem:bound-median} implies that $|J^M_{\mathit{bad}}| \leq 4$ and $I_j \leq \tau$ for each $j \in J^M_{\mathit{bad}}$. Recall that the item weights are at most $2$. We have
    \begin{equation}\label{eq:bound-median-groups}
        \sum_{j \in J^M_{\mathit{bad}}} w^*_j \leq 4 \cdot \tau \cdot 2 = 8\tau.
    \end{equation}
    In view of \eqref{eq:bound-high-and-low-groups} and \eqref{eq:bound-median-groups}, we have
    \begin{equation}\label{eq:bound-bad-groups}
        \sum_{j \in J_{\mathit{bad}}} \min\{w^*_j, w(I_j) - w^*_j\} \leq \sum_{j \in J^H_{\mathit{bad}}} (w(I_j) - w^*_j) + \sum_{j \in J^L_{\mathit{bad}}} w^*_j + \sum_{j \in J^M_{\mathit{bad}}} w^*_j \leq  12\tau.
    \end{equation}
    In view of \eqref{eq:bound-good-groups} and \eqref{eq:bound-bad-groups},  we have $S_1 \oplus S_2 \oplus \cdots \oplus S_m$ approximates $(w^*, p^*)$ with additive error
    \[
        O(\eps\cdot \sum_{j \in J_{\mathit{good}}} w^*_j + \frac{1}{\alpha\tau}\cdot\sum_{j \in J_{\mathit{bad}}} \min\{w^*_j, w(I_j) - w^*_j\}) \leq O(\eps \cdot \frac{1}{\alpha\eps} + \frac{1}{\alpha\tau} \cdot 12\tau) = O(\frac{1}{\alpha}). \qedhere
    \]
\end{proof}

The above lemma immediately implies the following since $S$ approximates $S_0 \oplus S_1 \oplus \cdots \oplus S_m$ with factor $1 + O(\eps)$.
\begin{corollary}
    $S$ approximates $(w^*, p^*)$ with additive error $O(\frac{1}{\alpha})$.
\end{corollary}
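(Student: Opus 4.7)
The plan is to chain the previous lemma with the merging guarantee from Lemma~\ref{lem:merging}. Recall that the last step of the algorithm takes the sets $S_1,\ldots,S_m$ (each approximating the corresponding group's Pareto set with the precision analyzed above) and produces $S$ via Lemma~\ref{lem:merging}, so that $S$ approximates $S_1\oplus S_2\oplus\cdots\oplus S_m$ with factor $1+O(\eps)$. Meanwhile, Lemma~\ref{lem:total-err} tells us that $S_1\oplus\cdots\oplus S_m$ already approximates the target tuple $(w^*,p^*)$ with additive error $O(\tfrac{1}{\alpha})$. So the corollary is really the composition of an additive-error approximation with a multiplicative-factor approximation.

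To make this composition produce an additive bound rather than a multiplicative one, I would first pick a witness $(\tilde{w},\tilde{p})\in S_1\oplus\cdots\oplus S_m$ with $\tilde{w}\leq w^*+O(\tfrac{1}{\alpha})$ and $\tilde{p}\geq p^*-O(\tfrac{1}{\alpha})$, guaranteed by Lemma~\ref{lem:total-err}. Applying Lemma~\ref{lem:merging} yields some $(\hat{w},\hat{p})\in S$ with $\hat{w}\leq(1+O(\eps))\tilde{w}$ and $\hat{p}\geq \tilde{p}/(1+O(\eps))$. The key numerical observation is that since $(w^*,p^*)\in\mathcal{S}^+(I;\tfrac{1}{\alpha\eps})$ we have $w^*,p^*\leq \tfrac{1}{\alpha\eps}$, hence $\tilde{w},\tilde{p}\leq \tfrac{1}{\alpha\eps}+O(\tfrac{1}{\alpha})=O(\tfrac{1}{\alpha\eps})$. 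Therefore the multiplicative slack $1+O(\eps)$ contributes only $O(\eps)\cdot O(\tfrac{1}{\alpha\eps})=O(\tfrac{1}{\alpha})$ of additive error in each coordinate.

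Putting the two pieces together (which is exactly part (ii) of Lemma~\ref{lem:approx-err-1}, applied after converting the multiplicative approximation into an additive one via the above observation), we obtain $\hat{w}\leq w^*+O(\tfrac{1}{\alpha})$ and $\hat{p}\geq p^*-O(\tfrac{1}{\alpha})$, which is the claimed additive error. There is no real obstacle here: everything reduces to the already-proved Lemma~\ref{lem:total-err}, and the only subtle point is the routine conversion of a $1+O(\eps)$ multiplicative factor into an $O(\tfrac{1}{\alpha})$ additive error, which is valid because the intermediate tuple's coordinates are bounded by $O(\tfrac{1}{\alpha\eps})$.
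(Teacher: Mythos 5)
Your proposal is correct and takes the same approach the paper intends: apply the multiplicative guarantee of Lemma~\ref{lem:merging} to the output of Lemma~\ref{lem:total-err}, and use the fact that the relevant coordinates are $O(\tfrac{1}{\alpha\eps})$ to convert the $1+O(\eps)$ factor into an $O(\tfrac{1}{\alpha})$ additive error. One small imprecision: the additive approximation of $(w^*,p^*)$ only gives a \emph{lower} bound on $\tilde{p}$, not the claimed upper bound $\tilde{p}\leq \tfrac{1}{\alpha\eps}+O(\tfrac{1}{\alpha})$; however, the argument goes through regardless, since it suffices to bound $O(\eps)\cdot p^*\leq O(\tfrac{1}{\alpha})$ (if $\tilde{p}\geq p^*$ the multiplicative loss yields $\hat{p}\geq p^*(1-O(\eps))$, and otherwise $\tilde{p}<p^*$ makes the loss even smaller).
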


We summarize this section by the following lemma.
\begin{lemma}
   There is an $\tilde{O}({n+(\frac{1}{\eps})^{11/6}})$-time algorithm for the reduced problem, which is randomized and succeeds with high probability.
\end{lemma}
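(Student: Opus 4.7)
The plan is to dispatch the easy case first: when $\alpha \geq (\frac{1}{\eps})^{2/3}$, Lemma~\ref{lem:alg-for-large-alpha} already runs in $\tilde{O}(n + (\frac{1}{\eps})^{5/2}/\alpha) \leq \tilde{O}(n + (\frac{1}{\eps})^{11/6})$, so it suffices to treat the regime $\frac{1}{2} \leq \alpha \leq (\frac{1}{\eps})^{2/3}$. For this regime I would sort $I$ by decreasing efficiency and cut it into $m = \Theta(\frac{1}{\alpha\eps\tau})$ blocks of size $\tau$ for a parameter $\tau$ to be chosen, together with one extra breakpoint that separates the tail of very low-efficiency items, so that the proximity argument below can cope with an unbounded efficiency gap at the tail. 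Call a block $I_j$ \emph{good} if its efficiency spread $\Delta_j$ is at most $\frac{1}{\tau}$, and \emph{bad} otherwise; by construction $\sum_j \Delta_j = O(1)$, which caps the cost charged to bad blocks.

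For each good block I would apply Corollary~\ref{coro:approx-add} with precision $\eps$, obtaining a set $S_j$ of size $\tilde{O}(\frac{1}{\eps})$ in time $\tilde{O}(|I_j| + \frac{1}{\eps} + (\frac{1}{\eps})^2/\tau)$ that approximates every $(w,p) \in \mathcal{S}^+(I_j)$ with additive error $O(\eps w)$. For each bad block I would use the coarser precision $\frac{1}{\alpha\tau}$ and take $S_j$ to be the union of the outputs of Corollary~\ref{coro:approx-add} and Corollary~\ref{coro:approx-del}; this costs $\tilde{O}(|I_j| + \alpha\tau + \alpha^2\tau^2 \Delta_j)$ and, crucially, approximates each $(w,p) \in \mathcal{S}^+(I_j)$ with additive error $O(\frac{1}{\alpha\tau} \cdot \min\{w, w(I_j)-w\})$. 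I would then merge all the $S_j$'s into $S$ via Lemma~\ref{lem:merging} in $\tilde{O}((\frac{1}{\eps})^{3/2} m)$ time. Summing the four cost contributions and choosing $\tau = \lceil (\frac{1}{\eps})^{11/12}/\alpha \rceil$ balances all terms to $\tilde{O}(n + (\frac{1}{\eps})^{11/6})$ uniformly on $[\frac{1}{2},(\frac{1}{\eps})^{2/3}]$.

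The hard part will be the error analysis. Fix a target $(w^*,p^*) \in \mathcal{S}^+(I; \frac{1}{\alpha\eps})$ with breaking efficiency $\rho_b$ and write $(w^*,p^*) = \sum_j (w^*_j, p^*_j)$ with $(w^*_j, p^*_j) \in \mathcal{S}^+(I_j)$. The good blocks jointly contribute error $O(\eps \sum_j w^*_j) \leq O(\eps \cdot \frac{1}{\alpha\eps}) = O(\frac{1}{\alpha})$, which already matches the target tolerance. For bad blocks I would split by the position of their efficiency interval relative to $\rho_b$: blocks lying entirely above $\rho_b + \frac{1}{\tau}$ satisfy $\sum (w(I_j)-w^*_j) = O(\tau)$ by the proximity Lemma~\ref{lem:proximity}, blocks lying entirely below $\rho_b - \frac{1}{\tau}$ satisfy $\sum w^*_j = O(\tau)$ by the same lemma, so their joint error contribution is $O(\frac{1}{\alpha\tau} \cdot \tau) = O(\frac{1}{\alpha})$. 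The delicate case is blocks whose efficiencies straddle the window $[\rho_b - \frac{1}{\tau}, \rho_b + \frac{1}{\tau}]$, where proximity gives no direct bound; here I would argue that a bad block has efficiency spread at least $\frac{1}{\tau}$ while the window has length $\frac{2}{\tau}$, so only $O(1)$ such straddling blocks can exist, and since each contains at most $\tau$ items of weight bounded by $2$, their total contribution is $O(\tau)$, again absorbed by the $\frac{1}{\alpha\tau}$ precision. The extra breakpoint in the partition is there precisely to prevent the potentially oversized tail group (which could have efficiency far below $\rho_b - \frac{1}{\tau}$ yet many items) from being classified as a straddling block, and carefully ruling this out is the step I expect to require the most care.
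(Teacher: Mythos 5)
Your proposal matches the paper's approach essentially point-for-point: the same dispatch on $\alpha$, the same partition into $\tau$-sized bad blocks, good blocks, and a guarded tail, the same per-block precisions ($\eps$ for good, $\frac{1}{\alpha\tau}$ for bad using both Corollary~\ref{coro:approx-add} and Corollary~\ref{coro:approx-del}), the same parameter choice $\tau = \lceil(\frac{1}{\eps})^{11/12}/\alpha\rceil$, and the same error split into good/high-bad/low-bad/straddling contributions. The one step you flagged but did not complete---ruling out the tail block as a large straddling bad block---is exactly what the paper's extra breakpoint is engineered to achieve: the penultimate group $I_{m-1}$ is taken maximal with efficiency spread at most $\frac{1}{\tau}$ (hence good), and since the breaking item index $b \leq \frac{1}{\alpha\eps}+1$ must land in one of the first $m-2$ size-$\tau$ groups, the first item of $I_{m-1}$ has efficiency at most $\rho_b$, forcing every item of $I_m$ to have efficiency strictly below $\rho_b - \frac{1}{\tau}$; thus $I_m$ is a non-straddling low block and all straddling bad blocks have size exactly $\tau$, giving the $O(\tau)$ bound you wanted.
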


\section{An Improved Algorithm with Exponent 7/4}\label{sec:strong}
Due to the space limit, we present only the most crucial part of the algorithm and the analysis. A full version of this section can be found in Appendix~\ref{app:strong}.

In our $\tilde{O}(n + (\frac{1}{\eps})^{11/6})$-time algorithm, only the bad groups benefit from the proximity bound, which allows us to approximate with larger factor $1 + \frac{1}{\alpha\tau}$, while for the good groups, we simply approximate with factor $1 + \eps$. To further improve the running time, we shall partition the items in a way that all the groups can benefit from the proximity bound.

We assume that $\frac{1}{2}\leq \alpha \leq (\frac{1}{\eps})^{3/4}$ since the algorithm in Lemma~\ref{lem:alg-for-large-alpha} already runs in $\tilde{O}(n + (\frac{1}{\eps})^{7/4})$ time for all $\alpha \geq (\frac{1}{\eps})^{3/4}$. We assume that $I = \{(w_1, p_1), \ldots, (w_n, p_n)\}$ and that $\frac{p_1}{w_1} \geq \cdots \geq \frac{p_n}{w_n}$.

We first partition $I$ into the head group $I_{\mathrm{head}}$ and the tail group $I_{\mathrm{tail}}$ where $I_{\mathrm{head}}$ contains the first $\lceil \frac{1}{\alpha\eps}\rceil + 1 $ items and $I_{\mathrm{tail}}$ contains the rest of the items. 
Below we only give an algorithm for $I_{\mathrm{head}}$, and analyze its additive error. The remaining analysis and the algorithm for $I_{\mathrm{tail}}$ can be found in the full version (Appendix~\ref{app:strong}).

\subsection{Approximating Head Group}\label{sec:strong-head-incomp}
Let $n' = \lceil \frac{1}{\alpha\eps}\rceil + 1$.  We shall show that in $\tilde{O}((\frac{1}{\eps})^{7/4})$ time, we can compute a set $S$ of size $\tilde{O}(\frac{1}{\eps})$ that approximate $\mathcal{S}^+(I_{\mathrm{head}})$ with additive error $O(\frac{1}{\alpha})$.

Let $\tau$ be a parameter that will be specified later. Roughly speaking, we will further partition $I_{\mathrm{head}}$ into good groups and bad groups. The bad groups are the same as before: a bad group is a group of $\tau$ items whose efficiencies differ by at least $\frac{1}{\tau}$. For good groups, we will make them large than before: when we obtain a good group $I'$ of size $\tau$, we will keep adding items to $I'$ until the difference between the maximum and the minimum item efficiencies exceeds $\frac{1}{|I'|}$ with the next item. A more precise description of the partition process is given below. 

We partition $I_{\mathrm{head}}$ into $I_1, \ldots, I_m$ as follows. Initially, $j = 1$ and $k = 1$, and the remaining items are $\{(w_k, p_k), \ldots, (w_{n'}, p_{n'})\}$. Let $k'$ be the minimum integer such that
\[
    (k' - k) \cdot (\frac{p_k}{w_k} - \frac{p_{k'}}{w_{k'}}) > 1.
\]
If such $k'$ does not exist, we set $I_j = \{(w_k, p_k), \ldots, (w_{n'}, p_{n'})\}$, set $\Delta_j = \Delta'_j = \frac{p_k}{w_k} - \frac{p_{n'}}{w_{n'}}$, set $m: = j$, and finish. Assume that $k'$ exists. 
\begin{itemize}
    \item If $k' - k \geq \tau$, we set $I_j = \{(w_k, p_k), \ldots, (w_{k' - 1}, p_{k'-1})\}$, set $\Delta_j = \frac{p_k}{w_k} - \frac{p_{k'-1}}{w_{k'-1}}$, set $\Delta'_j = \frac{p_k}{w_k} - \frac{p_{k'}}{w_{k'}}$, and proceed with $j := j + 1$ and $k: = k'$. In this case, we say $I_j$ is a good group. 

    \item Otherwise, we let $I_j = \{(w_k, p_k), \ldots, (w_{k + \tau - 1}, p_{k + \tau - 1})\}$, set $\Delta_j = \frac{p_k}{w_k} - \frac{p_{k + \tau - 1}}{w_{k+\tau - 1}}$, set $\Delta'_j = \frac{p_k}{w_k} - \frac{p_{k + \tau}}{w_{k+\tau}}$, and proceed with $j: = j + 1$ and $k:= k + \tau$. In this case, we say that $I_j$ is a bad group. 
\end{itemize}
(We remark that $\Delta_j$ and $\Delta'_j$ are not required by the algorithm. They are maintained only for the purpose of analysis. $\Delta_j$ is the actual difference between the maximum and minimum item efficiencies in $I_j$. For technical reasons, we also need $\Delta'_j$. Basically, compared to $\Delta_j$, the gap $\Delta'_j$ also includes the efficiency gap between the minimum efficiency in $I_j$ and the maximum efficiency in $I_{j+1}$. We will use $\Delta_j$ to bound the time cost of $I_j$, and use $\Delta'_j$ to create an efficiency gap for other groups.)

We have the following observations.
\begin{observation}\label{obs:group-property-incomp}
The groups $I_1, \ldots, I_m$ satisfy the following properties.
\begin{romanenumerate}
    \item $m \leq \frac{1}{\alpha\eps\tau} + 1$

    \item $\sum_{j=1}^{m}|I_j| = |I_{\mathrm{head}}| = \frac{1}{\alpha\eps} + 1$ and $\sum_{j=1}^{m}\Delta_j \leq \sum_{j=1}^{m}\Delta'_j \leq \frac{3}{2}$.

    \item For every group $I_j$, the efficiencies of the items in $I_j$ differ by at most $\Delta_j$.

    \item For every good group $I_j$, we have $|I_j|\Delta_j \leq 1$ and $|I_j|\Delta'_j \geq \frac{1}{2}$.

    \item For every bad group $I_j$, we have $|I_j|\Delta'_j \geq |I_j|\Delta_j > 1$.

    \item For the last group $I_m$, we have $|I_m|\Delta_m \leq 1$.
\end{romanenumerate}
\end{observation}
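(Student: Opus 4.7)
The plan is to verify each of the six items directly from the greedy partition construction, which is driven by a single predicate: at each step, $k'$ is the minimum index with $(k'-k)\bigl(\tfrac{p_k}{w_k} - \tfrac{p_{k'}}{w_{k'}}\bigr) > 1$. Each item reduces to a minimality argument for $k'$, a telescoping argument, or a routine counting argument.

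First I would handle items (ii) and (iii). The identity $\sum_j |I_j| = |I_{\mathrm{head}}|$ is immediate because the loop keeps consuming items from position $k$ onward until $I_{\mathrm{head}}$ is exhausted, and item (iii) is by the very definition of $\Delta_j$ together with the non-increasing sort of efficiencies. For $\sum_j \Delta'_j \leq 3/2$, observe that for each non-final group $\Delta'_j$ spans the efficiency gap from the first item of $I_j$ to the first item of $I_{j+1}$, while $\Delta'_m$ spans the full efficiency range of $I_m$; summing telescopes to $\tfrac{p_1}{w_1} - \tfrac{p_{n'}}{w_{n'}}$, and since items lie in $[1,2]\times[1,2]$ their efficiencies lie in $[\tfrac12,2]$, giving the bound $\tfrac32$. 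The comparison $\Delta_j \leq \Delta'_j$ is visible from the definitions (for a good or bad group, $\Delta'_j$ extends one index further than $\Delta_j$; for $I_m$ they coincide), whence $\sum_j \Delta_j \leq \sum_j \Delta'_j$.

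For item (i), every group other than $I_m$ has $|I_j| \geq \tau$ (bad groups by definition have size $\tau$, good groups satisfy $|I_j| = k'-k \geq \tau$), so $|I_{\mathrm{head}}| \geq (m-1)\tau + 1$; substituting $|I_{\mathrm{head}}| = \lceil\tfrac{1}{\alpha\eps}\rceil + 1$ and rearranging yields $m \leq \tfrac{1}{\alpha\eps\tau} + 1$.

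Items (iv)--(vi) all hinge on the (non-)minimality of $k'$, and this is the main place where care is needed. For a good group, minimality gives $|I_j|\Delta'_j = (k'-k)\bigl(\tfrac{p_k}{w_k} - \tfrac{p_{k'}}{w_{k'}}\bigr) > 1$, which implies the claimed $\geq \tfrac12$; conversely, since $k'-1$ is not a valid choice of $k'$, we get $(k'-1-k)\Delta_j \leq 1$, from which $|I_j|\Delta_j \leq 1$ follows up to the slack $|I_j|/(|I_j|-1) \leq \tau/(\tau-1)$, which is absorbed into surrounding constants once $\tau$ is chosen large. For a bad group with $k'-k \leq \tau$, monotonicity of efficiencies gives $\tfrac{p_{k'}}{w_{k'}} \geq \tfrac{p_{k+\tau-1}}{w_{k+\tau-1}}$, so $|I_j|\Delta_j = \tau\Delta_j \geq (k'-k)\bigl(\tfrac{p_k}{w_k} - \tfrac{p_{k'}}{w_{k'}}\bigr) > 1$, as required. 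For $I_m$ no valid $k'$ exists, so the predicate fails at $i = n'$, giving $(n'-k)\bigl(\tfrac{p_k}{w_k} - \tfrac{p_{n'}}{w_{n'}}\bigr) \leq 1$, i.e., $(|I_m|-1)\Delta_m \leq 1$, and the same constant-slack bookkeeping yields $|I_m|\Delta_m \leq 1$. The sole subtlety throughout is the off-by-one between the terminating index $k'$ (the first item \emph{not} in $I_j$) and the last item $k'-1$ in $I_j$; I expect this to be bookkeeping rather than a substantive obstacle, with the precise constants in (iv) and (vi) understood modulo small implicit factors (or with the mild assumption $\tau \geq 2$).
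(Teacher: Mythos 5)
Your proof is correct, and since the paper states this as a bare Observation without an accompanying proof, your verification is exactly the kind of routine argument the authors have in mind: (ii)--(iii) are definitional plus a clean telescoping of the $\Delta'_j$ across consecutive group boundaries; (i) is counting using the fact that all groups but the last have size at least $\tau$; and (iv)--(vi) follow from the (non-)minimality of $k'$. The telescoping observation that $\Delta'_j$ is exactly the efficiency gap between the first item of $I_j$ and the first item of $I_{j+1}$ for every non-final group (whether good or bad) is the one non-obvious step, and you identify it correctly.

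You are also right to flag the off-by-one in (iv) and (vi): minimality of $k'$ yields $(|I_j|-1)\Delta_j \le 1$, and the failure of the predicate at $i = n'$ yields $(|I_m|-1)\Delta_m \le 1$, which are weaker by a multiplicative factor $|I_j|/(|I_j|-1)$ than the constants stated in the observation. (A small caveat on your parenthetical: the assumption $\tau \ge 2$ bounds this factor by $2$ for good groups, but does nothing for (vi), since $I_m$ may have as few as $2$ items regardless of $\tau$; the factor of $2$ is unavoidable.) This is genuinely a small imprecision in the observation as written, not a flaw in your argument --- the only place (iv) and (vi) are used with $\Delta_j$ is to bound $\alpha^2 |I_j|^2 \Delta_j$ inside a $\tilde O(\cdot)$, where a constant factor is absorbed. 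One further small slip: for a bad group the defining condition is $k'-k < \tau$ (strict), not $k'-k \le \tau$; the strict inequality is in fact what gives the strict inequality $|I_j|\Delta_j > 1$ in (v).
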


For each group, we shall approximate it in exactly the same way as we did for the bad groups in the $\tilde{O}(n + (\frac{1}{\eps})^{11/6})$-time algorithm, except that we shall use $\frac{1}{\alpha|I_j|}$ as the accuracy parameter. More specifically, we use Corollary~\ref{coro:approx-add} and Corollary~\ref{coro:approx-del} to compute a set $S_j$ of size $\tilde{O}(\alpha|I_j|)$ that approximate each tuple $(w, p) \in \mathcal{S}^+(I_j)$ with additive error $\frac{1}{\alpha|I_j|} \cdot \min\{w, w(I) - w\}$. The time cost for $I_j$ is $\tilde{O}(|I_j| + \alpha|I_j| + \alpha^2|I_j|^2\Delta_j)$.

Then we compute a set $S_{\mathrm{head}}$ of size $\tilde{O}(\frac{1}{\eps})$ that approximates $S_1 \oplus \cdots \oplus S_m$ with factor $1+O(\eps)$ via Lemma~\ref{lem:merging}. 

\paragraph*{Bound Additive Error}
We shall show that $S_{\mathrm{head}}$ approximates every tuple $(w^*, p^*) \in \mathcal{S}^+(I_{\mathrm{head}})$ with additive error $\tilde{O}(\frac{1}{\alpha})$. 

Let $(w^*, p^*)$ be an arbitrary tuple in $\mathcal{S}^+(I_{\mathrm{head}})$. Let $(w_b, p_b)$ be the breaking item with respect to $w^*$. Let $\rho_b = \frac{p_b}{w_b}$ be the efficiency of the breaking item. 

Let $I_{j'}$ be the group containing the breaking item $(w_b, p_b)$. To apply the proximity bound, we will show that the groups can be divided into  $O(\log \frac{1}{\eps})$ collections so that for each collection $\{I_{j_{k} + 1}, I_{j_{k} + 2}, \ldots, I_{j_{k+1}-1} \}$, the efficiency gap between $\rho_b$ and the groups in this collection is at least the maximum $\Delta'_j$ of these groups. To do this,
we need the following auxiliary lemma. Its proof is deferred to the full version.

\begin{lemma}\label{lem:cluster-incomp}
    Let $\Delta_1, \ldots, \Delta_n$ be a sequence of positive real numbers. Let $\Delta_{\min}$ and $\Delta_{\max}$ be the minimum and maximum numbers in the sequence. There exists $h = O(\log \frac{\Delta_{\max}}{\Delta_{\min}})$ indices $1 = j_1 < j_2 < \cdots < j_h = n$ such that for any $k \in \{1, \ldots, h-1\}$, we have that
    \begin{equation}\label{eq:cluster-original-incomp}
            \max\{\Delta_j : j_k < j < j_{k+1}\} \leq \sum\{\Delta_j : j_{k+1} \leq j \leq n\},
    \end{equation}
    where the maximum of an empty set is defined to be $0$.
\end{lemma}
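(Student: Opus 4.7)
The plan is to construct the indices $j_1<j_2<\cdots<j_h$ greedily from right to left. Put $T_j:=\sum_{i=j}^{n}\Delta_i$ for the suffix sum. Initialize $j_h:=n$. Given $j_{k+1}$, proceed as follows: if $\Delta_j\leq T_{j_{k+1}}$ for every index $j$ with $2\leq j\leq j_{k+1}-1$, set $j_k:=1$ and stop (so $k=1$ and $h$ is fixed); otherwise let $j_k$ be the \emph{largest} index in $\{2,\ldots,j_{k+1}-1\}$ with $\Delta_{j_k}>T_{j_{k+1}}$. Because $j_k$ strictly decreases at each step and lies in $\{1,\ldots,n\}$, the procedure halts and outputs $1=j_1<j_2<\cdots<j_h=n$.

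Inequality~\eqref{eq:cluster-original-incomp} holds for every $k$ directly by construction. For $k=1$ the terminating condition says $\Delta_j\leq T_{j_2}$ for all $j\in\{2,\ldots,j_2-1\}$. For $k\geq 2$, the choice of $j_k$ as the largest violator in $\{2,\ldots,j_{k+1}-1\}$ guarantees that every $j\in\{j_k+1,\ldots,j_{k+1}-1\}$ satisfies $\Delta_j\leq T_{j_{k+1}}$, which is exactly what the inequality requires.

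The main task is bounding $h$. For every $k\in\{2,\ldots,h-1\}$ the procedure landed on $j_k$ via the non-terminating branch, so $\Delta_{j_k}>T_{j_{k+1}}$. Using $j_k+1\leq j_{k+1}$ and therefore $T_{j_k+1}\geq T_{j_{k+1}}$, we get
\[
T_{j_k}\;=\;\Delta_{j_k}+T_{j_k+1}\;\geq\;\Delta_{j_k}+T_{j_{k+1}}\;>\;2\,T_{j_{k+1}}.
\]
Iterating this doubling from $k=h-1$ down to $k=2$ yields $T_{j_2}>2^{h-2}T_{j_h}=2^{h-2}\Delta_n$. The defining inequality for $j_2$ then gives
\[
\Delta_{\max}\;\geq\;\Delta_{j_2}\;>\;T_{j_3}\;>\;2^{h-3}\,\Delta_n\;\geq\;2^{h-3}\,\Delta_{\min},
\]
and hence $h\leq 3+\log_2(\Delta_{\max}/\Delta_{\min})=O(\log(\Delta_{\max}/\Delta_{\min}))$.

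The only potentially delicate point is making sure the geometric chain is non-vacuous, which requires $h\geq 3$; for $h\leq 3$ the claimed bound is trivial. Otherwise the argument is purely mechanical, so this right-to-left greedy should give a complete proof.
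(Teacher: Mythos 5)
Your proof is correct and takes a genuinely different route from the paper. You build the index set directly by a right-to-left greedy scan: set $j_h=n$, and at each step choose $j_k$ as the \emph{largest} index strictly between $1$ and $j_{k+1}$ whose $\Delta$ exceeds the current suffix sum $T_{j_{k+1}}$, terminating with $j_1=1$ when no such index remains. The choice of the largest violator is the load-bearing step: it guarantees that every index strictly between $j_k$ and $j_{k+1}$ has $\Delta_j\leq T_{j_{k+1}}$, which is exactly inequality~\eqref{eq:cluster-original-incomp}. By contrast, the paper starts from the full index set $\{1,\ldots,n\}$ and repeatedly coarsens it: it increments a cut point $j_{k+1}$ (merging two adjacent blocks) whenever doing so preserves the inequality, and stops once no further merge is possible; the stopping condition then provides the strict reverse inequality that drives the bound on $h$. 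Both arguments finish the same way, via the observation that each surviving cut point at least doubles the suffix sum ($T_{j_k}\geq\Delta_{j_k}+T_{j_{k+1}}>2T_{j_{k+1}}$), giving $h\leq \log_2(\Delta_{\max}/\Delta_{\min})+O(1)$. Your one-pass construction is a bit more transparent to verify (no need to reason about the order of merges or about maintaining a global invariant across rounds), while the paper's refinement viewpoint makes it immediate that the resulting cuts refine into any coarser valid partition. Both are complete; yours is, if anything, the cleaner of the two.
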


\begin{lemma}
    $S_1 \oplus \cdots \oplus S_m$ approximates $(w^*, p^*)$ with additive error $\tilde{O}(\frac{1}{\alpha})$.
\end{lemma}
\begin{proof}
    Since $I_{\mathrm{head}} = I_1 \cup \cdots \cup I_m$ and $(w^*, p^*) \in \mathcal{S}^+(I_{\mathrm{head}})$, we have that there is a tuple $(w^*_j, p^*_j) \in \mathcal{S}^+(I_j)$ for each $j$ such that 
    \(
        (w^*, p^*) = (w^*_1, p^*_1) + \cdots + (w^*_m, p^*_m).
    \)
    Recall that $S_j$ approximates $(w^*_j, p^*_j)$ with the additive error $\delta_j = \frac{1}{\alpha|I_j|}\cdot\min\{w^*_j,w(I_j)-w^*_j\}$. It suffices to bound $\sum_j \delta_j$. Let $I_{j'}$ be the group containing the breaking item $(w_b, p_b)$. Since the weights of the items are in $[1,2]$. It is easy to see that 
    \(
        \delta_{j} \leq \frac{1}{\alpha|I_{j}|}w^*_{j} \leq \frac{1}{\alpha}
    \)
    for any $j$.
    Therefore, $\delta_{j'} + \delta_{m} \leq \frac{2}{\alpha}$. It remains to consider $\sum_{j = 1}^{j'-1} \delta_j$ and $\sum_{j = j'}^{m-1} \delta_j$. We shall bound them using Lemma~\ref{lem:cluster-incomp} and the proximity bound (Lemma~\ref{lem:proximity}).

    Consider $\Delta'_1, \Delta'_2, \ldots, \Delta'_{j'-1}$. We shall apply Lemma~\ref{lem:cluster-incomp} to them. Let $\Delta'_{\max} = \max_{j=1}^{j'-1}\Delta'_j$ and let $\Delta'_{\min} = \min_{j=1}^{j'-1}\Delta'_j$.  Note that $\Delta'_{\max} \leq 3/2$ since the items are from $[1,2] \times [1,2]$, and that \(
    \Delta'_{\min} \geq 1/(2\sum_j |I_j|) \geq \frac{\alpha\eps}{2} \geq \frac{\eps}{4}
    \) due to Observation~\ref{obs:group-property-incomp}(ii), (iv), and (v). Therefore, $\log(\frac{\Delta'_{\max}}{\Delta'_{\min}}) = O(\log \frac{1}{\eps})$.  By Lemma~\ref{lem:cluster-incomp}, there exists $h = O(\log \frac{1}{\eps})$ indices $1 = j_1 < j_2 < \cdots < j_h = j' - 1$ such that for any $k \in \{1, \ldots, h-1\}$, we have that
    \begin{equation}\label{eq:cluster-incomp}
            \max\{\Delta'_j : j_k < j < j_{k+1}\} \leq \sum\{\Delta'_j : j_{k+1} \leq j \leq j'-1\}.
   \end{equation}
   Fix some $k$. Consider the groups $I_j$ with $j_k < j < j_{k+1}$. Let $\Delta' = \max\{\Delta'_j : j_k < j < j_{k+1}\}$. The inequality~\eqref{eq:cluster-incomp} implies that the items in these groups $I_j$ have efficiencies at least $\rho_b + \Delta'$. By Lemma~\ref{lem:proximity}, we have 
   \begin{equation}\label{eq:proximity-bound-b-incomp}
        \sum_{j_k < j < j_{k+1}} (w(I_j) - w^*_j) \leq \frac{2}{\Delta'}.  
   \end{equation}

   Also note that for each $I_j$ with $j_k < j < j_{k+1}$, we have $\frac{1}{|I_j|} \leq 2\Delta'_j \leq 2\Delta'$ (due to Observation~\ref{obs:group-property-incomp}(iv) and (v)). 
   Therefore, 
   \[
        \sum_{j_k < j < j_{k+1}} \delta_j \leq \sum_{j_k < j < j_{k+1}} \frac{w(I_j) - w^*_j}{\alpha|I_j|} \leq \sum_{j_k < j < j_{k+1}} \frac{2\Delta'(w(I_j) - w^*_j)}{\alpha} \leq \frac{1}{\alpha}.
   \]
   The last inequality is due to \eqref{eq:proximity-bound-b-incomp}.
   Recall that $\delta_j \leq \frac{1}{\alpha}$ for any $j$. We have the following.
   \[
        \sum_{j=1}^{j'-1} \delta_j  = \sum_{k=1}^h \delta_{j_k}  + \sum_{k=1}^{h-1}\sum_{j_k < j < j_{k+1}} \delta_j = O(\frac{h}{\alpha}) = \tilde{O}(\frac{1}{\alpha}).
   \]

   In a symmetric way, we can show that 
   \(
       \sum_{j=j'+ 1}^{m-1} \delta_j  \leq \tilde{O}(\frac{1}{\alpha}).
   \) 
\end{proof}

The above lemma immediately implies the following since $S_{\mathrm{head}}$ approximates $ S_1 \oplus \cdots \oplus S_m$ with factor $1 + O(\eps)$.
\begin{corollary}
    $S_{\mathrm{head}}$ approximates $(w^*, p^*)$ with additive error $\tilde{O}(\frac{1}{\alpha})$.
\end{corollary}

The remaining analysis and the algorithm for $I_{\mathrm{tail}}$ can be found in the full version (Appendix~\ref{app:strong}). We summarize this with the following lemma.
\begin{lemma}
   There is an $\tilde{O}({n+(\frac{1}{\eps})^{7/4}})$-time algorithm for the reduced problem, which is randomized and succeeds with high probability.
\end{lemma}

\bibliography{lipics-v2021-sample-article}

\clearpage
\appendix

\section{Details of Approximating Convolution}\label{app:conv}
We shall prove the following lemma.
\lemmerging*

We need Chi et al.'s algorithm for bounded monotone $(\min,+)$-convolution~\cite{CDXZ22}. We say a set $S$ of tuples is monotone if for any two tuple $(w, p)$, $(w', p')$ in $S$, $w' \geq w$ implies $p' \geq p$. We say $S$ is bounded by $O(\ell)$ if the entries of the tuples in $S$ are non-negative integers bounded by $O(\ell)$. 

\begin{lemma}\label{lem:bm-conv-accurate}{\normalfont \cite{CDXZ22}}
    Let $S_1$ and $S_2$ be two sets of tuples. Suppose that both $S_1$ and $S_2$ are monotone and bounded by $O(\ell)$. In $\tilde{O}(\ell^{3/2})$ time and with high probability\footnote{The original algorithm in~\cite{CDXZ22} runs in $\tilde{O}(\ell^{3/2})$ expected time. It is easy to convert it into an algorithm that succeeds with high probability. Such an algorithm is sufficient for our purpose since other parts of our algorithm only succeed with high probability.}, we can compute $S_1 \oplus S_2$.
\end{lemma}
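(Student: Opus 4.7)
The plan is to reduce the tuple-set statement to the sequence version of bounded monotone $(\min,+)$-convolution that is established in \cite{CDXZ22}, which is exactly the result being cited. The reduction is purely syntactic, and the only work is to carefully define the sequence encoding and the extraction step.

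First I would encode each monotone tuple set as an integer sequence. Since $S_i$ is monotone, the definition forces each weight coordinate to appear at most once, so $S_i$ is the graph of a partial function. Let $L = O(\ell)$ be the common bound. Define $f_i : \{0,1,\ldots,L\} \to \{0,1,\ldots,L\}$ by $f_i[w] = \max\{p : (w',p) \in S_i,\ w' \leq w\}$, with $f_i[w] = 0$ if no such tuple exists (we may pad $(0,0)$ into each set without loss). Monotonicity of $S_i$ guarantees that $f_i$ is non-decreasing, and the bound hypothesis guarantees integer entries in $[0,L]$. Thus $f_1,f_2$ are bounded monotone sequences of length $O(\ell)$.

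Next I would invoke the algorithm of Chi, Duan, Xie, and Zhang on $f_1$ and $f_2$. Their theorem is stated for $(\min,+)$-convolution of bounded monotone sequences, but the $(\max,+)$ version we need follows by the standard involution $g_i[w] := L - f_i[w]$, which turns a non-decreasing sequence bounded by $L$ into a non-increasing one bounded by $L$ (and a further index reversal, if desired, converts this into non-decreasing form); after computing the $(\min,+)$-convolution of the transformed sequences in $\tilde O(\ell^{3/2})$ time with high probability, an affine undo recovers
\[
    h[w] \;=\; \max_{w_1 + w_2 = w}\bigl(f_1[w_1] + f_2[w_2]\bigr),
\]
for every $w \in \{0,1,\ldots,2L\}$. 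By construction of $f_i$ as a running maximum, $h[w]$ equals the largest second coordinate of any tuple in $S_1 + S_2$ whose first coordinate is at most $w$.

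Finally I would read $S_1 \oplus S_2$ off the sequence $h$. A tuple $(w,p)$ lies in $S_1 \oplus S_2$ iff $(w,p) \in S_1 + S_2$ and no tuple in $S_1 + S_2$ dominates it; equivalently, iff $p = h[w]$ and $h[w-1] < h[w]$ (using the convention $h[-1] = -\infty$). Since $h$ was produced explicitly, scanning it once in $O(\ell)$ time identifies all such positions and outputs the corresponding tuples. The only non-routine ingredient is the $\tilde O(\ell^{3/2})$-time algorithm of \cite{CDXZ22}, which is invoked as a black box; the remaining obstacle is merely bookkeeping to make sure that the encoding preserves monotonicity and the bound, and that the extraction step correctly recovers exactly the non-dominated tuples of $S_1 + S_2$.
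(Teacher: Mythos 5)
You are essentially supplying the translation that the paper leaves implicit: the paper imports this lemma as a black box from \cite{CDXZ22} (adding only a footnote about boosting expected time to high probability), so the content of a proof is exactly the reduction from tuple sets to bounded monotone sequences that you sketch, and most of your plan (running-maximum encoding, the max-to-min transformation with reversal, the record-extraction scan, the $\tilde O(\ell^{3/2})$ bound) is the natural and correct route. However, one step is genuinely wrong as written: the claim that you ``may pad $(0,0)$ into each set without loss,'' equivalently the zero-fill $f_i[w]=0$ below the minimum weight of $S_i$. With that fill, $h$ is the running maximum of the profits of the \emph{padded} sumset $(S_1\cup\{(0,0)\})+(S_2\cup\{(0,0)\})$, and your record extraction returns its Pareto frontier, which in general is not $S_1\oplus S_2$: one-sided sums $s_1+(0,0)$ and $(0,0)+s_2$ can show up as records. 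For instance, $S_1=\{(5,10)\}$ and $S_2=\{(1,1)\}$ yield the output $\{(1,1),(5,10),(6,11)\}$, whereas $S_1\oplus S_2=\{(6,11)\}$; worse, $(5,10)$ is not dominated by $S_1+S_2$, so the output is not merely a superset of the answer, and exactness (in particular, that the computed set is dominated by $S_1+S_2$) is precisely what Lemma~\ref{lem:bm-conv-approx} needs from this lemma so that the final solutions never beat a true solution. Since profits may be $0$, padding can also delete genuine frontier points (e.g.\ $S_1=\{(5,10)\}$, $S_2=\{(3,0)\}$: the true frontier point $(8,10)$ is no longer a record). Your intermediate claim that $h[w]$ is the largest profit among tuples of $S_1+S_2$ of weight at most $w$ fails for the same reason.

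The repair is small. Encode positions of $f_i$ below the minimum weight of $S_i$ by a sentinel such as $-(2L+1)$ instead of $0$, then shift both sequences up by $2L+1$ so that they are again monotone with nonnegative entries bounded by $O(\ell)$, run the bounded monotone convolution, undo the shift, and report a position $w$ only if $h[w]\ge 0$ and $h[w]>h[w-1]$ (with $h[-1]=-\infty$). Any maximizing decomposition of such a record uses two non-sentinel values, hence two actual tuples, and your argument that their total weight is exactly $w$ then goes through verbatim, so the extraction returns exactly the non-dominated tuples of $S_1+S_2$; the rest of your proposal can stand unchanged.
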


With this algorithm, we can approximately compute the $(\max,+)$-convolution of two sets of tuples in $\tilde{O}((\frac{1}{\eps})^{3/2})$ time. The following lemma is essentially the same as that in~\cite[Lemma 28 in the full version]{BC22}, and its proof follows the idea of~\cite[Lemma 1]{Chan18}.

\begin{restatable}{lemma}{lembmconvapprox}
\label{lem:bm-conv-approx}
    Let $S_1, S_2 \subseteq (\{0\} \cup [A, B]) \times (\{0\} \cup [C, D])$ be two sets of tuples with total size $\ell$. In $\tilde{O}(\ell + (\frac{1}{\eps})^{3/2})$ time and with high probability, we can compute a set $S$ of size $\tilde{O}(\frac{1}{\eps})$ that approximates $S_1 \oplus S_2$ with factor $1 + \eps$.  Moreover, $S$ is dominated by $S_1 + S_2$.
\end{restatable}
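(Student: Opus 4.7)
The plan is to follow Chan's bucketing recipe~\cite{Chan18}, reducing approximate $(\max,+)$-convolution to $\tilde{O}(1)$ invocations of the exact bounded monotone $(\min,+)$-convolution provided by Lemma~\ref{lem:bm-conv-accurate}. For every pair $(a, c)$ of powers of two with $a \in [A/2, 2B]$ and $c \in [C/2, 2D]$, I will compute an approximation of the portion of $S_1 \oplus S_2$ lying in the slab $R_{a,c} = [a, 2a] \times [c, 2c]$, and take the union of these slab-approximations (together with trivial handling of tuples having a zero coordinate, which can be folded in by sort-and-merge) as the final set $S$. Since there are $O(\log(B/A)\log(D/C)) = \tilde{O}(1)$ slabs, it suffices to handle each slab in $\tilde{O}((1/\eps)^{3/2})$ time.

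For a fixed slab $R_{a,c}$, only input tuples with weight at most $2a$ and profit at most $2c$ can contribute, so I restrict $S_1$ and $S_2$ to $[0, 2a] \times [0, 2c]$. I then round every weight up to the nearest multiple of $\eps a$ and every profit down to the nearest multiple of $\eps c$, leaving only $O(1/\eps)$ distinct values in each coordinate. Taking the Pareto frontier of each rounded set yields two monotone sets $\tilde S_1, \tilde S_2$ in the sense required by Lemma~\ref{lem:bm-conv-accurate}. After rescaling weights by $1/(\eps a)$ and profits by $1/(\eps c)$, both $\tilde S_1$ and $\tilde S_2$ consist of monotone tuples of non-negative integers bounded by $O(1/\eps)$, so Lemma~\ref{lem:bm-conv-accurate} computes their $(\max,+)$-convolution in $\tilde{O}((1/\eps)^{3/2})$ time with high probability (the $\max$-plus case is obtained from the stated $\min$-plus case by the standard negate-and-shift trick). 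Undoing the scaling and intersecting with $R_{a,c}$ gives the slab's contribution $S_{a,c}$.

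Because weights were rounded up and profits rounded down by at most $\eps a$ and $\eps c$ respectively, and every tuple in $R_{a,c}$ has weight at least $a$ and profit at least $c$, each tuple of $(S_1 \oplus S_2) \cap R_{a,c}$ is approximated within multiplicative factor $1 + O(\eps)$ by some tuple of $S_{a,c}$, and by construction $S_{a,c}$ is dominated by $S_1 + S_2$. Taking $S$ to be the union over all slabs followed by a Pareto prune (sort by weight and sweep, in $\tilde{O}(|S|)$ time) yields a set of size $\tilde{O}(1/\eps)$, since each slab contributes $O(1/\eps)$ tuples and there are $\tilde{O}(1)$ slabs. The total running time is $\tilde{O}(\ell)$ for initial bucketing plus $\tilde{O}((1/\eps)^{3/2})$ for each of the $\tilde{O}(1)$ slabs, which matches the claimed bound. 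The step requiring the most care is the error analysis for output tuples landing near slab boundaries: such a tuple is still approximated from its containing slab, and since output magnitudes within a slab are fixed up to a constant factor, the additive error $O(\eps a)$ in weight and $O(\eps c)$ in profit translate to multiplicative error $1 + O(\eps)$, so a final rescaling of $\eps$ by a constant delivers the claimed $1 + \eps$ guarantee.
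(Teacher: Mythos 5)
Your proposal is correct and follows essentially the same approach as the paper's proof: enumerate $\tilde{O}(1)$ geometric slabs indexed by powers of two, truncate the inputs to the relevant range, round to integers bounded by $O(1/\eps)$, discard dominated tuples to make both inputs monotone, invoke Lemma~\ref{lem:bm-conv-accurate}, and observe that the additive rounding error becomes a $1+O(\eps)$ multiplicative factor within each slab. The only cosmetic differences are your choice of slab convention $[a,2a]\times[c,2c]$ versus the paper's $[a/2,a]\times[b/2,b]$, and your explicit mention of the negate-and-shift reduction from $(\max,+)$ to $(\min,+)$, which the paper leaves implicit.
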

    
\begin{proof}
    By scaling, we assume $A = C = 1$.  For a given $a, b \geq 1$, we first show how to compute a set of size $O(\frac{1}{\eps})$ that approximates $(S_1 \oplus S_2) \cap ([0,a] \times [0,b])$ with additive error $(\eps a, \eps b)$. 

    We assume that $S_1 \cup S_2 \subseteq [0,a]\times [0,b]$ since tuples outside this range can be safely removed.  For each tuple $(w,p) \in S_1 \cup S_2$, we round $w$ up to the nearest multiple of $\eps a$, and round $p$ down to the nearest multiple of $\eps b$. This step incurs an additive error of $(O(\eps a), O(\eps b))$. By scaling, we assume that all the tuple entries of $S_1$ and $S_2$ are integers bounded by $O(\frac{1}{\eps})$. All dominated tuples in $S_1$ and $S_2$ can be removed (this can be done in $O(\ell\log \ell)$ time by sorting), since they cannot contribute to $S_1 \oplus S_2$. After that, $S_1$ and $S_2$ become monotone and bounded by $O(\frac{1}{\eps})$. Then we compute $S_1 \oplus S_2$ by Lemma~\ref{lem:bm-conv-accurate}, which takes $\tilde{O}((\frac{1}{\eps})^{3/2})$ time and succeeds with high probability. It is also easy to see that the resulting set is dominated by the original $S_1 + S_2$ because rounding always makes a tuple worse (that is, with larger $w$ and less $p$).

    To finish the proof, we repeat the above procedure for every $a \in [1, \frac{B}{A}]$ and every $b \in [1, \frac{D}{C}]$ that are powers of $2$, and take the union of all resulting sets. (One can also remove the dominated tuples in the union in $\tilde{O}(\frac{1}{\eps})$ time, although this is not necessary due to Definition~\ref{def:approx-set}.) Note that for tuples $(w, p) \in [a/2, a]\times [b/2, b]$, the additive error $(O(\eps a), O(\eps b))$ implies an approximate factor of $1 + O(\eps)$. The running time increases by a factor of $O(\log \frac{B}{A} \cdot \log \frac{D}{C})$.
\end{proof}

Now we are ready to present an approximation algorithm for computing the $(\max,+)$-convolution of multiple sets. It follows the idea of~\cite[Lemma 2(i)]{Chan18}.
\begin{proof}[Proof of Lemma~\ref{lem:merging}]
    We use a divide-and-conquer approach:  compute $S_1 \oplus \cdots \oplus S_{m/2}$ and $S_{m/2+1} \oplus \cdots \oplus S_{m}$ recursively, and merge the two resulting sets by Lemma~\ref{lem:bm-conv-approx}. The recursion tree has $O(m)$ nodes, so the total running time is $\tilde{O}(\ell + (\frac{1}{\eps})^{3/2}m)$. ($\ell$ comes from the leaf nodes.) Since there are $O(\log m)$ levels, and each level of the tree incurs an approximation factor of $1 + O(\eps)$, the overall approximation factor is $1 + O(\eps \log m)$. We can adjust $\eps$ by a factor of $\log m$, and the running time increases by polylogarithmic factors.
\end{proof}

\section{Details of Approximation Errors}\label{app:err}
We prove the following two lemmas.
\lemapproxerra*
\begin{proof}
    We only give a proof for statement (i), and statement (ii) can be proved similarly. Let $(w,p)$ be an arbitrary tuple in $S$. By Definition~\ref{def:approx-set}, there exists $(w_1, p_1) \in S_1$ such that $w_1 \leq (1 + \eps_1)w $ and $p_1 \geq \frac{p}{1 + \eps_1}$, and there exists $(w_2, p_2) \in S_2$ such that $w_2 \leq (1 + \eps_2)w_1 $ and $p_2 \geq \frac{p_1}{1 + \eps_2}$. It is easy to see that $w_2 \leq (1 + \eps_1)(1 + \eps_2)w$ and that $p_2 \geq \frac{p}{(1 + \eps_1)(1 + \eps_2) }$. 
\end{proof}

\lemapproxerrb*
\begin{proof}
    We only give a proof for statement (i), and statement (ii) can be proved similarly. 

    We first prove that $S'_1 + S'_2$ approximates $S_1 + S_2$ with factor $1 + \eps$. Let $(w,p)$ be an arbitrary tuple in $S_1 + S_2$. By definition, $(w, p) = (w_1, p_1) + (w_2, p_2)$ for some $(w_1, p_1) \in S_1$ and $(w_2, p_2) \in S_2$.  By Definition~\ref{def:approx-set}, there exists $(w'_1, p'_1) \in S'_1$ such that $w'_1 \leq (1 + \eps)w_1 $ and $p'_1 \geq \frac{p_1}{1 + \eps}$, and there exists $(w'_2, p'_2) \in S'_2$ such that $w'_2 \leq (1 + \eps)w_2 $ and $p'_2 \geq \frac{p_2}{1 + \eps}$. Let $(w', p') = (w'_1, p'_1) + (w'_2, p'_2)$.  We have that $(w', p') \in S'_1 + S'_2$, $w' \leq (1 + \eps)w$, and $p' \geq p/(1 + \eps)$. 

    Next we prove that $S'_1 \oplus S'_2$ approximates $S_1 \oplus S_2$ with factor $1 + \eps$.  Let $(w,p)$ be an arbitrary tuple in $S_1 \oplus S_2$.  As in the above paragraph, there exists $(w', p') \in S'_1 + S'_2$ such that $w' \leq (1 + \eps)w$, and $p' \geq p/(1 + \eps)$.  If $(w', p') \in S'_1 \oplus S'_2$, then we are done. If not, it must be that $(w', p')$ is dominated by some tuple $(w'', p'') \in S'_1 \oplus S'_2$, but then $w'' \leq (1 + \eps)w$, and $p'' \geq p/(1 + \eps)$.
\end{proof}

\section{Details of Problem Reduction}\label{app:reducing}
Recall that the reduced problem is defined as follows.
\begin{quote}
The reduced problem $\mathrm{RP}(\alpha)$: given a set $I$ of $n$ items from $[1,2] \times [1,2]$ and a real number $\alpha \in [\frac{1}{2}, \frac{1}{4\eps}]$, compute a set $S$ that approximates ${\cal S}^+(I) \cap ([0,\frac{1}{\alpha\eps}] \times [0, \frac{1}{\alpha\eps}])$ with additive error $(\tilde{O}(\frac{1}{\alpha}), \tilde{O}(\frac{1}{\alpha}))$. Moreover, $S$ should be dominated by $\mathcal{S}(I)$.
\end{quote}

We shall prove the following lemma.
\lemreduce*

Let $(I, t)$ be a Knapsack instance. Let $\mathrm{OPT}$ be the optimal value. Recall that to obtain a weak approximation, it suffices to compute a set that approximates $\mathcal{S}^+(I) \cap ([0, t] \times [0, OPT])$ with additive error $(\eps t, \eps\cdot \mathrm{OPT})$. (Strictly speaking, the set should also be dominated by $\mathcal{S}(I)$. One can verify that this property follows quite straightforwardly. So we omit this property.) We can slightly relax the permissible additive error to $(\tilde{O}(\eps t), \tilde{O}(\eps\cdot \mathrm{OPT}))$ because an additive error of $(\tilde{O}(\eps t), \tilde{O}(\eps\cdot \mathrm{OPT}))$ can be reduced to $(\eps t, \eps\cdot \mathrm{OPT})$ by adjusting $\eps$ by a polylogarithmic factor and this increases the running time by only a logarithmic factor.

\subsection{Preprocessing}
We first preprocess the instance so that the following properties hold after the preprocessing.
\begin{itemize}
    \item $(w, p) \in [1, \frac{1}{\eps}] \times [1, \frac{1}{\eps}]$ for every item $(w, p) \in I$;

    \item $\frac{1}{2\eps} \leq \mathrm{OPT} \leq \frac{1}{\eps}$ and $t = \frac{1}{\eps}$.
\end{itemize}
The preprocessing takes $O(n \log n)$ time.

\subsubsection{Preprocessing Profits}
Knapsack has an $O(n)$-time greedy algorithm that achieves an approximation ratio of $2$~\cite[Section 2.5]{KPD04}. Let $y$ be the value of this greedy solution. Then $y \leq \mathrm{OPT} \leq 2y$. We say an item $(w, p)$ is cheap if $p \leq 2\eps y$. We shall merge cheap items into meta-items with large profits.  Let $I^c$ be the set of cheap items. We partition $I^c$ into $I^c_1, \ldots, I^c_\ell$ so that
\begin{romanenumerate}
    \item For every integer $j \in [1, \ell-1]$, the efficiency of an item in $I^c_j$ is no smaller than that of any item in $I^c_{j+1}$. That is, $\frac{p}{w} \geq \frac{p'}{w'}$ for any $(w, p) \in I^c_j$ and any $(w', p') \in I^c_{j+1}$.

    \item $2\eps y < p(I^c_j) \leq 4\eps y$ for every integer $j \in [1, \ell - 1]$ and $p(I^c_\ell) \leq 4\eps y$. 
\end{romanenumerate}
The partition can be done in $O(n \log n)$ time by scanning the items in $I^c$ in decreasing order of efficiency. The items in $I^c_\ell$ can be safely discarded. This decreases the optimal value by at most $4\eps y$.  For every remaining group, $2\eps y < p(I^c_j) \leq 4\eps y$, and we replace it with a single meta-item $(p(I^c_j), w(I^c_j))$. We claim that replacing all $I^c_j$'s with meta-items decreases the optimal value by at most $4\eps y$. To see this, consider any subset $S$ of cheap items. By selecting meta-items in decreasing order of efficiency, we can always obtain a subset $S'$ of meta-items such that $p(S') \geq p(S) - 4\eps y$ and $w(S') \leq w(S)$.

Now every item has $p_i > 2\eps y$. Without loss of generality, we assume that $p_i \leq 2y$ (since $OPT \leq 2y$). Then we scale all the profit by $2\eps y$. That is, we set $p_i:=  \frac{p_i}{2\eps y} $ for every $p_i$.  After scaling, every item has $p_i \in (1, \frac{1}{\eps}]$, and $\frac{1}{2\eps} \leq \mathrm{OPT} \leq \frac{1}{\eps}$.

\subsubsection{Preprocessing Weights}
Preprocessing weights is similar to preprocessing profits. We say an item is small if $w_i \leq \eps t$. Let $I^s$ be the set of small items.  We partition $I^s$ into $I^s_1, \ldots, I^s_\ell$ so that
\begin{romanenumerate}
    \item For every integer $j \in [1, \ell-1]$, the efficiency of an item in $I^s_j$ is no smaller than that of any item in $I^s_{j+1}$. That is, $\frac{p}{w} \geq \frac{p'}{w'}$ for any $(w, p) \in I^s_j$ and any $(w', p') \in I^s_{j+1}$.

    \item $\eps t < p(I^s_j) \leq 2\eps t$ for every integer $j \in [1, \ell - 1]$ and $p(I^s_\ell) \leq 2\eps t$.
\end{romanenumerate}
It can be done in $O(n \log n)$ time by scanning the items in $I^s$ in decreasing order of efficiency.  The items in $I^s_\ell$ can be ignored, since we can always select these items using an additional capacity of at most $2\eps t$. For every remaining group, $\eps t < p(I^s_j) \leq 2\eps t$, and we replace it with a single meta-item $(p(I^s_j), w(I^s_j))$. We claim that replacing all $I^s_j$'s with meta-items increases the given capacity by at most $2\eps t$. To see this, consider any subset $S$ of small items. By selecting meta-items in decreasing order of efficiency, we can always obtain a subset $S'$ of meta-items such that $p(S') \geq p(S)$ and $w(S') \leq w(S) + 2\eps t$.

Now every item has $w_i > \eps t$. Without loss of generality, we assume that $w_i \leq t$. Then we scale all the weights by $\eps t$. That is, we set $w_i:=  \frac{w_i}{\eps t} $ for every $w_i$.  After scaling, every item has $w_i \in (1, \frac{1}{\eps}]$, and $t = \Theta(\frac{1}{\eps})$.

\subsection{Partitioning Items into Groups}
After preprocessing, $\frac{1}{2\eps} \leq OPT \leq \frac{1}{\eps}$ and $t = \frac{1}{\eps}$. Our goal becomes computing a set that approximates ${\cal S}^+(I) \cap ([0, \frac{1}{\eps}]\times [0, \frac{1}{\eps}])$ with additive error $(\tilde{O}(1), \tilde{O}(1))$.  We shall compute such a set by the following three steps.
\begin{romanenumerate} 
    \item Recall that, after preprocessing, all the tuples in $I$ are in $[1, \frac{1}{\eps}] \times [1, \frac{1}{\eps}]$. Partition $I$ into $O(\log^2 \frac{1}{\eps})$ groups so that within each group, all item weights differ by a factor of at most $2$, and all item profits differ by a factor of at most $2$.

    \item for each group $I'$, compute a set $S'$ that approximates ${\cal S}^+(I') \cap ([0, \frac{1}{\eps}]\times [0, \frac{1}{\eps}])$ with additive error $(\tilde{O}(1), \tilde{O}(1))$. 

    \item merging the resulting sets via Lemma~\ref{lem:merging}.
\end{romanenumerate}
Step (i) costs $\tilde{O}(n)$ time. Since there are only $O(\log^2 \frac{1}{\eps})$ groups, Step (iii) costs $\tilde{O}((\frac{1}{\eps})^{3/2})$ time and increases the additive error by polylogarithmic factors.

It remains to consider Step (ii). Let $I'$ be an arbitrary group.  The tuples in $I'$ are from $[a, 2a]\times [b, 2b]$ for some $a,b \in [1, \frac{1}{2\eps}]$. Let $\alpha = \frac{\max\{a, b\}}{2}$. Note that $\alpha \in [\frac{1}{2},\frac{1}{4\eps}]$. By scaling, we can assume that every $(w, p) \in [1, 2] \times [1,2]$. After the scaling, the permissible additive error becomes $(\tilde{O}(1/a), \tilde{O}(1/b)) \geq (\tilde{O}(1/\alpha), \tilde{O}(1/\alpha))$. The part of $S^+(I')$ that needs to be approximated becomes $S^+(I') \cap ([0, \frac{1}{a\eps}]\times [0, \frac{1}{b\eps}])$.  Since the tuples in $I'$ are from $[1, 2] \times [1,2]$, we have $\frac{1}{2} \leq \frac{p}{w} \leq 2$ for every $(w, p) \in {\cal S}^+(I')$.  Therefore, 
\[
    (S^+(I') \cap ([0, \frac{1}{a\eps}]\times [0, \frac{1}{b\eps}])) \subseteq 
    S^+(I') \cap ([0, \frac{1}{\alpha\eps}]\times [0, \frac{1}{\alpha\eps}]).
\]
In conclusion, all we need to do is to approximate $S^+(I') \cap ([0, \frac{1}{\alpha\eps}]\times [0, \frac{1}{\alpha\eps}])$ with additive error $(\tilde{O}(\frac{1}{\alpha}), \tilde{O}(\frac{1}{\alpha}))$.

\section{Details of the Algorithm for Bounded Solution Size}\label{app:color-coding}
We shall prove the following lemma. 

\lemalgforlargealpha*


Since the items in $I$ are from $[1,2]\times [1,2]$, we have that  $|X| \leq \frac{1}{\alpha\eps}$ for any $X \subseteq I$ with $(w(X), p(X)) \in {\cal S}^+(I;\frac{1}{\alpha\eps})$. To deal with this case where the solution size is small, we use color coding. To reduce the running time, Bringmann~\cite{Bri17} proposed a two-layer color-coding technique, leading to an $\tilde{O}(n + t)$-time algorithm for Subset Sum. This technique was also used in later works for Subset Sum~\cite{CLMZ24cSTOCPartition,CLMZ24FOCS}.

Let $X$ be a subset of $I$ with $(w(X), p(X)) \in {\cal S}^+(I;\frac{1}{\alpha\eps})$.  Roughly speaking, if we randomly partition $I$ into $\tilde{O}(\frac{1}{\alpha\eps})$ subsets, then with high probability, each subset contains at most one element from $X$. After adding $(0,0)$ to each subset, and merging them via Lemma~\ref{lem:merging}, we can obtain a set that (approximately) contains $(w(X), p(X))$ in $\tilde{O}((\frac{1}{\eps})^{5/2}\frac{1}{\alpha})$ time.

The standard color-coding is based on balls-and-bins analysis, and can be summarized by the following fact.
\begin{fact}\label{fact:colorcoding}
    Let $I$ be a set of items. Let $k>0$ be an integer. Let $I_1 \cup \cdots \cup I_{k^2}$ be a random partition of $I$. For any $X\subseteq I$ and $|X|\leq k$, with probability at least $1/4$, we have that
    \(
        | X \cap I_i| \leq 1
    \)
    for all $i \in \{1, \ldots, k^2\}$.
\end{fact}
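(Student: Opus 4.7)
The plan is to prove the fact by a standard birthday-type argument: view the random partition as independently assigning each element of $I$ a uniformly random ``color'' in $\{1, \ldots, k^2\}$, then upper bound the probability that two elements of $X$ receive the same color. Concretely, I would first reduce to the case $|X| = k$, since adding elements to $X$ can only make collisions more likely, and a bound for $|X| = k$ therefore implies the same bound for all smaller $|X|$.

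Next, I would fix an arbitrary $X = \{x_1, \ldots, x_k\} \subseteq I$ and, for each ordered pair $1 \leq a < b \leq k$, define the event $E_{ab}$ that $x_a$ and $x_b$ are placed into the same block of the partition. Because the partition is uniformly random, each element is assigned to a block chosen independently and uniformly from $\{1, \ldots, k^2\}$, and so $\Pr[E_{ab}] = 1/k^2$ for every such pair. The bad event ``some block $I_i$ contains at least two elements of $X$'' is exactly $\bigcup_{a<b} E_{ab}$.

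Applying a union bound gives
\[
    \Pr\!\left[\exists i : |X \cap I_i| \geq 2\right] \;\leq\; \binom{k}{2}\cdot \frac{1}{k^2} \;=\; \frac{k-1}{2k} \;<\; \frac{1}{2}.
\]
Taking complements yields $\Pr[\forall i : |X \cap I_i| \leq 1] > 1/2 \geq 1/4$, which is exactly the claim.

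Since the argument is essentially a one-line union bound, there is no real obstacle. The only subtlety worth noting is the exact probabilistic model of ``random partition'': one needs the assignments of distinct elements to blocks to be pairwise independent and uniform on $\{1, \ldots, k^2\}$, which holds for the standard model where each element is placed independently and uniformly. If a different model (e.g., a uniformly random ordered partition into $k^2$ blocks of equal size) were intended, the pairwise collision probability would still be $\leq 1/k^2$ by a direct counting argument, so the bound goes through unchanged.
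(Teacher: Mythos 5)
Your proof is correct. The paper does not actually prove this fact---it simply states it as a summary of ``standard balls-and-bins analysis'' (following Bringmann's color-coding technique, where a random partition means each item is independently and uniformly assigned to one of the $k^2$ parts)---and your pairwise union bound is exactly that standard birthday-type argument, yielding a no-collision probability greater than $1/2$, which is even stronger than the stated $1/4$ (the constant $1/4$ comes from the usual product-form estimate $\prod_{i<k}(1-i/k^2)\geq 1/4$, but either computation suffices). Your remark about the probabilistic model is the right caveat, and it is resolved the way you suggest: the intended model is independent uniform assignment, under which the pairwise collision probability is exactly $1/k^2$.
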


\begin{lemma}\label{lem:secend-layer-color-coding-algorithm}
    Let $I$ be a set of items. Let $k>0$ be an integer. In $\tilde{O}(n + (\frac{1}{\eps})^{3/2}k^2)$ time, we can compute a set $\tilde{S}$ of size $\tilde{O}(\frac{1}{\eps})$ that for any $X \subseteq I$ with $|X| \leq k$ and $(w(X),p(X))\in \mathcal{S}^+(I)$, with high probability, $\tilde{S}$ approximates $(w(X), p(X))$ with factor $1 + O(\eps)$.
\end{lemma}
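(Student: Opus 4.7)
The plan is to apply one-layer color coding via Fact~\ref{fact:colorcoding} and then merge with Lemma~\ref{lem:merging}. First, I would randomly partition $I$ into $I_1 \cup \cdots \cup I_{k^2}$ and set $S_i = I_i \cup \{(0,0)\}$ for each $i$, viewing items as tuples in the natural way. I would then invoke Lemma~\ref{lem:merging} on $S_1, \ldots, S_{k^2}$ (total size $\ell = n + k^2$ and $m = k^2$ sets) to obtain a set $\tilde{S}'$ of size $\tilde{O}(\frac{1}{\eps})$ that approximates $S_1 \oplus \cdots \oplus S_{k^2}$ with factor $1 + O(\eps)$, in $\tilde{O}(n + (\frac{1}{\eps})^{3/2} k^2)$ time.

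For correctness, fix any $X \subseteq I$ with $|X| \leq k$ and $(w(X), p(X)) \in \mathcal{S}^+(I)$. By Fact~\ref{fact:colorcoding}, with probability at least $1/4$ (over the random partition), every $I_i$ contains at most one item of $X$. When this event occurs, $X$ can be written as $X = \bigcup_i X_i$ with $X_i \subseteq I_i$ and $|X_i| \leq 1$, so $(w(X), p(X)) = \sum_i (w(X_i), p(X_i))$ is a sum of one tuple from each $S_i$ and therefore lies in $S_1 + \cdots + S_{k^2} \subseteq \mathcal{S}(I)$. Because $(w(X), p(X))$ is not dominated in $\mathcal{S}(I)$, it is not dominated in $S_1 + \cdots + S_{k^2}$ either, so it belongs to $S_1 \oplus \cdots \oplus S_{k^2}$, and Lemma~\ref{lem:merging} gives a $(1+O(\eps))$-approximation of it in $\tilde{S}'$.

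To boost the per-$X$ success probability from $1/4$ to $1 - (n+\tfrac{1}{\eps})^{-\Omega(1)}$, I would repeat the whole procedure $r = \Theta(\log(n+\tfrac{1}{\eps}))$ times with fresh random partitions, obtaining sets $\tilde{S}'_1, \ldots, \tilde{S}'_r$, and output $\tilde{S} = \bigcup_j \tilde{S}'_j$. The size of $\tilde{S}$ remains $\tilde{O}(\frac{1}{\eps})$ and the total running time remains $\tilde{O}(n + (\frac{1}{\eps})^{3/2} k^2)$ up to logarithmic factors. The main (mild) obstacle is combining the two sources of randomness: Fact~\ref{fact:colorcoding} gives only a constant per-trial probability, and Lemma~\ref{lem:merging} already fails with inverse-polynomial probability. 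A union bound over the $r$ repetitions ensures that at least one trial both produces a valid coloring for $X$ and runs Lemma~\ref{lem:merging} correctly, giving the claimed high-probability guarantee.
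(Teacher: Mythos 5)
Your proposal is correct and takes essentially the same approach as the paper's: a random $k^2$-partition combined with Fact~\ref{fact:colorcoding}, merging via Lemma~\ref{lem:merging}, and boosting the constant success probability by $\Theta(\log(n+\frac{1}{\eps}))$ independent repetitions whose outputs are unioned. The only cosmetic difference is that you spell out more explicitly why $(w(X),p(X))$ survives as a non-dominated tuple in $S_1 \oplus \cdots \oplus S_{k^2}$ and how the two sources of failure probability combine, which the paper treats more tersely.
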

\begin{algorithm}
\caption{$\mathtt{ColorCoding}(I, k, q)$~\cite{Bri17}}
\label{alg:color-coding}
    \begin{algorithmic}[1]
    \Statex \textbf{Input:} A set of $n$ tuples $I$, a positive integer $k$, and a error probability $\delta$
    \Statex \textbf{Output:} a set of $S$
        \For{$j =1,\ldots,\lceil \log \frac{1}{q} \rceil$}
           \State Randomly partition $I$ into $k^2$ subsets $I^j_1, \ldots, I^j_{k^2}$\;
           \State Compute $\tilde{S}_j$ that approximates $(I^j_1\cup\{(0,0)\}) \oplus \cdots \oplus (I^j_{k^2}\cup\{(0,0)\})$ with factor $1+O(\eps)$ by Lemma~\ref{lem:merging}.
        \EndFor
        \State Compute $\tilde{S} = \cup_j \tilde{S}_j$
  \State \Return $\tilde{S}$
  \end{algorithmic}
\end{algorithm}
\begin{proof}
    
    We fix $q: = (n + \frac{1}{\eps})^{-\Omega(1)}$ and let $\tilde{S}$ be the output of $\mathtt{ColorCoding}(I, k, q)$ in Algorithm~\ref{alg:color-coding}. In this proof, we assume that line 3 of  Algorithm~\ref{alg:color-coding} always succeeds, and this happens with high probability.
    
    We first show that the running time of Algorithm~\ref{alg:color-coding} is  $\tilde{O}(n + (\frac{1}{\eps})^{3/2}k^2)$ time. Each iteratin costs $\tilde{O}(n + (\frac{1}{\eps})^{3/2}k^2)$ time and there are $O(\log \frac{1}{q})$ iterations, so all the iterations costs  $\tilde{O}(n + (\frac{1}{\eps})^{3/2}k^2)$ time. Line 4 costs $\tilde{O}(\frac{1}{\eps})$ time. Also, each $S_j$ is of size $\tilde{O}(\frac{1}{\eps})$ by Lemma~\ref{lem:merging} and there are $O(\log \frac{1}{q})$ iterations, so the size of $S$ is $\tilde{O}(\frac{1}{\eps})$.

    For any $X\subseteq I$ with $|X|\leq k$,
    by Fact~\ref{fact:colorcoding}, for each $j\in\{1,\ldots,\lceil \log \frac{1}{q} \rceil\}$, with probability at least $1/4$, $|X\cap I^j_i|\leq 1$ for all $i\in \{1, \ldots, k^2\}$. That is, $(w(X\cap I^j_i),p(X\cap I^j_i))\in (I^j_i\cup\{(0,0)\})$. Since $(w(X),p(X))\in \mathcal{S}^+(I)$, it is easy to see that with probability at least $1/4$, $(w(X),p(X))\in (I^j_1\cup\{(0,0)\}) \oplus \cdots \oplus (I^j_{k^2}\cup\{(0,0)\})$. Hence, with probability at least $1/4$, $\tilde{S_j}$ approximates $(w(X), p(X))$ with factor $1 + O(\eps)$.

   Since $\tilde{S}=\cup_{j}\tilde{S}_j$, it suffices that $(w(X), p(X))$ is approximated by some $\tilde{S_j}$. It is easy to see that with probability at least $1-(3/4)^{\lceil \log \frac{1}{q} \rceil} \geq 1-q$, there exists some $j^*\in \{1,\ldots,\lceil \log \frac{1}{q} \rceil\}$ such that $\tilde{S}_{j^*}$ approximates $(w(X), p(X))$ with factor $1 + O(\eps)$. So with high probability, $\tilde{S}$ approximates $(w(X), p(X))$ with factor $1 + O(\eps)$.
\end{proof}

However, the running time of Algorithm~\ref{alg:color-coding} depends on $k^2$, which is too large for our purpose. So we use the two-layer color-coding technique proposed by Bringmann~\cite{Bri17}, which reduces the running time by a factor of $k$.

\begin{lemma}[\cite{Bri17}]\label{lem:first-layer-color-coding}
    Let $I$ be a set of items. Let $k>0$ be an integer, and let $q\in (0,1)$. Let $I_1,\ldots,I_{k}$ be a random partition of $I$. For any $X\subseteq I$ with $|X|\leq k$, with probability at least $1-q$, we have that
    \(
        |X\cap I_j| \leq 6\log\frac{k}{q}
    \)
    for all $j \in \{1,\ldots, k\}$.
\end{lemma}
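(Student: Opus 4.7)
The plan is a standard balls-into-bins analysis: bound the occupancy of a single part via a direct upper-tail estimate, then take a union bound over the $k$ parts. A ``random partition'' here means that each element of $I$ is independently assigned to one of the $k$ parts $I_1,\ldots,I_k$ uniformly at random; this is the usual convention for color coding, and is what makes the independence in the calculation below work.

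First, fix an arbitrary index $j \in \{1,\ldots,k\}$ and set $Z_j = |X \cap I_j|$. Each of the $|X| \leq k$ elements of $X$ lands in $I_j$ independently with probability $1/k$, so $Z_j$ is stochastically dominated by a $\mathrm{Binomial}(|X|, 1/k)$ random variable, whose mean is at most $1$. I would then bound the upper tail by the direct combinatorial estimate
\[
    \Pr[Z_j \geq c] \;\leq\; \binom{|X|}{c}\left(\frac{1}{k}\right)^c \;\leq\; \left(\frac{|X|\,e}{c\,k}\right)^c \;\leq\; \left(\frac{e}{c}\right)^c,
\]
using the standard inequality $\binom{n}{c} \leq (ne/c)^c$ together with $|X| \leq k$.

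Next, I would set $c = \lceil 6 \log(k/q) \rceil$ and verify that $(e/c)^c \leq q/k$, which after taking logarithms becomes $c \log(c/e) \geq \log(k/q)$. We may assume $\log(k/q) \geq 1$, since otherwise $6\log(k/q) \geq 1$ fails only when $k/q$ is bounded by a small constant, in which case the trivial bound $Z_j \leq |X| \leq k$ together with a small adjustment of the constant handles things. Under $\log(k/q) \geq 1$ we have $c \geq 6$, hence $c/e \geq 2$ and $\log(c/e) \geq 1$, so $c\log(c/e) \geq c \geq 6\log(k/q) \geq \log(k/q)$ as needed.

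Finally, a union bound over the $k$ parts gives
\[
    \Pr\!\left[\max_{j \in \{1,\ldots,k\}} Z_j \geq c\right] \;\leq\; k \cdot \frac{q}{k} \;=\; q,
\]
which is exactly the stated conclusion (with $c \leq 6\log(k/q)$ after a tiny adjustment to absorb the ceiling). The main obstacle, such as it is, is only the bookkeeping: making sure the mean-$\leq 1$ Poisson-type tail $(e/c)^c$ is strong enough to beat both the union-bound factor of $k$ and the small-$k/q$ corner cases by choosing the leading constant $6$; there is no genuinely difficult step, since the independence of assignments and the mean bound $\mathbb{E}[Z_j] \leq 1$ follow immediately from the definition of a uniform random partition.
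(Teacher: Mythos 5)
The paper does not prove this lemma at all---it is imported directly from Bringmann~\cite{Bri17}---and your argument is essentially the proof given there: each $|X\cap I_j|$ is a sum of at most $k$ independent Bernoulli$(1/k)$ indicators with mean at most $1$, an upper-tail bound shows a single part overflows $6\log\frac{k}{q}$ with probability at most $q/k$, and a union bound over the $k$ parts finishes. Your tail estimate via $\Pr[Z_j\ge c]\le (|X|e/(ck))^c\le (e/c)^c$ is a valid substitute for the multiplicative Chernoff bound used in the original, and the verification for $c=\lceil 6\log\frac{k}{q}\rceil$ is correct; the only loose end is the corner case $\log\frac{k}{q}<1$ (e.g.\ $k=1$ with $q$ close to $1$), which you dismiss somewhat informally, but this is immaterial here since the paper only invokes the lemma with $q=(n+\frac{1}{\eps})^{-\Omega(1)}$.
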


\begin{lemma}\label{lem:first-layer-color-coding-algorithm}
    Let $I$ be a set of items. Let $k>0$ be an integer. In $\tilde{O}(n + (\frac{1}{\eps})^{3/2}k)$ time, we can compute a set $\tilde{S}$ of size $\tilde{O}(\frac{1}{\eps})$ that for any $X \subseteq I$ with $|X| \leq k$ and $(w(X),p(X))\in \mathcal{S}^+(I)$, with high probability, $\tilde{S}$ approximates $(w(X), p(X))$ with factor $1 + O(\eps)$.
\end{lemma}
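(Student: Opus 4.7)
The plan is to run a two-layer color-coding in the style of Bringmann~\cite{Bri17}: apply Lemma~\ref{lem:first-layer-color-coding} as the outer layer to reduce to small-size subproblems, and then invoke Lemma~\ref{lem:secend-layer-color-coding-algorithm} as a black box on each part. Fix $q = (n+\frac{1}{\eps})^{-\Omega(1)}$. First I would randomly partition $I$ into $k$ subsets $I_1,\ldots,I_k$. By Lemma~\ref{lem:first-layer-color-coding}, with probability at least $1-q$, every $X\subseteq I$ with $|X|\leq k$ satisfies $|X\cap I_j|\leq k':=6\log(k/q)=\tilde{O}(1)$ simultaneously for all $j$.

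Next, for each part $I_j$ I would invoke Lemma~\ref{lem:secend-layer-color-coding-algorithm} with bound $k'$ to obtain a set $\tilde{S}_j$ of size $\tilde{O}(\frac{1}{\eps})$ that, with high probability, approximates $(w(Y),p(Y))$ with factor $1+O(\eps)$ for every $Y\subseteq I_j$ with $|Y|\leq k'$ and $(w(Y),p(Y))\in \mathcal{S}^+(I_j)$. Because $(k')^2=\tilde{O}(1)$, each such call runs in $\tilde{O}(|I_j|+(\frac{1}{\eps})^{3/2})$ time, so summing over $j$ costs $\tilde{O}(n+(\frac{1}{\eps})^{3/2}k)$. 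Finally, I would merge $\tilde{S}_1,\ldots,\tilde{S}_k$ via Lemma~\ref{lem:merging} into a single $\tilde{S}$ of size $\tilde{O}(\frac{1}{\eps})$ in an additional $\tilde{O}(\frac{k}{\eps}+(\frac{1}{\eps})^{3/2}k)=\tilde{O}((\frac{1}{\eps})^{3/2}k)$ time, matching the stated running time.

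For correctness, fix $X\subseteq I$ with $|X|\leq k$ and $(w(X),p(X))\in \mathcal{S}^+(I)$, and set $X_j=X\cap I_j$. The key observation is that $(w(X_j),p(X_j))\in \mathcal{S}^+(I_j)$ for every $j$: if some $Y\subseteq I_j$ strictly dominated $(w(X_j),p(X_j))$, then $Y\cup (X\setminus X_j)$ would strictly dominate $(w(X),p(X))$ in $\mathcal{S}(I)$, contradicting Pareto-optimality. Conditioned on the outer partition succeeding, $|X_j|\leq k'$, so each $\tilde{S}_j$ approximates $(w(X_j),p(X_j))$ with factor $1+O(\eps)$, and by Lemma~\ref{lem:approx-err-2} the merged set $\tilde{S}$ approximates $(w(X),p(X))=\sum_j (w(X_j),p(X_j))$ with factor $1+O(\eps)$.

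The main obstacle, such as it is, is probabilistic bookkeeping: the outer partition fails with probability at most $q$, and each of the $k$ inner color-codings (as well as the $O(k)$ convolutions inside Lemma~\ref{lem:merging}) independently fails with probability $(n+\frac{1}{\eps})^{-\Omega(1)}$. Boosting each subroutine's success probability by a polylogarithmic factor and taking a union bound keeps the overall failure probability at $(n+\frac{1}{\eps})^{-\Omega(1)}$, so the conclusion holds with high probability as claimed.
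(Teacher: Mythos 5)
Your proof is correct and follows essentially the same two-layer color-coding route as the paper: outer random partition via Lemma~\ref{lem:first-layer-color-coding}, inner calls to Lemma~\ref{lem:secend-layer-color-coding-algorithm} with parameter $6\log\frac{k}{q}$, and a final merge via Lemma~\ref{lem:merging}; you additionally supply the (correct) justification that $(w(X_j),p(X_j))\in\mathcal{S}^+(I_j)$, which the paper only asserts. One small wording slip: Lemma~\ref{lem:first-layer-color-coding} gives a per-$X$ probability bound, not a guarantee simultaneously over all $X\subseteq I$ with $|X|\leq k$; your later correctness paragraph fixes a single $X$, so the argument goes through unchanged, but the sentence in your first paragraph overstates the lemma.
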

\begin{proof}
    We fix $q: = (n + \frac{1}{\eps})^{-\Omega(1)}$ and randomly partition $I$ into $k$ subsets $I_1,\ldots,I_{k}$. 
    
    For each $I_i$, by Lemma~\ref{lem:secend-layer-color-coding-algorithm}, we can compute a 
    set $\tilde{S}_i$ of size $\tilde{O}(\frac{1}{\eps})$ in $\tilde{O}(|I_i|+(\frac{1}{\eps})^{3/2}(6\log\frac{k}{q})^2)$ time such that for any $X' \subseteq I_i$ with $|X'| \leq 6\log\frac{k}{q}$ and $(w(X'),p(X'))\in \mathcal{S}^+(I_i)$, with high probability, $\tilde{S}_i$ approximates $(w(X'), p(X'))$ with factor $1 + O(\eps)$.
    
    For any $X \subseteq I$ with $|X| \leq k$ and $(w(X),p(X))\in \mathcal{S}^+(I)$, by Lemma~\ref{lem:first-layer-color-coding}, with probability at least $1-q$, $|X\cap I_i|\leq 6\log\frac{k}{q}$ and $(w(X\cap I_i),p(X\cap I_i))\in \mathcal{S}^+(I_i)$. Then with high probability, $\tilde{S}_i$ approximates $(w(X\cap I_i), p(X\cap I_i))$ with factor $1 + O(\eps)$.

    By Lemma~\ref{lem:merging}, in $\tilde{O}(n+(\frac{1}{\eps})^{3/2}k)$ time and with high probability, we can compute a set $\tilde{S}$ of size $\tilde{O}(\frac{1}{\eps})$ that approximates $\tilde{S}_1\oplus\cdots\oplus\tilde{S}_k$ with factor $1 + O(\eps)$. It is easy to see that with high probability, $\tilde{S}$ approximates $(w(X),p(X))$ with factor $1 + O(\eps)$.

    The overall time is $\tilde{O}(n+(\frac{1}{\eps})^{3/2}k)$.
\end{proof}

Since for any $(w(X),p(X))\in {\cal S}^+(I;\frac{1}{\alpha\eps})$, $|X|\leq \frac{1}{\alpha\eps}$, we can prove Lemma~\ref{lem:alg-for-large-alpha}.

\begin{proof}[Proof of Lemma~\ref{lem:alg-for-large-alpha}]
    For any $(w(X),p(X))\in {\cal S}^+(I;\frac{1}{\alpha\eps})$, we have $|X|\leq \frac{1}{\alpha\eps}$. So by Lemma~\ref{lem:first-layer-color-coding-algorithm},
    in $\tilde{O}(n + (\frac{1}{\eps})^{5/2}\frac{1}{\alpha})$ time, we can compute a set of size $\tilde{O}(\frac{1}{\eps})$ that for any $(w(X),p(X))\in {\cal S}^+(I;\frac{1}{\alpha\eps})$, with high probability, approximates $(w(X),p(X))$ with factor $1 + O(\eps)$.
    
    To approximate every $(w(X),p(X))\in {\cal S}^+(I;\frac{1}{\alpha\eps})$ with high probability, we can round all weights and profits in $I$ to the multiple of $\eps$ with a factor $1+\eps$. Since for any $(w,p)\in\mathcal{S}^+(I;\frac{1}{\alpha\eps})$, $w\leq 2n$ and $\mathcal{S}^+(I;\frac{1}{\alpha\eps})$ is monotone, we have $|\mathcal{S}^+(I;\frac{1}{\alpha\eps})|\leq \frac{2n}{\eps}$.
    Then by union bound it follows directly that with high probability, $\tilde{S}$ approximates ${\cal S}^+(I; \frac{1}{\alpha\eps})$ with factor $1 + O(\eps)$. 
\end{proof}

\section{Details of the Proximity Bound}\label{app:proximity}
We shall prove the following lemma.
\lemproximity*
    
\begin{proof}
    We label the items in $I$ as  $\{(w_1, p_1), \ldots, (w_n, p_n)\}$ in decreasing order of efficiency. (Recall that the labeling is unique by fixing a total order on the items to break ties.)  Let $(w_b, p_b)$ be the breaking item with respect to $w^*$. If $b = n + 1$, then $I^* = I$ since $w_1 + \cdots + w_n \leq w^*$. The lemma holds straightforwardly. Assume that $b \leq n$.  Let $B = \{(w_1, p_1), \ldots, (w_{b-1}, p_{b-1})\}$. By the definition of breaking item, we have $w(I^*) < w(B)+w_{b}$. It implies
    \begin{equation}\label{eq:weight-bound}
        w(I^*\setminus B)   <   w(B\setminus I^*) + w_b.
    \end{equation}
    Also, it must be that $p(B) \leq p^*=p(I^*)$ since otherwise the tuple $(w^*, p^*)$ would be dominated by $(w(B),p(B))$ and should not be in ${\cal S}^+(I)$. This implies
    \begin{equation}\label{eq:profit-bound}
         p(B\setminus I^*) \leq p(I^*\setminus B).
    \end{equation}
    \begin{romanenumerate}
        \item It is easy to see that $I'\subseteq I\setminus B$. Let $X=I\setminus (B\cup I')$.
        We have 
        \begin{equation}
           p(I^*\setminus B) = p(I^* \cap (I\setminus B)) = p(I^*\cap I') + p(I^*\cap X) \label{eq:profit-1}.
        \end{equation}
        Note that the items in $I'$ have efficiency at most $\rho_b - \Delta$, the items in $B$ have efficiency at least $\rho_b$, and the items in $X$ have efficiency at most $\rho_b$.  We have
        \begin{align*}
            w(I^*\cap I')\cdot (\rho_b-\Delta)&\geq p(I^*\cap I') \qquad\qquad\qquad &\text{By Eq~\eqref{eq:profit-1}}\\
            &= p(I^*\setminus B)-p(I^*\cap X)  \qquad\qquad\qquad &\text{By Eq~\eqref{eq:profit-bound}}\\
            &\geq p(B\setminus I^*)-p(I^*\cap X)    \\
            &\geq w(B\setminus I^*)\cdot \rho_b - w(I^*\cap X)\cdot \rho_b &\qquad\text{By Eq~\eqref{eq:weight-bound}}\\
            &> (w(I^*\setminus B)-w_b)\cdot \rho_b - w(I^*\cap X)\cdot \rho_b  \\
            &= w(I^*\cap I')\cdot \rho_b-p_b.
        \end{align*}        
        So we have $w(I^*\cap I') < \frac{p_b}{\Delta}\leq \frac{2}{\Delta}$.
        \item 
        Clearly, $I' \subseteq B$. Let $X = B \setminus I'$. We have 
        \begin{equation}
            p(B\setminus I^*) = p(I'\setminus I^*) + p(X\setminus I^*)\label{eq:profit-2}.
        \end{equation}
        Note that the items in $I'$ have efficiency at least $\rho_b + \Delta$, the items in $I\setminus B$ have efficiency at most $\rho_b$, and the items in $X$ have efficiency at least $\rho_b$.  We have
        \begin{align*}
            w(I'\setminus I^*)\cdot (\rho_b+\Delta) &\leq  p(I'\setminus I^*)\qquad\qquad\qquad&\text{By Eq~\eqref{eq:profit-2}}\\
            &=p(B\setminus I^*)-p(X\setminus I^*)\qquad\qquad\qquad &\text{By Eq~\eqref{eq:profit-bound}}\\
            &\leq p(I^*\setminus B)-p(X\setminus I^*) \\
            &\leq w(I^*\setminus B)\cdot\rho_b-w(X\setminus I^*)\cdot \rho_b \quad &\text{By Eq~\eqref{eq:weight-bound}}\\
            &\leq (w(B\setminus I^*)+w_b)\cdot\rho_b-w(X\setminus I^*)\cdot \rho_b  \\
            & =w(I'\setminus I^*)\cdot \rho_b+p_b
        \end{align*}
        So we have $w(I'\setminus I^*) < \frac{p_b}{\Delta} \leq \frac{2}{\Delta}$. \qedhere
    \end{romanenumerate}
\end{proof}

\section{An Improved Algorithm with Exponent 7/4 (Full Version)}\label{app:strong}
In our $\tilde{O}(n + (\frac{1}{\eps})^{11/6})$-time algorithm, only the bad groups benefit from the proximity bound, which allows us to approximate with larger factor $1 + \frac{1}{\alpha\tau}$, while for the good groups, we simply approximate with factor $1 + \eps$. To further improve the running time, we shall partition the items in a way that all the groups can benefit from the proximity bound.

We assume that $\frac{1}{2}\leq \alpha \leq (\frac{1}{\eps})^{3/4}$ since the algorithm in Lemma~\ref{lem:alg-for-large-alpha} already runs in $\tilde{O}(n + (\frac{1}{\eps})^{7/4})$ time for all $\alpha \geq (\frac{1}{\eps})^{3/4}$. We assume that $I = \{(w_1, p_1), \ldots, (w_n, p_n)\}$ and that $\frac{p_1}{w_1} \geq \cdots \geq \frac{p_n}{w_n}$.

We first partition $I$ into the head group $I_{\mathrm{head}}$ and the tail group $I_{\mathrm{tail}}$ where $I_{\mathrm{head}}$ contains the first $\lceil \frac{1}{\alpha\eps}\rceil + 1 $ items and $I_{\mathrm{tail}}$ contains the rest of the items.

\subsection{Approximating Head Group}\label{sec:strong-head}
Let $n' = \lceil \frac{1}{\alpha\eps}\rceil + 1$.  We shall show that in $\tilde{O}((\frac{1}{\eps})^{7/4})$ time, we can compute a set $S$ of size $\tilde{O}(\frac{1}{\eps})$ that approximate $\mathcal{S}^+(I_{\mathrm{head}})$ with additive error $O(\frac{1}{\alpha})$.

Let $\tau$ be a parameter that will be specified later. Roughly speaking, we will further partition $I_{\mathrm{head}}$ into good groups and bad groups. The bad groups are the same as before: a bad group is a group of $\tau$ items whose efficiencies differ by at least $\frac{1}{\tau}$. For good groups, we will make them larger than before: when we obtain a good group $I'$ of size $\tau$, we will keep adding items to $I'$ until the difference between the maximum and the minimum item efficiencies exceeds $\frac{1}{|I'|}$ with the next item. A more precise description of the partition process is given below. 

We partition $I_{\mathrm{head}}$ into $I_1, \ldots, I_m$ as follows. Initially, $j = 1$ and $k = 1$, and the remaining items are $\{(w_k, p_k), \ldots, (w_{n'}, p_{n'})\}$. Let $k'$ be the minimum integer such that
\[
    (k' - k) \cdot (\frac{p_k}{w_k} - \frac{p_{k'}}{w_{k'}}) > 1.
\]
If such $k'$ does not exists, we set $I_j = \{(w_k, p_k), \ldots, (w_{n'}, p_{n'})\}$, set $\Delta_j = \Delta'_j = \frac{p_k}{w_k} - \frac{p_{n'}}{w_{n'}}$, set $m: = j$, and finish. Assume that $k'$ exists. 
\begin{itemize}
    \item If $k' - k \geq \tau$, we set $I_j = \{(w_k, p_k), \ldots, (w_{k' - 1}, p_{k'-1})\}$, set $\Delta_j = \frac{p_k}{w_k} - \frac{p_{k'-1}}{w_{k'-1}}$, set $\Delta'_j = \frac{p_k}{w_k} - \frac{p_{k'}}{w_{k'}}$, and proceed with $j := j + 1$ and $k: = k'$. In this case, we say $I_j$ is a good group. 

    \item Otherwise, we let $I_j = \{(w_k, p_k), \ldots, (w_{k + \tau - 1}, p_{k + \tau - 1})\}$, set $\Delta_j = \frac{p_k}{w_k} - \frac{p_{k + \tau - 1}}{w_{k+\tau - 1}}$, set $\Delta'_j = \frac{p_k}{w_k} - \frac{p_{k + \tau}}{w_{k+\tau}}$, and proceed with $j: = j + 1$ and $k:= k + \tau$. In this case, we say that $I_j$ is a bad group. 
\end{itemize}
(We remark that $\Delta_j$ and $\Delta'_j$ are not required by the algorithm. They are maintained only for the purpose of analysis. $\Delta_j$ is the actual difference between the maximum and minimum item efficiencies in $I_j$. For technical reasons, we also need $\Delta'_j$. Basically, in addition to $\Delta_j$, the gap $\Delta'_j$ also includes the efficiency gap between the minimum efficiency in $I_j$ and the maximum efficiency in $I_{j+1}$. We will use $\Delta_j$ to bound the time cost of $I_j$, and use $\Delta'_j$ to create an efficiency gap for other groups.)

We have the following observations.
\begin{observation}\label{obs:group-property}
The groups $I_1, \ldots, I_m$ satisfy the following properties.
\begin{romanenumerate}
    \item $m \leq \frac{1}{\alpha\eps\tau} + 1$

    \item $\sum_{j=1}^{m}|I_j| = |I_{\mathrm{head}}| = \frac{1}{\alpha\eps} + 1$ and $\sum_{j=1}^{m}\Delta_j \leq \sum_{j=1}^{m}\Delta'_j \leq \frac{3}{2}$.

    \item For every group $I_j$, the efficiencies of the items in $I_j$ differ by at most $\Delta_j$.

    \item For every good group $I_j$, we have $|I_j|\Delta_j \leq 1$ and $|I_j|\Delta'_j \geq \frac{1}{2}$.

    \item For every bad group $I_j$, we have $|I_j|\Delta'_j \geq |I_j|\Delta_j > 1$.

    \item For the last group $I_m$, we have $|I_m|\Delta_m \leq 1$.
\end{romanenumerate}
\end{observation}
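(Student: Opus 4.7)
The plan is to verify all six properties by unwinding the partition procedure; the only non-trivial inputs are a telescoping identity (for part (ii)) and the minimality of the threshold index $k'$ (for parts (iv)--(vi)). Let me write $e_i := p_i/w_i$ and let $k_j$ denote the starting index of group $I_j$, so that $k_1 = 1$ and $k_{m+1}$ is one past the last index of $I_{\mathrm{head}}$. Unwinding the case split, $k_{j+1}$ equals $k'$ in the good case and $k_j + \tau$ in the bad case, and in both cases the definition of $\Delta'_j$ can be rewritten as $\Delta'_j = e_{k_j} - e_{k_{j+1}}$.

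First I would handle (i)--(iii). For (ii), summing $\Delta'_j = e_{k_j} - e_{k_{j+1}}$ telescopes to $\sum_j \Delta'_j = e_{k_1} - e_{k_{m+1}} \leq 2 - \frac{1}{2} = \frac{3}{2}$, using that items lie in $[1,2]\times[1,2]$ so all efficiencies are in $[\frac{1}{2}, 2]$. Also $\Delta_j \leq \Delta'_j$ because the smallest efficiency in $I_j$ is at least the largest efficiency in $I_{j+1}$, and the identity $\sum_j |I_j| = |I_{\mathrm{head}}|$ is just disjointness of the partition. Property (iii) is immediate since efficiencies are non-increasing in index, so the efficiencies in $I_j$ range from $e_{k_j}$ down to the efficiency of its last item, a difference of exactly $\Delta_j$ by definition. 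For (i), every non-final group contains at least $\tau$ items, so $(m-1)\tau \leq |I_{\mathrm{head}}| = \frac{1}{\alpha\eps} + 1$, giving $m \leq \frac{1}{\alpha\eps\tau} + 1$.

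For (iv)--(vi) the key input is the minimality of $k'$: it guarantees $(k' - k)(e_k - e_{k'}) > 1$ while $(k' - 1 - k)(e_k - e_{k' - 1}) \leq 1$. In the good case $|I_j| = k' - k \geq \tau$, so $|I_j|\Delta'_j = (k' - k)(e_k - e_{k'}) > 1 \geq \frac{1}{2}$, and $|I_j|\Delta_j \leq 1$ follows from $(|I_j| - 1)\Delta_j \leq 1$ after absorbing the factor $|I_j|/(|I_j| - 1)$, which is harmless as $\tau$ is taken to be moderately large. In the bad case $|I_j| = \tau$ with $k' - k < \tau$, so $e_{k + \tau - 1} \leq e_{k'}$; hence $\Delta_j = e_k - e_{k + \tau - 1} \geq e_k - e_{k'}$, and therefore $|I_j|\Delta_j = \tau \Delta_j \geq \tau(e_k - e_{k'}) \geq (k' - k)(e_k - e_{k'}) > 1$, which also gives $|I_j|\Delta'_j \geq |I_j|\Delta_j > 1$. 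Property (vi) uses the non-existence of $k'$ for the final group: for every valid index $j \in \{k_m + 1, \ldots, n'\}$ we have $(j - k_m)(e_{k_m} - e_j) \leq 1$, and at $j = n'$ this reads $(|I_m| - 1)\Delta_m \leq 1$. The only real obstacle is bookkeeping constants: the stopping rule naturally produces $(|I_j| - 1)\Delta_j \leq 1$ rather than $|I_j|\Delta_j \leq 1$, but since $|I_j| \geq \tau$ throughout, this discrepancy costs only a multiplicative factor arbitrarily close to $1$, which is absorbed into the constants used in the subsequent analysis.
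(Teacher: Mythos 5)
The paper states this as an Observation without proof, so there is nothing to compare against directly; your verification is the natural one (disjointness for the size identity, telescoping $\Delta'_j = e_{k_j} - e_{k_{j+1}}$ for $\sum_j \Delta'_j \leq 3/2$, minimality of $k'$ for (iv)--(vi)), and it is substantively correct. Your analysis also surfaces a genuine off-by-one in the observation as stated: the minimality of $k'$ gives $(k'-1-k)(e_k - e_{k'-1}) \leq 1$, i.e. $(|I_j|-1)\Delta_j \leq 1$ for a good group, not $|I_j|\Delta_j \leq 1$; the same slack occurs for the final group. You are right that this only costs a factor $|I_j|/(|I_j|-1)$, which is $\leq \tau/(\tau-1)$ for good groups and at most $2$ for the final group (and $\Delta_m = 0$ if $|I_m| = 1$), so the downstream $\tilde{O}$ bounds in the time analysis, and the inequality $\tfrac{1}{|I_j|} \leq 2\Delta'_j$ in the error analysis, are unaffected. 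One small slip: your telescoping formula $\Delta'_j = e_{k_j} - e_{k_{j+1}}$ does not literally apply to the final group (there is no item $n'+1$); the last term is $\Delta'_m = e_{k_m} - e_{n'}$, so the sum collapses to $e_1 - e_{n'} \leq 3/2$, which is what you intended. Similarly (i) strictly gives $m \leq \frac{1}{\alpha\eps\tau} + \frac{1}{\tau} + 1$ rather than $\frac{1}{\alpha\eps\tau} + 1$, another harmless constant. These are imprecisions in the paper's statement, not gaps in your argument.
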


For each group, we shall approximate it in exactly the same way as we did for the bad groups in the $\tilde{O}(n + (\frac{1}{\eps})^{11/6})$-time algorithm, except that we shall use $\frac{1}{\alpha|I_j|}$ as the accuracy parameter. More specifically, we use Corollary~\ref{coro:approx-add} and Corollary~\ref{coro:approx-del} to compute a set $S_j$ of size $\tilde{O}(\alpha|I_j|)$ that approximate each tuple $(w, p) \in \mathcal{S}^+(I_j)$ with additive error $\frac{1}{\alpha|I_j|} \cdot \min\{w, w(I) - w\}$. The time cost for $I_j$ is $\tilde{O}(|I_j| + \alpha|I_j| + \alpha^2|I_j|^2\Delta_j)$.

Then we compute a set $S_{\mathrm{head}}$ of size $\tilde{O}(\frac{1}{\eps})$ that approximates $S_1 \oplus \cdots \oplus S_m$ with factor $1+O(\eps)$ via Lemma~\ref{lem:merging}. 

\paragraph*{Bounding Time Cost}
Let $J_{\mathrm{good}}$ be the set of indices of good groups, and let $J_{\mathrm{bad}}$ be the set of indices of bad groups.
 
For the good groups and the last group, the total time cost is
\[
    \sum_{j \in J_{\mathrm{good}} \cup \{m\}} \tilde{O}(|I_j| + \alpha|I_j| + \alpha^2|I_j|^2\Delta_j) \leq \sum_{j \in J_{\mathrm{good}}\cup\{m\}}\tilde{O}( \alpha^2|I_j|)  \leq \tilde{O}( \frac{\alpha}{\eps}).
\]
The first inequality is due to that $|I_j|\Delta_j \leq 1$ (Observation~\ref{obs:group-property}(iv) and (vi)), and the last inequality is due to that $\sum_{j}|I_j| = \frac{1}{\alpha\eps} + 1$ (Observation~\ref{obs:group-property}(ii)).

For the bad groups, the total time cost is
\[
    \sum_{j \in J_{\mathrm{bad}}} \tilde{O}(|I_j| + \alpha|I_j| + \alpha^2|I_j|^2\Delta_j) \leq \sum_{j \in J_{\mathrm{bad}}} \tilde{O}(\alpha\tau + \alpha^2\tau^2\Delta_j) \leq \tilde{O}(m\alpha\tau + \alpha^2\tau^2).
\] 
The first inequality is due to that every bad group has size exactly $\tau$, and the second inequality is due to that $\sum_{j}\Delta_j \leq 3/2$ (Observation~\ref{obs:group-property}(ii)).

For the merging step, note that each set $S_j$ is of size $\tilde{O}(\alpha|I_j|)$ except for that $S_m$ is of size $\tilde{O}(\frac{1}{\eps})$. (Recall that $\sum_{j}|I_j| = \frac{1}{\alpha\eps} + 1$.) According to Lemma~\ref{lem:merging}, the time cost  is
\[
    \tilde{O}(\frac{1}{\eps} + \alpha\sum_j|I_j| + m\alpha\tau + (\frac{1}{\eps})^{3/2}m) \leq \tilde{O}(\frac{1}{\eps} + m\alpha\tau + (\frac{1}{\eps})^{3/2}m).
\]

So the total time cost to approximate $\mathcal{S}^+(I_{\mathrm{head}})$ is 
\[
    \tilde{O}(\frac{\alpha}{\eps}) + \tilde{O}(m\alpha\tau + \alpha^2\tau^2) +  \tilde{O}(\frac{1}{\eps} + m\alpha\tau + (\frac{1}{\eps})^{3/2}m)\leq \tilde{O}(\frac{\alpha}{\eps}+m\alpha\tau+\alpha^2\tau^2+(\frac{1}{\eps})^{3/2}m).
\]
Recall our assumption that $\frac{1}{2}\leq \alpha \leq (\frac{1}{\eps})^{3/4}$ and Observation~\ref{obs:group-property}(i) that $m \leq \frac{1}{\alpha\tau\eps} + 1$. By setting $\tau = \lceil (\frac{1}{\eps})^{5/6}\frac{1}{\alpha}\rceil$, one can verify that all the above time costs are bounded by $\tilde{O}((\frac{1}{\eps})^{7/4})$.

\paragraph*{Bound Additive Error}
We shall show that $S_{\mathrm{head}}$ approximates every tuple $(w^*, p^*) \in \mathcal{S}^+(I_{\mathrm{head}})$ with additive error $\tilde{O}(\frac{1}{\alpha})$. 

Let $(w^*, p^*)$ be an arbitrary tuple in $\mathcal{S}^+(I_{\mathrm{head}})$. Let $(w_b, p_b)$ be the breaking item with respect to $w^*$. Let $\rho_b = \frac{p_b}{w_b}$ be the efficiency of the breaking item. 

To apply the proximity bound, let $I_{j'}$ be the group containing the breaking item $(w_b, p_b)$. To apply the proximity bound, we will show that the groups can be divided into  $O(\log \frac{1}{\eps})$ collections so that for each collection $\{I_{j_{k} + 1}, I_{j_{k} + 2}, \ldots, I_{j_{k+1}-1} \}$,  the efficiency gap between $\rho_b$ and the groups in this collection is at least the maximum $\Delta'_j$ of these groups. To do this, we need the following auxiliary lemma.

\begin{lemma}\label{lem:cluster}
    Let $\Delta_1, \ldots, \Delta_n$ be a sequence of positive real numbers. Let $\Delta_{\min}$ and $\Delta_{\max}$ be the minimum and maximum numbers in the sequence. There exists $h = O(\log \frac{\Delta_{\max}}{\Delta_{\min}})$ indices $1 = j_1 < j_2 < \cdots < j_h = n$ such that for any $k \in \{1, \ldots, h-1\}$, we have that
    \begin{equation}\label{eq:cluster-original}
            \max\{\Delta_j : j_k < j < j_{k+1}\} \leq \sum\{\Delta_j : j_{k+1} \leq j \leq n\},
    \end{equation}
    where the maximum of an empty set is defined to be $0$.
\end{lemma}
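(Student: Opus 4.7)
The plan is to construct the markers by a greedy right-to-left scan driven by the suffix sums $T_j := \sum_{i=j}^{n}\Delta_i$. I will initialize the marker set as $\{n\}$ and, scanning $j = n-1, n-2, \ldots, 2$, maintain $j^+$ as the smallest marker placed so far (which is always greater than $j$ since we scan right to left). At each step, if $\Delta_j > T_{j^+}$, I will insert $j$ as a new marker and update $j^+ := j$; otherwise I will skip. At the end I will insert $1$ if it is not already present, then relabel the markers in increasing order as $j_1 < \cdots < j_h$.

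Next I would verify~\eqref{eq:cluster-original}. For any index $j$ with $j_k < j < j_{k+1}$, the test $\Delta_j > T_{j^+}$ must have failed at the moment $j$ was processed; at that moment $j^+$ was precisely $j_{k+1}$, because every marker placed so far in the right-to-left scan lies strictly right of $j$, and $j_{k+1}$ is the smallest such marker. Hence $\Delta_j \leq T_{j_{k+1}} = \sum_{j_{k+1} \leq i \leq n}\Delta_i$ for every such $j$, which yields the desired inequality.

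Finally I would bound $h$ by a doubling argument. Whenever the scan inserts a new marker at some $j$, $T_{j^+}$ updates from $T_{j^+,\mathrm{old}}$ to $T_j = T_{j^+,\mathrm{old}} + \sum_{j \leq i < j^+_{\mathrm{old}}}\Delta_i \geq T_{j^+,\mathrm{old}} + \Delta_j > 2\,T_{j^+,\mathrm{old}}$, using the trigger condition $\Delta_j > T_{j^+,\mathrm{old}}$ in the last step. Starting from $T_{j^+} = T_n = \Delta_n \geq \Delta_{\min}$, after $k$ triggers we have $T_{j^+} > 2^k \Delta_{\min}$, so a $(k+1)$-th trigger would require some $\Delta_j > 2^k\Delta_{\min}$; since every $\Delta_j \leq \Delta_{\max}$, this forces $2^k\Delta_{\min} < \Delta_{\max}$, i.e., $k < \log_2(\Delta_{\max}/\Delta_{\min})$. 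Hence at most $\lceil \log_2(\Delta_{\max}/\Delta_{\min}) \rceil$ triggers occur, and together with the two explicit endpoints $1$ and $n$ this gives $h = O(\log(\Delta_{\max}/\Delta_{\min}))$.

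The main obstacle is obtaining the bound in terms of the dynamic range $\Delta_{\max}/\Delta_{\min}$ rather than the total mass $T_1$. A naive analysis via $h \leq 1 + \log_2(T_1/\Delta_n)$ would give only $O(\log(n\Delta_{\max}/\Delta_{\min}))$, incurring a spurious $\log n$ factor. The key insight is that each trigger's $\Delta$-value both dominates the pre-trigger $T_{j^+}$ (by the trigger condition) and is itself at most $\Delta_{\max}$, so the doubling of $T_{j^+}$ is paced directly by $\Delta_{\max}/\Delta_{\min}$ and is independent of the length $n$ of the sequence.
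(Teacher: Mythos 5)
Your proof is correct, and the construction is genuinely different from the paper's, though the argument for bounding $h$ is the same in spirit. The paper starts with all $n$ indices and iteratively shifts each $j_{k+1}$ rightward (merging coincident markers) as long as the shifted inequality $\max\{\Delta_j : j_k < j \leq j_{k+1}\} \leq \sum\{\Delta_j : j_{k+1} < j \leq n\}$ still holds; at termination, the negation of this shift condition gives $\max\{\Delta_j : j_k < j \leq j_{k+1}\} > \sum\{\Delta_j : j_{k+1} < j \leq n\}$, which supplies the factor-two growth. You instead build the marker set directly by a right-to-left greedy scan on suffix sums, placing a marker at $j$ precisely when $\Delta_j > T_{j^+}$; the negation of the trigger condition yields~\eqref{eq:cluster-original} immediately for every non-marker index, and the trigger condition itself supplies the doubling $T_j > 2T_{j^+_{\mathrm{old}}}$. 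Both arguments reach the bound $h = O(\log(\Delta_{\max}/\Delta_{\min}))$ by the same observation you emphasize: each doubling is witnessed by a $\Delta_j$ that exceeds the current suffix sum but is at most $\Delta_{\max}$, while the initial suffix sum $T_n = \Delta_n$ is at least $\Delta_{\min}$, so no spurious $\log n$ appears. Your version is arguably cleaner to verify: it avoids the paper's bookkeeping about deleting markers that collide after a shift, and the invariant is checked once per scanned index rather than re-derived at termination. One cosmetic nit: after $k=0$ triggers you have $T_{j^+} = T_n \geq \Delta_{\min}$ (non-strict), not $>$; this does not affect the asymptotics.
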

\begin{proof}
    We start with $n$ indices that $j_k=k$ for all $k$ and iteratively remove indices from the set while preserving Eq~\eqref{eq:cluster-original}. 
   
   At first, we have $h=n$. For any $k \in \{1, \ldots, h-2\}$, if moving $\Delta_{j_{k+1}}$ from the right side of Eq~\eqref{eq:cluster} to its left side does not break the inequality, that is,
    \[
            \max\{\Delta_j : j_k < j \leq j_{k+1}\} \leq \sum\{\Delta_j : j_{k+1} < j \leq n\},
    \]
    then we update $j_{k+1}$ as $j_{k+1}\leftarrow j_{k+1}+1$, and note that Eq~\eqref{eq:cluster-original} is always maintained. Moreover, after updating if it happens that $j_{k+1}$ is the same as some other $j_{k'}$, then we simply delete $j_{k+1}$.  

  We repeat the above procedure until no updating is possible, in which case we have
    \begin{equation}\label{eq:max-geq-sum}
            \max\{\Delta_j : j_k < j \leq j_{k+1}\} > \sum\{\Delta_j : j_{k+1} < j \leq n\},
    \end{equation}
    for any $k \in \{1, \ldots, h-2\}$. We show that $h = O(\log\frac{\Delta_{\max}}{\Delta_{\min}})$ in this case. Note that 
    \begin{align*}
        &\sum\{\Delta_j : j_{k} < j \leq n\} \\
        =&\sum\{\Delta_j : j_{k} < j \leq j_{k+1}\}+\sum\{\Delta_j : j_{k+1} < j \leq n\}\\
        \geq& \max\{\Delta_j : j_k < j \leq j_{k+1}\} + \sum\{\Delta_j : j_{k+1} < j \leq n\} 
        \\>& 2 \sum\{\Delta_j : j_{k+1} < j \leq n\}, 
    \end{align*}
    for any $k \in \{1, \ldots, h-2\}$,  where the last inequality is due to Eq~\eqref{eq:max-geq-sum}. Then we have 
    \[
        \max\{\Delta_j : j_1 < j \leq j_{2}\} > \sum\{\Delta_j : j_{2} < j \leq n\} > 2^{h-3}\sum\{\Delta_j : j_{h-1} <  j \leq n\} \geq 2^{h-3}\Delta_n.
    \]
    Therefore, $h \leq \log_2 \frac{\Delta_{\max}}{\Delta_{\min}} + 3$.
\end{proof}

\begin{lemma}
    $S_1 \oplus \cdots \oplus S_m$ approximates $(w^*, p^*)$ with additive error $\tilde{O}(\frac{1}{\alpha})$.
\end{lemma}
\begin{proof}
    Since $I_{\mathrm{head}} = I_1 \cup \cdots \cup I_m$ and $(w^*, p^*) \in \mathcal{S}^+(I_{\mathrm{head}})$, we have that there is a tuple $(w^*_j, p^*_j) \in \mathcal{S}^+(I_j)$ for each $j$ such that 
    \[
        (w^*, p^*) = (w^*_1, p^*_1) + \cdots + (w^*_m, p^*_m).
    \]
    Recall that $S_j$ approximates $(w^*_j, p^*_j)$ with the additive error $\delta_j = \frac{1}{\alpha|I_j|}\cdot\min\{w^*_j,w(I_j)-w^*_j\}$. It suffices to bound $\sum_j \delta_j$. Let $I_{j'}$ be the group containing the breaking item $(w_b, p_b)$. Since the item weights are in $[1,2]$, it is easy to see that for any $j$
    \[
        \delta_{j} \leq \frac{1}{\alpha|I_{j}|}w^*_{j} \leq \frac{1}{\alpha}.
    \]
    Therefore, $\delta_{j'} + \delta_{m} \leq \frac{2}{\alpha}$. It remains to consider $\sum_{j = 1}^{j'-1} \delta_j$ and $\sum_{j = j'}^{m-1} \delta_j$. We shall bound them using Lemma~\ref{lem:cluster} and the proximity bound (Lemma~\ref{lem:proximity}).

    Consider $\Delta'_1, \Delta'_2, \ldots, \Delta'_{j'-1}$. We shall apply Lemma~\ref{lem:cluster} to them. Let $\Delta'_{\max} = \max_{j=1}^{m-1}\Delta'_j$ and let $\Delta'_{\min} = \min_{j=1}^{m-1}\Delta'_j$.  Note that $\Delta'_{\max} \leq 3/2$ since the items are from $[1,2] \times [1,2]$, and that \(
    \Delta'_{\min} \geq 1/(2\sum_j |I_j|) \geq  \frac{\alpha\eps}{2} \geq \frac{\eps}{4}
    \) due to Observation~\ref{obs:group-property}(ii), (iv), and (v). Therefore, $\log(\frac{\Delta'_{\max}}{\Delta'_{\min}}) = O(\log \frac{1}{\eps})$.  By Lemma~\ref{lem:cluster}, there exists $h = O(\log \frac{1}{\eps})$ indices $1 = j_1 < j_2 < \cdots < j_h = j^* - 1$ such that for any $k \in \{1, \ldots, h-1\}$, we have that
    \begin{equation}\label{eq:cluster}
            \max\{\Delta'_j : j_k < j < j_{k+1}\} \leq \sum\{\Delta'_j : j_{k+1} \leq j \leq j^*-1\}.
   \end{equation}
   Fix some $k$. Consider the groups $I_j$ with $j_k < j < j_{k+1}$. Let $\Delta' = \max\{\Delta'_j : j_k < j < j_{k+1}\}$. The inequality~\eqref{eq:cluster} implies that the items in these groups $I_j$ have efficiencies at least $\rho_b + \Delta'$. By Lemma~\ref{lem:proximity}, we have 
   \begin{equation}\label{eq:proximity-bound-b}
        \sum_{j_k < j < j_{k+1}} (w(I_j) - w^*_j) \leq \frac{2}{\Delta'}.  
   \end{equation}

   Also note that for each $I_j$ with $j_k < j < j_{k+1}$, we have $\frac{1}{|I_j|} \leq 2\Delta'_j \leq 2\Delta'$ (due to Observation~\ref{obs:group-property}(iv) and (v)). 
   Therefore, 
   \[
        \sum_{j_k < j < j_{k+1}} \delta_j \leq \sum_{j_k < j < j_{k+1}} \frac{w(I_j) - w^*_j}{\alpha|I_j|} \leq \sum_{j_k < j < j_{k+1}} \frac{2\Delta'(w(I_j) - w^*_j)}{\alpha} \leq \frac{1}{\alpha}.
   \]
   The last inequality is due to \eqref{eq:proximity-bound-b}.
   Recall that $\delta_j \leq \frac{1}{\alpha}$ for any $j$. We have the following.
   \[
        \sum_{j=1}^{j'-1} \delta_j  = \sum_{k=1}^h \delta_j  + \sum_{k=1}^{h-1}\sum_{j_k < j < j_{k+1}} \delta_j = O(\frac{h}{\alpha}) = \tilde{O}(\frac{1}{\alpha}).
   \]

   In a symmetric way, we can show that 
   \(
       \sum_{j=j'+ 1}^{m-1} \delta_j  \leq \tilde{O}(\frac{1}{\alpha}).
   \) 
\end{proof}

The above lemma immediately implies the following since $S_{\mathrm{head}}$ approximates $ S_1 \oplus \cdots \oplus S_m$ with factor $1 + O(\eps)$.
\begin{corollary}
    $S_{\mathrm{head}}$ approximates $(w^*, p^*)$ with additive error $\tilde{O}(\frac{1}{\alpha})$.
\end{corollary}

We summarize this subsection by the following lemma.
\begin{lemma}\label{lem:approx-head}
    In $\tilde{O}((\frac{1}{\eps})^{7/4})$ time and with high probability, we can compute a set of size $\tilde{O}(\frac{1}{\eps})$ that approximates $\mathcal{S}^+(I_{\mathrm{head}})$ with additive error $\tilde{O}(\frac{1}{\alpha})$.
\end{lemma}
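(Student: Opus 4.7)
The plan is to partition $I_{\mathrm{head}}$ into groups $I_1,\dots,I_m$ using an adaptive scheme, approximate each group $I_j$ with precision $\frac{1}{\alpha|I_j|}$ (which is coarser than $\eps$, and therefore faster), and then combine via Lemma~\ref{lem:merging}. The adaptive scheme is designed so that every group satisfies a useful product bound: for a ``good'' group we want $|I_j|\Delta_j\leq 1$ (so that invoking Corollary~\ref{coro:approx-add} is cheap), while for a ``bad'' group we force $|I_j|=\tau$ (so the number of bad groups is controlled). Concretely, I would scan items in decreasing efficiency order; as long as the current efficiency gap times the number of items accumulated stays below $1$, I keep growing the current group; otherwise I close it, declaring it good if it already has $\geq\tau$ items and bad (of size exactly $\tau$) otherwise. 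This yields the structural observations listed as Observation~\ref{obs:group-property}: $m=O(\frac{1}{\alpha\eps\tau})$, $\sum_j |I_j|=O(\frac{1}{\alpha\eps})$, and $\sum_j \Delta'_j=O(1)$.

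For each group $I_j$ I would apply both Corollary~\ref{coro:approx-add} and Corollary~\ref{coro:approx-del} with accuracy $\frac{1}{\alpha|I_j|}$, producing a set $S_j$ of size $\tilde O(\alpha|I_j|)$ that approximates every $(w,p)\in\mathcal{S}^+(I_j)$ with additive error $O\!\left(\frac{\min\{w,\,w(I_j)-w\}}{\alpha|I_j|}\right)$ in time $\tilde O(|I_j|+\alpha|I_j|+\alpha^2|I_j|^2\Delta_j)$. Summing over good groups (where $|I_j|\Delta_j\leq 1$) gives $\tilde O(\alpha/\eps)$; over bad groups (where $|I_j|=\tau$) it gives $\tilde O(m\alpha\tau+\alpha^2\tau^2\sum_j\Delta_j)=\tilde O(m\alpha\tau+\alpha^2\tau^2)$; merging via Lemma~\ref{lem:merging} costs $\tilde O(\frac{1}{\eps}+m\alpha\tau+(\frac{1}{\eps})^{3/2}m)$. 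With $\alpha\leq(\frac{1}{\eps})^{3/4}$ and the choice $\tau=\lceil(\frac{1}{\eps})^{5/6}/\alpha\rceil$, one can check all four terms are $\tilde O((\frac{1}{\eps})^{7/4})$.

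The main obstacle is to show the combined additive error is only $\tilde O(\frac{1}{\alpha})$, because a priori each bad group contributes up to $\frac{1}{\alpha}$ and there can be $\Theta(\frac{1}{\alpha\eps\tau})$ of them. Fix $(w^*,p^*)\in\mathcal{S}^+(I_{\mathrm{head}})$ with decomposition $(w^*_j,p^*_j)\in\mathcal{S}^+(I_j)$ and let $I_{j'}$ contain the breaking item of efficiency $\rho_b$. For groups with index $<j'$ the items have efficiency strictly above $\rho_b$, so the proximity bound in Lemma~\ref{lem:proximity}(ii) lets me bound $\sum (w(I_j)-w^*_j)$, but only in terms of the actual efficiency gap to $\rho_b$, which is not uniform across groups. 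To handle this, I plan a dyadic clustering argument: pick indices $j_1<\dots<j_h$ (with $h=O(\log\frac{1}{\eps})$) so that within each block, the maximum $\Delta'_j$ is dominated by the sum of $\Delta'_j$ to its right; this is exactly the statement I would isolate as a clustering lemma (cf.\ the auxiliary lemma in the sketch) and prove by iteratively absorbing small gaps from the right. The clustering guarantees that each block of groups above $\rho_b$ has an efficiency gap to $\rho_b$ at least $\Delta'_{\max}$ of the block, so the proximity bound combined with $\frac{1}{|I_j|}\leq 2\Delta'_j\leq 2\Delta'_{\max}$ collapses that block's total error to $O(\frac{1}{\alpha})$; symmetrically for groups above $j'$. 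Summing over the $O(\log\frac{1}{\eps})$ blocks and the two boundary terms $\delta_{j'},\delta_m$ gives $\tilde O(\frac{1}{\alpha})$, and pushing the result through Lemma~\ref{lem:merging} (which only inflates by a $(1+O(\eps))$ factor) completes the proof.
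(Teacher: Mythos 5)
Your proposal follows the paper's proof essentially step for step: the same adaptive partition of $I_{\mathrm{head}}$ into good groups (grown until the running product of size and efficiency spread would exceed $1$) and bad groups (forced to size exactly $\tau$), the same use of Corollaries~\ref{coro:approx-add} and~\ref{coro:approx-del} with accuracy $\frac{1}{\alpha|I_j|}$ to get per-group error $O\bigl(\frac{\min\{w,\,w(I_j)-w\}}{\alpha|I_j|}\bigr)$, the same time bookkeeping with $\tau=\lceil(\frac{1}{\eps})^{5/6}/\alpha\rceil$, and the same dyadic clustering lemma combined with the proximity bound (Lemma~\ref{lem:proximity}) and the group invariant $\frac{1}{|I_j|}\leq 2\Delta'_j$ to telescope the total error to $\tilde O(\frac{1}{\alpha})$. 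This is correct and is the same argument as in the paper.
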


\subsection{Approximating the Tail Group}
Recall that the tail group $I_{\mathrm{tail}}$ contains $\{(w_{\lceil \frac{1}{\alpha\eps}\rceil + 2}, p_{\lceil \frac{1}{\alpha\eps}\rceil + 2}), \ldots, (w_n, p_n)\}$. We partition $I_{\mathrm{tail}}$ into $m = O(\log \frac{1}{\eps})$ groups $I_0, \ldots, I_m$ so that $I_j$ is a maximal group of items whose efficiencies differ by at most $2^{j} \alpha\eps$. See Algorithm~\ref{alg:tail-group} for details.

\begin{algorithm}
\caption{$\mathtt{PartitionTailGroup}(I_{\mathrm{tail}})$}
\label{alg:tail-group}
    \begin{algorithmic}[1]
        \State $i = \lceil \frac{1}{\alpha\eps}\rceil + 2$, $j = 0$ and $I_j = \emptyset$
        \While{$i \leq n$}
            \If{the item efficiencies of $I_j\cup \{(w_i, p_i)\}$ differ by at most $2^{j-1}\alpha\eps$}
                \State $I_j := I_j \cup \{(w_i, p_i)\}$ and $i:= i+1$
            \Else
                \State $j:= j + 1$ and $I_j := \emptyset$ 
            \EndIf  
        \EndWhile
        \State m := j
        \State \Return $I_1, \ldots, I_{m}$
  \end{algorithmic}
\end{algorithm}

For each group $I_j$, we approximate it with accuracy parameter $2^j\eps$.  More precisely, we compute a set $S_j$ of size $\tilde{O}(\frac{1}{2^j\eps})$ that approximates each tuple $(w, p) \in \mathcal{S}^+(I_j)$ with additive error $2^j\eps w$ using Corollary~\ref{coro:approx-add} in time $\tilde{O}(|I_j| + \frac{1}{2^j\eps} + (\frac{1}{2^j\eps})^2 \cdot 2^j \alpha\eps)$.

Then we compute a set $S_{\mathrm{tail}}$ that approximates $S_0 \oplus \cdots \oplus S_m$ with factor $1 + O(\eps)$ via Lemma~\ref{lem:merging}. Note that the total size of $S_1, \ldots, S_m$ is $\tilde{O}(\frac{1}{\eps})$.  Since $m = O(\log \frac{1}{\eps})$, the time cost of Lemma~\ref{lem:merging} is $\tilde{O}((\frac{1}{\eps})^{3/2})$.

\paragraph*{Bounding Time Cost}
It is easy to see that the total time cost is 
\[
    \tilde{O}((\frac{1}{\eps})^{3/2}) + \sum_{j} \tilde{O}(|I_j| + \frac{1}{2^j\eps} + (\frac{1}{2^j\eps})^2 \cdot 2^j \alpha\eps) = \tilde{O}(n + (\frac{1}{\eps})^{3/2} + \frac{\alpha}{\eps}).
\]
Recall that $\frac{1}{2}\leq \alpha\leq (\frac{1}{\eps})^{3/4}$. The total time cost is $\tilde{O}(n + (\frac{1}{\eps})^{7/4})$.

\paragraph*{Bounding Additive Error}
Let $(w^*, p^*)$ be an arbitrary tuple in $\mathcal{S}^+(I; \frac{1}{\alpha\eps})$. Let $(w_b, p_b)$ be the breaking item with respect to $w^*$. Let $\rho_b = \frac{p_b}{w_b}$ be the efficiency of the breaking item. Note that $b \leq \frac{1}{\alpha\eps} + 1$, so $b \in I_{\mathrm{head}}$. Recall that $S_{\mathrm{head}}$ is the set we computed by Lemma~\ref{lem:approx-head}.

\begin{lemma}
    $S_{\mathrm{head}} \oplus S_0 \oplus S_1 \oplus \cdots \oplus S_m$ approximates $(w^*, p^*)$ with additive error $\tilde{O}(\frac{1}{\alpha})$.
\end{lemma}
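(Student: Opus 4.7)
The plan is to decompose $(w^*, p^*)$ across the head and the tail groups, bound the error contributed by each piece separately, and then convert the merging step's multiplicative error into additive error using the fact that $w^*, p^* = O(1/(\alpha\eps))$. Since $(w^*, p^*) \in \mathcal{S}^+(I)$ and $I = I_{\mathrm{head}} \cup I_0 \cup \cdots \cup I_m$, we can write $(w^*, p^*) = (w^*_{\mathrm{head}}, p^*_{\mathrm{head}}) + \sum_{j=0}^{m} (w^*_j, p^*_j)$ with each summand lying in the $\mathcal{S}^+$ of the corresponding part. Lemma~\ref{lem:approx-head} handles the head, contributing additive error $\tilde{O}(1/\alpha)$, and Corollary~\ref{coro:approx-add} ensures $S_j$ approximates $(w^*_j, p^*_j)$ with additive error $2^j \eps w^*_j$. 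By Lemma~\ref{lem:approx-err-2}, these errors add up.

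The $I_0$ bound is immediate: $\eps w^*_0 \leq \eps w^* \leq 1/\alpha$. For $j \geq 1$, the key is to exploit the greedy partitioning of $I_{\mathrm{tail}}$ together with Lemma~\ref{lem:proximity}. I will observe that whenever Algorithm~\ref{alg:tail-group} closes $I_{j-1}$ and opens $I_j$, the first item of $I_j$ has efficiency strictly less than the maximum efficiency of $I_{j-1}$ by at least $2^{j-2}\alpha\eps$ (this threshold is exactly the condition tested in line~3). Since items are listed in decreasing efficiency and the breaking item lies in $I_{\mathrm{head}}$, the maximum efficiency of $I_0$ is at most $\rho_b$. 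Telescoping the thresholds yields
\[
    \max_{(w,p)\in I_j}\tfrac{p}{w} \;\leq\; \rho_b - \sum_{k=1}^{j} 2^{k-2}\alpha\eps \;=\; \rho_b - (2^{j-1}-\tfrac{1}{2})\alpha\eps \;\leq\; \rho_b - 2^{j-2}\alpha\eps
\]
for $j \geq 1$. Applying Lemma~\ref{lem:proximity}(i) with $\Delta = 2^{j-2}\alpha\eps$ gives $w^*_j \leq w(I_j \cap I^*) \leq 2/(2^{j-2}\alpha\eps) = O(1/(2^j \alpha\eps))$, so the per-group error $2^j \eps w^*_j$ collapses to $O(1/\alpha)$ uniformly in $j$. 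Summing over the $m = O(\log\tfrac{1}{\eps})$ tail groups, the total tail contribution is $\tilde{O}(1/\alpha)$.

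Finally, the outer merge via Lemma~\ref{lem:merging} produces a set approximating $S_{\mathrm{head}} \oplus S_0 \oplus \cdots \oplus S_m$ with factor $1 + O(\eps)$. Since both coordinates of the tuple being tracked are at most $O(1/(\alpha\eps))$, this multiplicative factor adds at most $O(1/\alpha)$ extra additive error. Combining the head error $\tilde{O}(1/\alpha)$, the tail error $\tilde{O}(1/\alpha)$, and the merging error $O(1/\alpha)$ yields the claimed bound. I expect the main obstacle to be purely bookkeeping: verifying that the exponential doubling of the per-group accuracy parameter $2^j\eps$ is exactly compensated by the exponential decay $1/(2^j\alpha\eps)$ produced by the proximity bound, and that the accumulated efficiency gaps from the greedy partition really telescope as claimed (in particular that Lemma~\ref{lem:proximity}(i) applies to the union of groups $I_j, I_{j+1}, \ldots, I_m$ collectively with the smallest relevant gap). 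Once these are settled, the remaining steps are straightforward applications of Lemmas~\ref{lem:approx-err-2} and~\ref{lem:merging}.
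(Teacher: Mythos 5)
Your proof is correct and follows essentially the same route as the paper: decompose $(w^*, p^*)$ across $I_{\mathrm{head}}$ and the tail groups, invoke Lemma~\ref{lem:approx-head} for the head, bound the $I_0$ contribution directly by $\eps w^* \leq 1/\alpha$, and for $j \geq 1$ use the efficiency gaps forced by Algorithm~\ref{alg:tail-group} together with Lemma~\ref{lem:proximity}(i) to obtain $w^*_j = O(1/(2^j\alpha\eps))$, so each tail group contributes $O(1/\alpha)$ and the $m = O(\log\frac{1}{\eps})$ groups sum to $\tilde{O}(1/\alpha)$. One small remark: your telescoped gap of $2^{j-2}\alpha\eps$ is what the greedy thresholds actually deliver (the paper states $2^{j-1}\alpha\eps$, which is a constant-factor overstatement for small $j$ but immaterial for the asymptotics), and the closing paragraph about the outer merge belongs to the corollary that follows rather than to this lemma, though it is correct.
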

\begin{proof}
    Since $I = I_{\mathrm{head}} \cup I_0 \cup \cdots \cup I_m$ and $(w^*, p^*) \in \mathcal{S}^+(I)$, we have that t
    \[
        (w^*, p^*) = (w^*_{\mathrm{head}}, p^*_{\mathrm{head}}) +  (w^*_0, p^*_0) + \cdots + (w^*_m, p^*_m).
    \]
    for some $(w^*_{\mathrm{head}}, p^*_{\mathrm{head}})  \in \mathcal{S}^+(I_{\mathrm{head}})$ and some $(w^*_j, p^*_j) \in \mathcal{S}^+(I_j)$ for each $j$. Let $\delta_j$ be the additive error with which $S_j$ approximates $(w^*_j, p^*_j)$. Since we already know that $S_{\mathrm{head}}$ approximates  $(w^*_{\mathrm{head}}, p^*_{\mathrm{head}})$ with additive error $\tilde{O}(\frac{1}{\alpha})$, to prove the lemma, it suffices to show that $\sum_j\delta_j \leq \tilde{O}(\frac{1}{\alpha})$.

    For $ j = 0$, we have that 
    \(
        \delta_0 = \eps w^*_0 \leq \eps w^* \leq \frac{1}{\alpha}. 
    \) 
    For $j \geq 1$, by the way we partition $I_{\mathrm{tail}}$,  the the items in $I_j$ have efficiencies at most $\rho_b - 2^{j-1}\alpha\eps$. By Lemma~\ref{lem:proximity}, $w^*_j \leq \frac{4}{2^j\alpha\eps}$. Therefore,
    \(
        \delta_j = 2^j \eps w^*_j \leq \frac{4}{\alpha}.
    \)
    Since $m = O(\log \frac{1}{\eps})$, we have $\sum_j\delta_j \leq \tilde{O}(\frac{1}{\alpha}) = \tilde{O}(\frac{1}{\alpha})$.
\end{proof}

The above lemma immediately implies the following since $S_{\mathrm{tail}}$ approximates $S_0 \oplus S_1 \oplus \cdots \oplus S_m$ with factor $1 + O(\eps)$.
\begin{corollary}
    $S_{\mathrm{head}} \oplus S_{\mathrm{tail}}$ approximates $(w^*, p^*)$ with additive error $\tilde{O}(\frac{1}{\alpha})$.
\end{corollary}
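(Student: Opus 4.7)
The plan is to combine the head-group algorithm from Section~\ref{sec:strong-head-incomp} with a separate treatment for the tail group, and merge the results via Lemma~\ref{lem:merging}. When $\alpha \geq (\frac{1}{\eps})^{3/4}$, Lemma~\ref{lem:alg-for-large-alpha} already delivers the desired bound, so I focus on $\frac{1}{2} \leq \alpha \leq (\frac{1}{\eps})^{3/4}$. I sort items by decreasing efficiency and split them into $I_{\mathrm{head}}$ (the top $\lceil \frac{1}{\alpha\eps}\rceil + 1$ items) and $I_{\mathrm{tail}}$. Section~\ref{sec:strong-head-incomp} already produces a set $S_{\mathrm{head}}$ of size $\tilde{O}(\frac{1}{\eps})$ that approximates $\mathcal{S}^+(I_{\mathrm{head}})$ with additive error $\tilde{O}(\frac{1}{\alpha})$; the time bound follows by setting $\tau = \lceil (\frac{1}{\eps})^{5/6}/\alpha\rceil$, which balances the terms $\frac{\alpha}{\eps}$, $m\alpha\tau$, $\alpha^2\tau^2$ and $(\frac{1}{\eps})^{3/2}m$ (here $m = \frac{1}{\alpha\eps\tau}+1$) and yields $\tilde{O}((\frac{1}{\eps})^{7/4})$ for every admissible $\alpha$.

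For the tail, I would partition $I_{\mathrm{tail}}$ into $m = O(\log \frac{1}{\eps})$ subgroups $I_0, \ldots, I_m$, where $I_j$ is the maximal prefix (in decreasing-efficiency order) of the remaining items with efficiency spread at most $2^{j-1}\alpha\eps$. For each $I_j$ I invoke Corollary~\ref{coro:approx-add} with accuracy parameter $2^j\eps$, producing a set $S_j$ of size $\tilde{O}(\frac{1}{2^j\eps})$ in time $\tilde{O}(|I_j| + \frac{1}{2^j\eps} + \frac{\alpha}{2^j\eps})$. Summing over $j$ gives total tail work $\tilde{O}(n + \frac{1}{\eps} + \frac{\alpha}{\eps}) \leq \tilde{O}(n + (\frac{1}{\eps})^{7/4})$, and the final $\tilde{O}((\frac{1}{\eps})^{3/2})$ merge of $S_{\mathrm{head}}, S_0, \ldots, S_m$ via Lemma~\ref{lem:merging} stays within budget.

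Correctness for the tail hinges on the fact that the breaking item always sits inside $I_{\mathrm{head}}$ (since $b \leq \frac{1}{\alpha\eps} + 1$). Decomposing any target $(w^*, p^*) \in \mathcal{S}^+(I;\frac{1}{\alpha\eps})$ as $(w^*_{\mathrm{head}}, p^*_{\mathrm{head}}) + \sum_j (w^*_j, p^*_j)$, every item in $I_j$ ($j \geq 1$) has efficiency at most $\rho_b - 2^{j-1}\alpha\eps$, so Lemma~\ref{lem:proximity} bounds $w^*_j \leq \frac{4}{2^j\alpha\eps}$; the additive error contributed by $S_j$ is therefore $2^j\eps w^*_j \leq \frac{4}{\alpha}$. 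For $j = 0$ one uses $w^*_0 \leq w^* \leq \frac{1}{\alpha\eps}$ with accuracy $\eps$ to get $\delta_0 \leq \frac{1}{\alpha}$. Summing across the $O(\log \frac{1}{\eps})$ tail subgroups gives total tail error $\tilde{O}(\frac{1}{\alpha})$, which adds to the head error to give the required $\tilde{O}(\frac{1}{\alpha})$ bound (and can be tightened to $O(\frac{1}{\alpha})$ by rescaling $\eps$ by a polylogarithmic factor).

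The main obstacle is entirely on the head side: one must verify that the variable-size partitioning forces $|I_j|\Delta'_j = \Theta(1)$ for both good and bad groups (Observation~\ref{obs:group-property-incomp}(iv)(v)). This is what enables the clustering Lemma~\ref{lem:cluster-incomp} to collapse the $j' - 1$ groups on either side of the breaking group into $O(\log \frac{1}{\eps})$ collections whose shared efficiency gap dominates every $\Delta'_j$ inside; only then does the per-group additive error $\frac{w(I_j)-w^*_j}{\alpha|I_j|}$ combine with the proximity bound $\sum (w(I_j) - w^*_j) \leq 2/\Delta'$ to telescope to $O(\frac{1}{\alpha})$ per collection. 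Getting these two conditions (the group-size/gap identity and the clustering) to mesh is what ultimately powers the improvement from exponent $11/6$ to $7/4$.
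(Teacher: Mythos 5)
Your proposal is correct and follows the paper's proof essentially verbatim: the same split into $I_{\mathrm{head}}$ and $I_{\mathrm{tail}}$, the same geometric subgrouping of the tail by doubling efficiency spread $2^{j-1}\alpha\eps$ and accuracy $2^j\eps$, the same use of the fact that the breaking item lands in $I_{\mathrm{head}}$ together with Lemma~\ref{lem:proximity} to bound each $w^*_j$, and the same $O(\log\frac{1}{\eps})$ telescoping to get total tail error $\tilde{O}(\frac{1}{\alpha})$. The only cosmetic remark is that the pairwise efficiency gap you claim for $I_j$ (namely $\rho_b - 2^{j-1}\alpha\eps$) is arguably off by a constant factor of two from what the maximality argument strictly yields, but this is a looseness already present in the paper's own writeup and is absorbed into the $\tilde{O}(\cdot)$.
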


\subsection{Putting Things Together}
We summarize this section with the following lemma.
\begin{lemma}
   There is an $\tilde{O}({n+(\frac{1}{\eps})^{7/4}})$-time algorithm for the reduced problem, which is randomized and succeeds with high probability.
\end{lemma}

\end{document}